\newcommand{\agents}{\mathcal{N}} 
\newcommand{\ninput}{\mathcal{I}} 
\newcommand{\uu}{\bigcup_{U\in \mathbf{U^t}}U}
\newcommand{\upu}{\biguplus_{U\in \mathbf{U^t}}U}
\newcommand{\agents}{\mathcal{N}} 
\newcommand{\ninput}{\mathcal{I}} 
\newcommand{\uu}{\bigcup_{U\in \mathbf{U^t}}U}
\newcommand{\upu}{\biguplus_{U\in \mathbf{U^t}}U}
\newtheorem{definition}{Definition}
\newtheorem{theorem}{Theorem}
\newtheorem{lemma}{Lemma}
\newtheorem{claim}{Claim}
\newtheorem{corollary}{Corollary}
\newtheorem{example}{Example}
\newtheorem{proposition}{Proposition}
\newtheorem{remark}{Remark}
\theoremstyle{remark}
\newcounter{protocol}
\newenvironment{protocol}[1][H]
  {
   \let\c@algocf\c@protocol
   \begin{algorithm}[#1]%
  }{\end{algorithm}}
\newcounter{strategytemplate}
\newenvironment{strategytemplate}[1][H]
  {
   \let\c@algocf\c@strategytemplate
   \begin{algorithm}[#1]%
  }{\end{algorithm}}
\title{Long-term Data Sharing under Exclusivity Attacks}
\author{Yotam Gafni 
\lowercase{yotam.gafni@campus.technion.ac.il} \\ Moshe Tennenholtz \lowercase{moshet@ie.technion.ac.il} \\
\\ 
Technion - Israel Institute of Technology}
\begin{abstract}
    The quality of learning generally improves with the scale and diversity of data. Companies and institutions can therefore benefit from building models over shared data. Many cloud and blockchain platforms, as well as government initiatives, are interested in providing this type of service. 

These cooperative efforts face a challenge, which we call ``exclusivity attacks''. A firm can share distorted data, so that it learns the best model fit, but is also able to mislead others. We study protocols for long-term interactions and their vulnerability to these attacks, in particular for regression and clustering tasks. We conclude that the choice of protocol, as well as the number of Sybil identities an attacker may control, is material to vulnerability. 
\end{abstract}
\begin{document}

\begin{titlepage}

\maketitle

\end{titlepage}

\section{The Work in Context}
\subsection{Data Sharing among Firms}
In today's data-oriented economy \cite{data_economy}, countless applications are based on the ability to extract statistically significant models out of acquired user data. Still, firms are hesitant to share information with other firms \cite{data_sharing_economy,data_sharing_europe}, as data is viewed as a resource that must be protected. This is in tension with the paradigm of the \emph{Wisdom of the crowds} \cite{wisdom_crowds}, which emphasizes the added predictive value of aggregating multiple data sources. As early as 2001, the authors in \cite{banko2001scaling} (note also a similar approach in \cite{halevy2009unreasonable}) concluded that 
\begin{displayquote}``... a logical next step for the research community would be
to direct efforts towards increasing the size of annotated training collections, while
deemphasizing the focus on comparing different learning techniques trained only on small training corpora.''
\end{displayquote}

 Two popular frameworks to address issues arising in settings where data is shared are multi-party computation \cite{mpc} and differential privacy \cite{dwork2008differential}. However,
these paradigms are focused on addressing the issue of privacy (whether of the individual user or the firm's data bank), but do not answer the basic conundrum of sharing data with \textit{competing} firms: On one hand, cooperation enables the firm to enrich its own models, but at the same time enable other firms to do so as well. A firm is thus tempted to game the mechanism to allow itself better inference than other firms. We call this behavior \emph{exclusivity attacks}. Even if supplying intentionally false information could be a legal risk, the nature of data processing (rich with outliers, spam accounts, natural biases), allows firms to have ``reasonable justification'' to alter the data they share with others. 

In this work, we present a model of collaborative information sharing between firms. The goal of every firm is first to have the best available model given the aggregate data. As a secondary goal, every firm wishes the others to have a downgraded version of the model. An appropriate framework to address this objective is the \emph{Non-cooperative computation} (NCC) framework, introduced in \cite{ncc}. The framework was considered with respect to one-shot data aggregation tasks in \cite{ncc4ml}. 

\subsection{Open and Long-term Environments
}
In our work, we present a general communication protocol for collaborative data sharing among firms, that can be associated with any specific machine learning or data aggregation algorithm. The protocol possesses an \emph{online} nature, when any participating firm may send (additional) data points at any time. This is in contrast with previous NCC literature, which focuses on \emph{one-shot} data-sharing procedures. The long-term setting yields two, somewhat contradicting, attributes:
\begin{itemize}
    \item A firm may send multiple subsequent inputs to the protocol, using it to learn how the model's parameters change after each contribution. For an attacker, this allows better inference of the true model's parameters, without revealing its true data points, as we demonstrate in Example~\ref{ex:attack_ncc} below. 
    
    \item A firm is not only interested in attaining the current correct parameters of the model, but also has a future interest to be able to attain correct answers, given that more data is later added by itself and its competitors. This has a chilling effect on attacks, as even a successful attack in the one-shot case could result in data corruption. For example, a possible short-term attack could be for a firm to send its true data, attain the correct parameters, and then send additional garbage data. Since we do not have built-in protection against such actions in the mechanism (for reasons further explained in Remark~\ref{remark:false_name}), this would result in data corruption for the other firms. Nevertheless, if the firm itself is interested in attaining meaningful information from the mechanism in the future, it would be disincentivized to do so.
\end{itemize}

We now give an example demonstrating the first point. In \cite{ncc4ml}, the authors consider the problem of collaboratively calculating the average of data points. They show in their Theorem 4.6 and Theorem 4.7 that whether the number of different data points is known is essential to the truthfulness of the mechanism. When the number of data points is unknown, the denominator of the average term is unknown, and it is impossible for an attacker to know with certainty how to attain the true average from the average the mechanism reports given a false input of the attacker. 
We now show that in a model where it is possible to send multiple requests (in fact, two), it is possible to report false information and attain the correct average: 

\begin{example}
\label{ex:attack_ncc}
Consider a firm with some data points $D_I$ with a total sum $S_I$ and number of points $N_I$. Other firms have data points $D_O$ with a total sum $S_O$ and number of points $N_O$. Assume $S_I \neq 0, N_I = 2$.\footnote{These assumptions are not \textit{required} for the attack scheme to succeed, but make for a simpler demonstration.} Instead of reporting $D_I$, the firm first reports $D' = [0]$, receives an average $a_1$, then reports $D'' = [0]$ and receives the updated average $a_2$. The average that others, following the mechanism as given, attain is $\frac{S_O}{N_O + 2}$, the true average is $\frac{S_I + S_O}{N_I + N_O}$, and they are different by our assumption on $S_I, N_I$. 
The firm is thus successful in misleading others. Moreover, the firm can infer the true average. Given $$a_1 = \frac{S_O}{N_O + 1} \neq  \frac{S_O}{N_O + 2} = a_2,$$
the firm\footnote{The only case where $a_1 = a_2$ is when $S_O = 0$. In this case, upon having $a_1 = 0$, we can choose $D'' = [1]$, and a similar argument shows that we can infer the true average.} can calculate
$$ N_O = \frac{a_1 - 2a_2}{a_2 - a_1}, S_O = a_1 (N_O + 1),$$
and thus have all the information required to calculate the true average. 
\end{example}

\begin{remark}
\label{remark:false_name}
{\it Why should we not consider simply forbidding multiple subsequent updates by a firm?}  As noted in \cite{yokoo,vcg_sybil,duplication_cheating}, modern internet-based environments lack clear identities and allow for multiple inputs by the same agent using multiple identities. A common distinction in blockchain networks separates \emph{public} (``permissionless'') and \emph{private} (``permissioned'') networks \cite{permissioned_vs_permissionless}, where \emph{public} networks allow open access for everyone, while \emph{private} networks require additional identification for participation. In both cases, however, it is impossible to totally prevent false-name manipulation, where a firm uses multiple identities to send her requests. Therefore, any ``simple'' solution of the problem demonstrated in Example~\ref{ex:attack_ncc} is impossible. The mechanism does not know whether multiple subsequent updates are really sent by different firms, or they are in fact ``sock puppets'' of a single firm. The mechanism therefore can not adjust appropriately (e.g., drop any request after the first one). In this work, we assume a firm may control up to $\ell$ identities, and so in the formal model, we allow up to $\ell$ subsequent updates of a single firm. The false identities are not part of the formal model: They instead are encapsulated by giving firms this ability to update $\ell$ times subsequently. 
\end{remark}

\subsection{Our Results}

\begin{itemize}

    \item 
    
We define two long-term data-sharing protocols (the continuous and periodic communication protocols) for data sharing among firms. The models differ in how communication is structured temporally (whether the agents can communicate at any time, or are asked for their inputs at given times). Each model can be coupled with any choice of algorithm to aggregate the data shared by the agents. 
    
    \item We give a condition for NCC-vulnerability of an algorithm (given the communication model) in Definition~\ref{def:l_ncc}. A successful NCC attack is one that (i) Can mislead the other agents, and (ii) Maintains the attacker's ability to infer the true algorithm output. We give a stronger condition of NCC-vulnerability* that can moreover (i*) Mislead the other agents in every possible scenario. 
    As a simple example of using these definitions, we show in Appendix~\ref{app:max_illustration} that finding the maximum over agent reports is NCC-vulnerable but not NCC-vulnerable*.  
    
        \item For the $k$-center problem, we show that it is vulnerable under continuous communication but not vulnerable under periodic communication. Moreover, we show that it is not vulnerable* even in continuous communication, using a notion of \emph{explicitly-lying} attacks. 
    
    \item For Multiple Linear Regression, we show that it is vulnerable* under continuous communication but not vulnerable under periodic communication. The vulnerability* in continuous communication depends on the number of identities an attacker can control: We show a form of attack so that an attacker with $d + 2$ identities (where $d$ is the dimension of the feature space) is guaranteed to have an attack, and an attacker with less than $d - 2$ identities can not attack. 
    
\end{itemize}

The vulnerability(*) results for the continuous communication protocol are summarized in Table~\ref{tab:vulnerability_results}. Both algorithms are not vulnerable(*) under the periodic communication protocol.

  \begin{table}
    \setlength{\extrarowheight}{2pt}
    \begin{tabular}{cc|c|c|}
      & \multicolumn{1}{c}{} & \multicolumn{1}{c}{Vulnerable}  & \multicolumn{1}{c}{Vulnerable*} \\\cline{3-4}
      \multirow{2}*{}  & $d$-LinearRegression & Yes, for any $\ell \geq 1$ & \shortstack{$\begin{cases} \text{Yes} & \ell \geq d+2 \\
      \text{No} & \ell \leq d - 2
      \end{cases}$}\\\cline{3-4}
      & $k$-Center & Yes, for any $\ell \geq 1$ & No \\\cline{3-4}
    \end{tabular}
    \caption{A summary of vulnerability(*) results in the continuous communication protocol.}
    \label{tab:vulnerability_results}
  \end{table}

We overview related work in Appendix~\ref{app:related_work}. 

\section{Model and Vulnerability Notions}

We consider a system where agents receive factual updates containing data points or states of the world. The agents apply their reporting strategy, performing ledger updates. Upon any ledger update, the ledger distributes the latest aggregate parameter calculation using $\rho$, the computation algorithm. 

Formally, let $[n] = \{1,\ldots, n\}$ be a set of $n$ agents. An update $U$ is of some type, depending on the computational problem. An update with metadata $\hat{U} = <j, t, U>$ complements an update $U$ with an agent $j\in \agents$, and a type $t \in \{Factual, Ledger\}$, where ``Factual'' updates represent a factual state of nature observed by an agent, and ``Ledger'' updates are what the agent shares with the ledger, which may differ from what she factually observes. We note that the ledger (which for simplicity we assume is a centralized third party) does not make the data public, but only shares the algorithm's updated outputs according to the protocol's rules. The computation algorithm $\rho(\mathbf{U^t})$ is an algorithm that receives a series of updates $\mathbf{U^t} = (U_1,...,U_t)$ of any length $t$ and outputs a result. In the continuous communication protocol, we have that algorithm outputs are shared with all agents upon every ledger update. 

In this section and Sections~\ref{sec:kcenter}-\ref{sec:lr} we focus on the continuous communication protocol.     
The continuous communication protocol simulates a system where agents may push updates at any time, initiated by them and not by the system manager. We model this by allowing them to respond to any change in the state of the system, including responding to their own ledger updates. The only limit to an agent endlessly sending updates to the ledger is that we restrict it to update at most $\ell$ times subsequently. The continuous communication protocol is a messaging protocol between nature, the agents, and the ledger. A particular protocol run is instantiated with nature-input $\ninput$, which is a series of some length $|\ninput|$ with each element being of the form $<j,U>$, which is a tuple comprised of agent $j\in \agents$ and an update $U$.

\begin{figure}
    \hspace{-1cm}\includegraphics[scale=0.4]{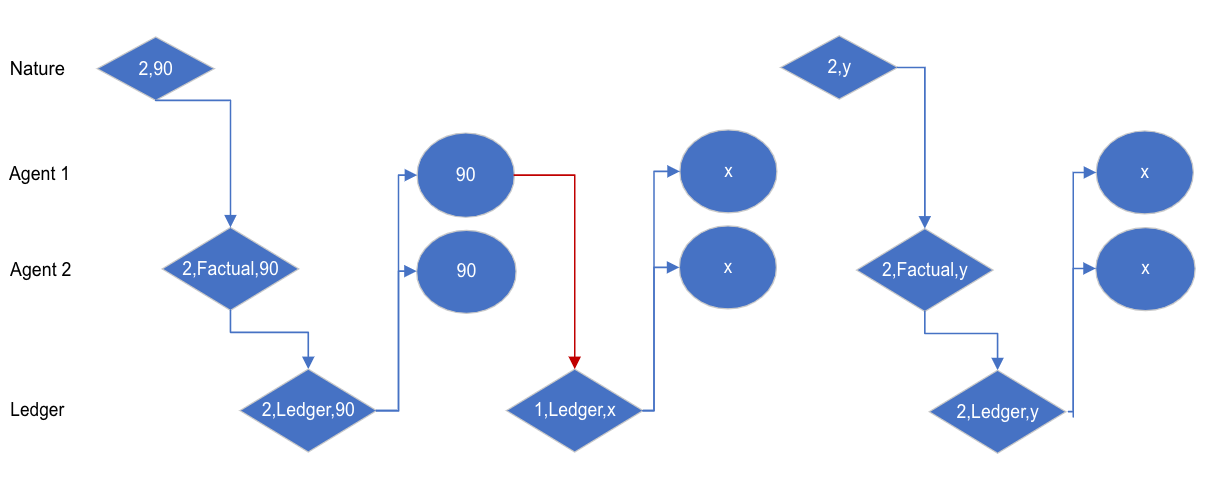}
    \caption{A continuous protocol run for $\ninput = (<2,90>, <2,y>)$ with some $90 < y < x$ and the algorithm $\rho = \max$, as explained in the proof of Proposition~\ref{prop:max_vulnerable*}. An agent's observed history are all the nodes in her line, or nodes that have an outgoing edge from a node in her line.}
    \label{fig:example_continuous}
\end{figure}

\begin{protocol}
\SetNoFillComment
\SetAlgoLined
\DontPrintSemicolon
\KwIn{Nature-input $\ninput$, Parameter $\ell$ the maximum number of subsequent updates by an agent}
\KwOut{Full Messaging History}

\For{ factual message $<j,U_{fact}>$ in $\ninput$} {

    Nature sends a message to $j$ with $<j,Factual,U_{fact}>$;
    
    activeMessage $\leftarrow$ True; \tcp{There is an active message}

    \While {activeMessage = True } {
        \tcc{As long as some agent is responding}

        activeMessage $\leftarrow$ False;
        
        \For{ agent $i:= 1$ to $n$} {
            \If{agent $i$ wishes to send a ledger update $U_{ledg}$ and last $\ell$ updates are not all of type $<i, Ledger, U>$\footnotemark } {
            
                $i$ sends a message to Ledger with $<i, Ledger, U_{ledg}>$;
                
                Ledger sends a message to all with $\rho$'s algorithm output over all the past ledger updates;      
                
                activeMessage $\leftarrow$ True;
                }
        }
    } 
}

 \caption{The continuous communication protocol}
 \label{proto:continuous}
\end{protocol}
\footnotetext{We can perhaps question whether agent $i$ respects the condition that not all of the last $\ell$ updates are not all of type $<i, Ledger, U>$ for some $U$. If she does not, she may send a message regardless of this constraint. But since nature can choose not to accept/respond to it, we simplify the protocol by assuming the agents self-enforce the constraint.}

For the analysis, we extract some useful variables from the run of the protocol that will be used in subsequent examples and proofs. 

Let a run $R$ be all the messages sent in the system during the application of the continuous communication protocol with nature-input $\ninput$ (where messages sent to 'all' appear once, and the messages appear in their order of sending).

Let $L_j(R), F_j(R)$ be the sub-sequences of all ledger, factual updates respectively in $R$ of agent $j$ (if the index $j$ is omitted, then simply all such updates, regardless of an agent). Let $O_j(R)$ (``observed history'' of $j$) be all the messages in $R$ received or sent by $j$ during the run of the nature protocol: These are all factual updates of $j$, ledger updates by $j$, and algorithm outputs shared by the ledger. Let $O_j(R)_{i_1:i_2}$ be the the elements of $O_j(R)$ starting with index $i_1$ and until (and including) index $i_2$. 

An update strategy for $j$ is a mapping $s_j$ from an observed history $O_j(R)$ to a ledger update $U_{ledg}$ by agent $j$.   
The truthful update strategy $truth_j$ is the following:
If the last element in $O_j(R)$ is of type $<j, Factual, U>$, update with $<j, Ledger, U>$. Otherwise, do not update. 

A full run of the protocol with nature input $\mathcal{I}$ and strategies $s_1,...,s_n$ is the run after completion of the nature protocol where nature uses input $\mathcal{I}$ and each agent $j$ responds using strategy $s_j$. Since we're interested in the effect of one agent deviating from truthfulness, we say that we run nature-input $\mathcal{I}$ with strategy $s_j$, where $j$ is the deviating agent, and it is assumed that all other agents $i\neq j$ play $truth_i$. We denote the resulting run $R_{\ninput, s_j}$. 

We can now define an NCC-attack on the nature protocol given algorithm $\rho$ and updates restriction $\ell$. 

\begin{definition}
\label{def:l_ncc}
An algorithm $\rho$ is $\ell-NCC-vulnerable$ if there exists an agent $j$ and update strategy $s_j$ such that:

i) There is a full run $R_{\ninput, s_j}$ of the protocol with some nature-input $\mathcal{I}$ and the strategy $s_j$ such that its last algorithm output is different from the last algorithm output in $R_{\ninput, truth_j}$. 

ii) For any two nature-inputs $\ninput, \ninput'$ such that 
the observed histories satisfy 
$$O_j(R_{\ninput,truth_j}) \neq  O_j(R_{\ninput',truth_j}) \implies O_j(R_{\ninput,s_j}) \neq O_j(R_{\ninput',s_j}).$$ 
\end{definition}

In words, to consider strategy $s_j$ as a successful attack, the first condition requires that there is a case where the rest of the agents other than $j$ observe something different than the factual truth. Notice that we strictly require that the \textit{other agents} (and not only the ledger) observe a different outcome: If $s_j$ updates with a ledger update that does not match its factual update, but this does not affect future algorithm outputs, we do not consider it an attack (It is a ``Tree that falls in a forest unheard''). The second condition requires that the attacker is always able to infer (at least in theory) the last true algorithm output. Under NCC utilities (which we omit formally defining, 
and work instead directly with the logical formulation, similar to Definition~1 in \cite{ncc}), failure to infer the true algorithm output under strategy $s_j$ makes it worse than $truth_j$, no matter how much the agent manages to mislead others (which is only its secondary goal). 

We remark without formal discussion that being $\ell$-NCC-vulnerable is enough to show that truthfulness is not an ex-post Nash equilibrium if the agents were to play a non-cooperative game using strategies $s_j$ with NCC utilities. However, it does not suffice to show that truthfulness is not a Bayesian-Nash equilibrium, as the cases where the deviation from truthfulness $s_j$ satisfies condition $(i)$ may be of measure 0. We give a stronger definition we call $\ell$-NCC-vulnerable*, that would guarantee the inexistence of the truthful Bayesian-Nash equilibrium for any possible probability measure, by amending condition $(i)$ to hold for \emph{all} cases:

\begin{definition}
An algorithm $\rho$ is $\ell-NCC-vulnerable*$ if there exists an agent $j$ and update strategy $s_j$ with both condition $(ii)$ of Definition~\ref{def:l_ncc}, and:

i*) For every full run $R_{\ninput, s_j}$ of the protocol with some nature-input $\mathcal{I}$, the last algorithm output is different than the last algorithm output in $R_{\ninput, truth_j}$. 

\end{definition}

As long as there is at least one full run of the protocol, it is clear that being $\ell$-NCC-vulnerable* implies being $\ell$-NCC-vulnerable. Similarly being $\ell$-NCC-vulnerable(*) implies being $(\ell+1)$-NCC-vulnerable(*) (i.e., the implication works for both the vulnerable and vulnerable* cases). 

In Appendix~\ref{app:max_illustration}, we illustrate the difference between the two definitions, as well as simple proof techniques, using a simple algorithm. 

\section{\texorpdfstring{$k$}---Center and \texorpdfstring{$k$}---Median in the Continuous Communication Protocol}
\label{sec:kcenter}

In this section, we analyze the performance of prominent clustering algorithms in terms of our vulnerability(*) definitions. Together with Section~\ref{sec:lr} this demonstrates the applicability of the approach for both unsupervised and supervised learning algorithms.

\begin{definition}
k-center: Each agent's update $U$ is a set of data points, where each data point is of the form $x \in \mathcal{R}^d$. A possible output of the algorithm is some $k$ centers that are among the data points $x_1,\ldots, x_k \in \uu$. Let $\eta_i = \{x | \arg \min_{j=1}^k ||x-x_j||_p = i\}_{x \in \uu}$ for $1 \leq i \leq k$ and some $L_p$ norm function $||\textbf{v}||_p = \sqrt[p]{\sum_{i=1}^d v_i^p}$ with $p\geq 1$. In words, $\eta_i$ is the set of all agents that have $x_i$ as their closest point among $x_1,\ldots, x_k$. Let $C(x_1,\ldots, x_k) = \max_{i=1}^k \max_{x\in \eta_i} ||x-x_i||$ be the cost function. In words, the cost of a possible algorithm output $x_1, \ldots, x_k$ is the maximum distance between a point and a center it is attributed to. We then have

\begin{equation}
\label{eq:kcenter_cost_function}
    \rho_{k-center}(\mathbf{U^t}) = \arg \min_{x_1,\ldots, x_k \in \uu} C(x_1,\ldots, x_k),
    \end{equation}
\end{definition}
i.e., the $k$ centers are the $k$ points among the reported points that minimize the cost if chosen as centers. Ties (both when determining $\eta_i$ and the final $k$ centers) are broken in favor of the candidate with the smallest norm\footnote{If this is not enough to determine, complement it with some arbitrary rule, e.g. over the radian coordinates of the points: This does not matter for the argument.}. 

\subsection{Sneak Attacks and Vulnerability}

In this subsection, we present a template for a class of attacks. We then show it is successful in showing the vulnerability of the protocol for $k$-center. 

\begin{figure}
    \includegraphics[scale=0.6]{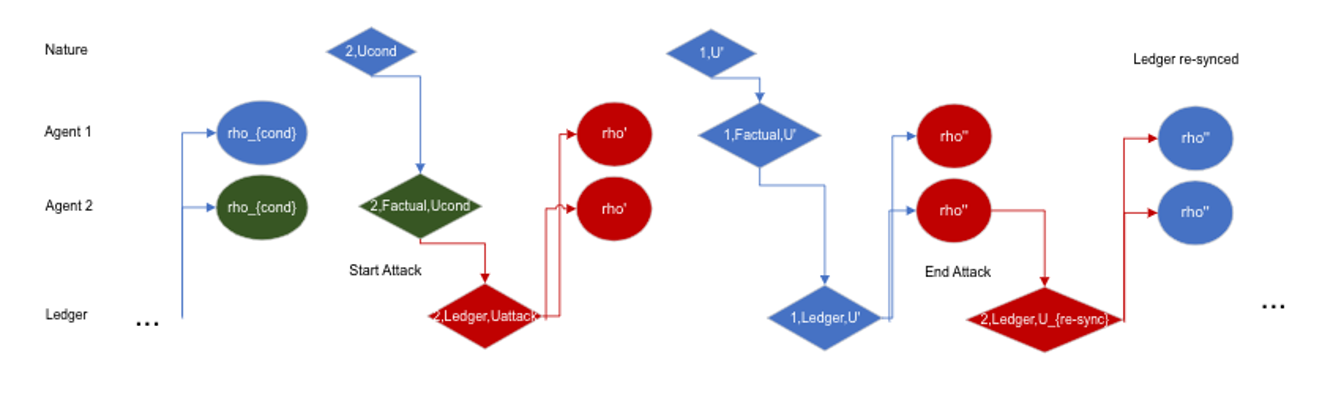}
    \caption{A general template for the sneak attack. Until the special conditions are met, and after the re-sync is done, the strategy behaves as $truth_j$.}
    \label{fig:sneak_attack_template}
\end{figure}

\begin{strategytemplate}
\SetAlgoLined
\SetNoFillComment
\DontPrintSemicolon
\KwIn{Observed history $O_j$. Parameters $U_{cond}, \rho_{cond}, U_{attack}, U_{re-sync}$}
\KwOut{A ledger update $<j,Ledger,U>$}

\tcc{Condition to start attack}

 \If{The last element in $O_j$ is $<j, Factual, U_{cond}>$, the last algorithm output in $O_j$ is $\rho_{cond}$, and the condition to start attack was not invoked before}{
    Return $<j, Ledger, U_{attack}>$
 }
 
 \tcc{Condition to end attack}
 \ElseIf{The condition to start attack was invoked, after that some agent (either $j$ or another) received a factual update, but the condition to end attack was not yet invoked} {
 
 Let $U$ be the last update in $O_j$ if it is a factual update for $j$, or $\emptyset$ otherwise. 
 
 Return $<j, Ledger, U \cup U_{re-sync}>$
 
 }
 
 \tcc{If the special conditions do not hold, act as $truth_j$}
 \ElseIf{Last update $U$ in $O_j$ is factual for $j$} {
    Return $<j, Ledger, U>$
 }
 
 \caption{A template for a \emph{sneak attack}}
 \label{alg:sneak_attack_template}
\end{strategytemplate}

Notice that when we defined strategies, we required them to be memory-less, i.e., only observe $O_j$ and not their own past behavior (which by itself anyway only depends on the past observed histories, which are contained in $O_j$). However, the conditions in Strategy Template~\ref{alg:sneak_attack_template} require for example to check whether the attack was initiated before. The technical lemma below shows that this is possible to infer from $O_j$. 

\begin{restatable}[]{lemma}{sneakAttackImp}
\label{sneakAttackImp}
If $U_{cond} \neq U_{attack}$, the sneak attack is well defined, i.e., the conditions to start and end attack can be implemented using only $O_j$.
\end{restatable}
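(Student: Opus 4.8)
The plan is to certify that the two self-referential guards of Strategy Template~\ref{alg:sneak_attack_template} --- ``the condition to start attack was (not) invoked before'' and ``the condition to end attack was not yet invoked'' --- are determined by the observed history $O_j$, so that $s_j$ is a genuine memory-less map from $O_j$ to a ledger update. Concretely, I would define two predicates $\mathrm{Started}(O_j)$ and $\mathrm{Ended}(O_j)$ on observed histories, argue that each coincides with the intended ``was-invoked'' meaning along the run, and observe that each is computable from the messages of $O_j$. Since the remaining ingredients of the guards --- the last factual update to $j$, the last algorithm output, and whether the last update is factual for $j$ --- are plainly readable from $O_j$, this suffices to prove the strategy well defined.

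For the start, the key point is a syntactic distinguishability enabled by the hypothesis $U_{cond}\neq U_{attack}$. Before the start branch ever fires, $j$'s re-sync branch is inactive (it is guarded by the start having been invoked), so every ledger update $j$ has emitted is a $truth_j$ response, i.e.\ a copy of $j$'s preceding factual content. Hence a ledger message $<j,Ledger,U_{attack}>$ whose preceding $j$-factual is $<j,Factual,U_{cond}>$ (with last algorithm output $\rho_{cond}$) cannot be truthful, because truthfulness would emit $U_{cond}\neq U_{attack}$; it can only be the start branch. I would therefore let $\mathrm{Started}(O_j)$ assert the presence of exactly this pattern. The inequality $U_{cond}\neq U_{attack}$ is precisely what makes $U_{attack}$ a reliable marker: if $U_{cond}=U_{attack}$ the attack response and the truthful report of $U_{cond}$ become indistinguishable and the pattern is no longer diagnostic, which is the degenerate case the lemma excludes.

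For the end, the guard fires once the start has fired and ``some agent received a factual update''. Factual updates to $j$ are themselves elements of $O_j$, so these are visible directly. Factual updates to \emph{other} agents are not in $O_j$, but their effect is: by the definition of $truth_i$, a truthful agent pushes a ledger update only in immediate response to its own factual update, and each ledger update triggers exactly one fresh algorithm output that the ledger broadcasts to all and hence records in $O_j$. Moreover one checks that between the start and the end $j$ is idle --- after emitting $U_{attack}$ its guards force ``no update'' until a new factual event occurs --- so any algorithm output in $O_j$ strictly after the start that does \emph{not} immediately follow one of $j$'s own ledger messages must have been triggered by another agent answering a factual update. I would thus let $\mathrm{Ended}(O_j)$ assert that $\mathrm{Started}(O_j)$ holds and that, after the start-marker, $O_j$ already contains either a factual update to $j$ or such an ``unattributed'' algorithm output.

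Assembling these, a single left-to-right scan of $O_j$ evaluates $\mathrm{Started}$ and $\mathrm{Ended}$ and then selects the branch of Template~\ref{alg:sneak_attack_template}, so the start and end conditions are implementable from $O_j$ alone. I expect the main obstacle to lie in the end condition rather than the start: one must argue that the indirect detection of another agent's factual update through ``unattributed'' algorithm outputs is both sound and complete, which requires the two structural facts that truthful agents act only on their own factual updates and that $j$ stays idle between the start and the end. The start condition, by contrast, reduces to the clean distinguishability statement resting entirely on $U_{cond}\neq U_{attack}$.
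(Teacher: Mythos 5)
Your proposal is correct and takes essentially the same approach as the paper: you detect the start of the attack via the factual-$U_{cond}$/ledger-$U_{attack}$ mismatch pattern (sound exactly because all of $j$'s pre-attack ledger updates are truthful copies of its factual updates, which is where $U_{cond}\neq U_{attack}$ is used), and you detect the end trigger via either a directly visible factual update to $j$ or an algorithm output not attributable to a ledger update of $j$ --- both are precisely the signatures the paper scans for. The one imprecision is that your $\mathrm{Ended}(O_j)$, as literally defined, becomes true as soon as the trigger appears rather than once the end branch has actually fired; the paper closes this by requiring its signature to be the \emph{last} elements of $O_j$ with no subsequent ledger update by $j$ (such an update being exactly the re-sync message the end branch would have emitted), and the same check is immediate within your left-to-right scan since, as you note, $j$ is idle between start and end.
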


We defer the proof details to Appendix~\ref{app:st_lemmas}. 

Strategy Template~\ref{alg:sneak_attack_template} presents the general \emph{sneak attack} form, which requires four parameters: $U_{cond}$, $\rho_{cond}$, the factual update and last algorithm output that serve as a signal for the attacker to send $U_{attack}$ - the deviation from truth performs, and $U_{re-sync}$, the update returning the ledger to a synced state. 

Two properties are important for a successful sneak attack. First, the attacker must know with certainty the algorithm output given the counter-factual that it would have sent $U_{cond}$ (as $truth_j$ would have), rather than $U_{attack}$. Second, after sending both $U_{attack}$ and $U_{re-sync}$, it should hold that all future algorithm outputs are the same as if sending only $U_{cond}$. For example, if updates are sets of data points and the algorithm outputs some calculation over their union (later formally defined in Definition~\ref{def:set_algs} as a \emph{set algorithm}), this holds if
$U_{cond} = U_{attack} \cup U_{re-sync}$.  

We formalize this intuition in the following lemma:

\begin{restatable}[]{lemma}{sneakAttackFormal}
\label{lem:sneak_attack_formal}
A sneak attack where $U_{attack} \subseteq U_{cond}, U_{re-sync} = U_{cond} \setminus U_{attack}$, and that moreover can infer the last algorithm output in $R_{\ninput, truth_j}$ after starting the attack and sending $U_{attack}$, satisfies condition $(ii)$. 
\end{restatable}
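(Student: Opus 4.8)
The plan is to establish condition $(ii)$ of Definition~\ref{def:l_ncc} in its contrapositive form: from the observed history $O_j(R_{\ninput,s_j})$ produced under the sneak attack I will reconstruct the truthful observed history $O_j(R_{\ninput,truth_j})$ by an explicit map $g$ that does not itself depend on $\ninput$. Once such a $g$ is exhibited, the claim is immediate, since $O_j(R_{\ninput,s_j}) = O_j(R_{\ninput',s_j})$ then forces $O_j(R_{\ninput,truth_j}) = g(O_j(R_{\ninput,s_j})) = g(O_j(R_{\ninput',s_j})) = O_j(R_{\ninput',truth_j})$, which is exactly the statement of $(ii)$.

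First I would use Lemma~\ref{sneakAttackImp} to locate, purely from $O_j(R_{\ninput,s_j})$, whether the attack was ever triggered and, if so, the factual update $U_{cond}$ that triggered it (it appears verbatim as the last factual element of $j$ before the deviation, and $U_{cond}\neq U_{attack}$ lets us recognize the attack update). If the attack was never triggered, $s_j$ coincides with $truth_j$ along the whole run and $g$ is the identity. Otherwise, $s_j$ behaves as $truth_j$ up to the triggering step, so the two runs share a common prefix through $<j,Factual,U_{cond}>$, which $g$ copies unchanged. The only element of the truthful run in this segment that is not directly visible in the attacking run is the ledger output produced right after $j$ reports $U_{cond}$; here I invoke the standing hypothesis that, after starting the attack and sending $U_{attack}$, the attacker can infer the last algorithm output of $R_{\ninput,truth_j}$ --- this is precisely that missing value, so $g$ fills it in together with the synthetic element $<j,Ledger,U_{cond}>$ whose payload is already recovered.

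The heart of the argument is the realignment after re-sync. Because $U_{attack}\subseteq U_{cond}$ and $U_{re-sync}=U_{cond}\setminus U_{attack}$, the cumulative contribution of $j$ after it has sent both $U_{attack}$ and $U_{re-sync}$ equals $U_{cond}$, so for a set algorithm (Definition~\ref{def:set_algs}) the ledger's data state --- and hence every subsequent algorithm output --- coincides with that of the truthful run. Since the truthful agents ignore algorithm outputs and respond only to their own factual updates, their ledger behaviour is identical in both runs; the only structural discrepancy is the one extra ledger update $<j,Ledger,U_{re-sync}>$ (and its induced broadcast) that the attack inserts at the start of the next round. Thus $g$ deletes exactly this inserted pair (detectable again via Lemma~\ref{sneakAttackImp}) and copies the remainder verbatim: every truthful output from the re-sync round onward is observed directly in the attacking run, and every later factual or ledger update of $j$ is already common to the two runs. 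I would also treat the degenerate case where the attack is triggered in the final round of $\ninput$ so that no re-sync occurs; there the dangling truthful tail is reconstructed from the inferred trigger output and the known $U_{cond}$ exactly as above.

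I expect the main obstacle to be the bookkeeping around the re-sync round: verifying that, regardless of the order in which $j$'s re-sync and the concurrent factual update of some other agent are processed inside the while-loop, the final broadcast of that round equals the truthful one and the loop terminates after a single pass, so that the pair to be deleted is unambiguous and the histories realign cleanly. The remaining subtlety is making the reliance on the set-algorithm structure explicit, since it is what guarantees that restoring $j$'s cumulative contribution to $U_{cond}$ restores all future outputs.
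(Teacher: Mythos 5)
Your proposal is correct in substance and takes essentially the same route as the paper: the paper also proves condition $(ii)$ by showing, element by element, that the truthful observed history is determined by the attacked one --- common prefix up to the trigger, the inference hypothesis supplying the one broadcast of the trigger round that is not directly visible, and the identity $U_{attack} \cup U_{re\text{-}sync} = U_{cond}$ forcing all post-re-sync ledger states (hence all later outputs) to coincide. Your explicit reconstruction map $g$ is a cleaner packaging of exactly this argument, in the spirit of the paper's own inference-function device (Lemma~\ref{lem:inference_function}); the paper instead runs the comparison directly on two nature-inputs with equal attacked histories.

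There is one concrete slip in your re-sync surgery. When the re-sync is triggered by another agent's factual update, the attacked history around that round reads: a broadcast over $P \cup U_{attack} \cup A$, then $<j,Ledger,U_{re\text{-}sync}>$, then a broadcast over $P \cup U_{cond} \cup A$ (here $P$ is the union of pre-attack ledger points and $A$ the other agent's truthful update). The truthful history contains only the single broadcast over $P \cup U_{cond} \cup A$. So the pair $g$ must delete is the \emph{pre}-re-sync broadcast together with the re-sync ledger update, keeping the broadcast \emph{induced by} the re-sync; deleting ``$<j,Ledger,U_{re\text{-}sync}>$ and its induced broadcast,'' as you wrote, would leave the wrong broadcast (computed over a ledger still missing $U_{re\text{-}sync}$) sitting in the truthful round's slot, and $g$ would not reproduce $O_j(R_{\ninput,truth_j})$. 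Note also that when the re-sync is instead bundled with $j$'s own factual update, the template sends $U \cup U_{re\text{-}sync}$, so nothing is inserted at all and $g$ must perform a content replacement ($U \cup U_{re\text{-}sync} \mapsto U$) rather than a deletion --- amusingly, that is the only case the paper treats explicitly, while yours is the one it defers. Both fixes are routine, and your closing paragraph essentially anticipates them, but as literally stated the surgery step is the wrong one.
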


The proof of the lemma is given in Appendix~\ref{app:st_lemmas}.

We now give a sneak attack for $k$-center in $\mathcal{R}$. The example can be extended to a general dimension $\mathcal{R}^d$ by setting the remaining coordinates in the attack parameters to $0$.

\begin{example}
\label{ex:kcenter_omission_attack}
\emph{$k$-center with $k\geq 3$ is $1$-NCC-vulnerable using a sneak attack}: Use Strategy Template~\ref{alg:sneak_attack_template} with $U_{cond} = \{1,2,10,\ldots, 10^{k-1}\}, U_{attack} = \{1\}, U_{re-sync} = U_{cond} \setminus U_{attack}, \rho_{cond} = \{-\epsilon,0,\frac{\epsilon}{k-2}, \frac{\epsilon}{k-3}, \ldots, \epsilon\}$, with say $\epsilon = \frac{1}{1000}$. 

Condition $(i)$ is satisfied for nature-input $\ninput = (<1, , \rho_{cond}>, <2, U_{cond}>)$. The run with $truth_2$ yields algorithm outputs $\rho_{cond},\{1,10,\ldots, 10^{k-1}\}$ but the run with $s_2$ yields $\rho_{cond},\rho_{cond} \setminus \{\frac{\epsilon}{k-2}\} \cup \{1\}$. 

As for condition $(ii)$: Let $\ninput$ be some nature-input, and let $t$ be the index of the element of $\ninput$ after which the algorithm outputs $\rho_{cond}$ (i.e., $t+1$ is $<2,U_{cond}>$, upon where agent $2$ starts the attack). Let $M_+ = \max_{x\in \uu }$, $M_- = \min_{x\in \uu}$. Assume for simplicity that $|M_+| \geq |M_-|$, otherwise a symmetric argument to the one we lay out follows. Given the algorithm output  $\rho_{cond}$, we know that $\epsilon$ is the closest center to $M_+$. Thus, $M_+ - \epsilon \leq C(-\epsilon, 0, \epsilon) \leq C(M_-, 0, M_+) \leq \frac{M_+}{2}$. The last inequality is due to that every point is either in $[M_-,0]$ or $[0,M_+]$, and so its distance from the closest center is at most $\frac{1}{2}\max \{|M_-|, |M_+|\} = \frac{M_+}{2}$. We thus have that $M_+ \leq 2\epsilon$ (as illustrated in Figure~\ref{fig:kcenter_omission}). 

\begin{figure}
    \hspace{-2cm}
    \begin{subfigure}[b]{0.48\textwidth}
        \centering
        \includegraphics[scale=0.3]{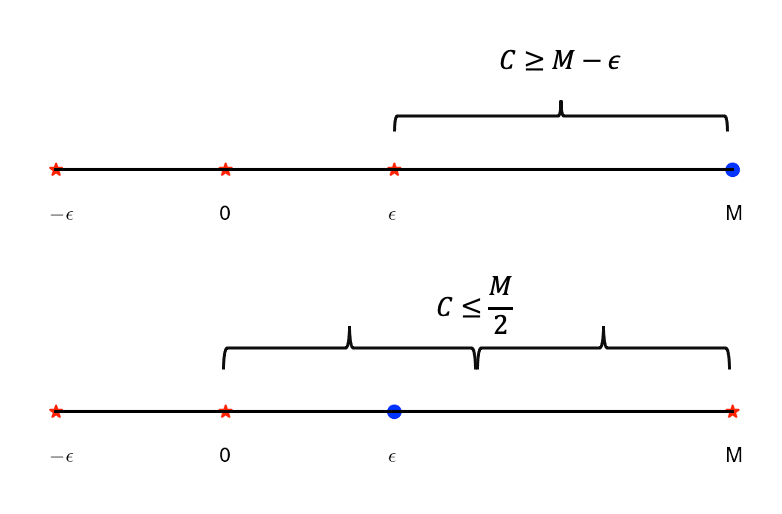}
        \caption{}
         \label{subfig:ko1}
     \end{subfigure}
    \begin{subfigure}[b]{0.48\textwidth}
    \includegraphics[scale=0.3]{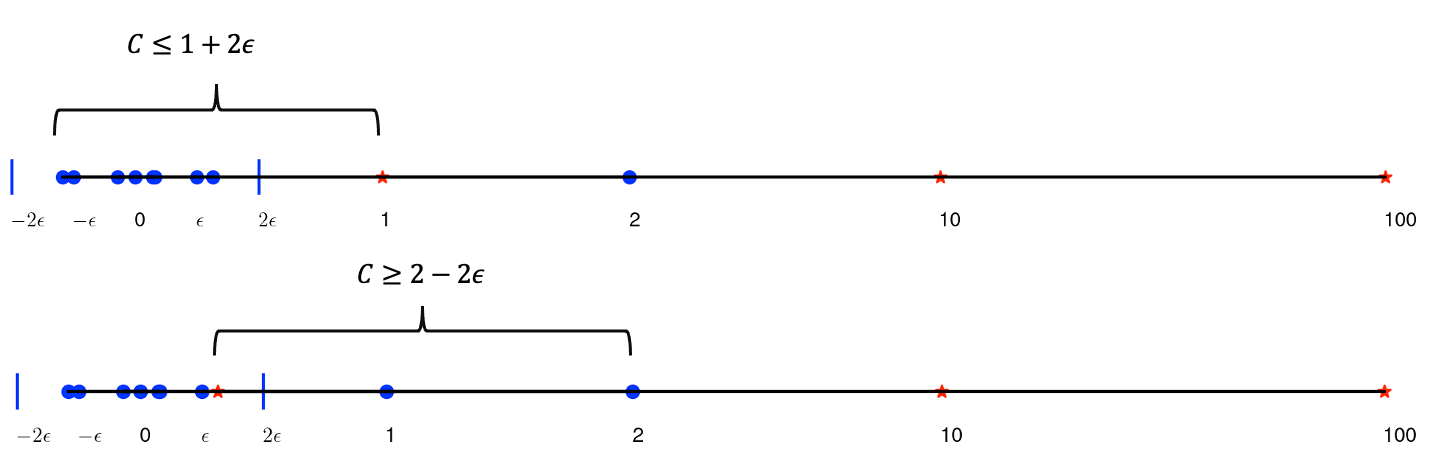}
        \caption{}
         \label{subfig:ko2}
     \end{subfigure}

    \caption{An illustration of Example~\ref{ex:kcenter_omission_attack} with $k=3$. In (a), the fact that $-\epsilon, 0, \epsilon$ is the algorithm output is enough to show that all input elements are within $[-2\epsilon, 2\epsilon]$, otherwise $M$ would be a better choice for a center. In (b), which is displayed on a logarithmic scale, we see that given that all prior input elements are within $[-2\epsilon, 2\epsilon]$, and with additional elements $1,2,10,100$, the algorithm must output $\{1,10,100\}$ as centers for a small enough $\epsilon$.}
    \label{fig:kcenter_omission}
\end{figure}

Therefore, under $truth_2$, after agent $2$ sends $U_{cond} = \{1,2,10,\ldots, 10^{k-1}\}$, we have $C(\{1,10,\ldots, 10^{k-1}\}) \leq 1 + 2\epsilon$. For any other choice of $k$ centers $x_1,\ldots,x_k$ (that may partially intersect), we have $C(\{x_1,\ldots ,x_k\}) \geq 2 - 2\epsilon$ (as illustrated in Figure~\ref{fig:kcenter_omission}). Choosing $\epsilon < \frac{1}{4}$ we have that the algorithm output must be $\{1,10,\ldots, 10^{k-1}\}$. This shows that agent $2$ can infer with certainty the algorithm output under $truth_2$. We thus satisfy the conditions of Lemma~\ref{lem:sneak_attack_formal}, which guarantees condition $(ii)$ is satisfied. 

\end{example}

\subsection{\texorpdfstring{$k$}---Center Vulnerability*}

In the previous subsection, we have shown that $k$-Center is vulnerable. However, in this subsection, we show it is not vulnerable*.

We note that a significant property of the $k$-center algorithm is that its output is a subset of its input. 

\begin{definition}
\label{def:set_algs}
A \emph{set} algorithm is an algorithm where each update is a set, and the algorithm is defined over the union of all updates $S = \uu$.

A \emph{multi-set} algorithm is an algorithm where each update is a multi-set of data points, and the algorithm is defined over the sum of all updates $S = \upu$.

A \emph{set-choice} algorithm is a set algorithm that satisfies
$\rho(S) \subseteq S$, i.e., the algorithm output is a subset of the input. 
\end{definition}

Many common algorithms such as max, min, or median, are set-choice algorithms, as well as $k$-center and $k$-median that we discuss. 

We notice a property of the sneak attack in Example~\ref{ex:kcenter_omission_attack}: $U_{attack}$ deducts points that exist in the factual update $U_{cond}$ and does not include them in the ledger update. In fact, throughout the run of $s_2$ the union of ledger updates by agent $2$ is a subset of the union of its factual updates. 
This leads us to develop the following distinction. 
We partition the space of attack strategies (all attacks, not necessarily just sneak attacks) into two types, explicitly-lying attacks and omission attacks. This distinction has importance beyond the technical discussion, because of legal and regulatory issues. Strategic firms may be willing to omit data (which can be excused as operational issues, data cleaning, etc), but not to fabricate data. 

Formally, for set and multi-set algorithms, we can partition all non-truthful strategies in the following way:

\begin{definition}

An \emph{explicitly lying} strategy $s_j$ is a strategy that for some nature-input $\mathcal{I}$ has a point $x \in L_j(R_{\ninput, s_j}), x \not \in F_j(R_{\ninput, s_j})$, i.e., the strategy  sends a ledger update with a point that does not exist in the union of all factual updates for that agent.

An \emph{omission} strategy $s_j$ is a 
a strategy that satisfies condition $(i)$ (i.e., misleads others)
that is not explicitly-lying.

For an omission strategy it must hold that for every run the agent past ledger updates are a subset of its factual updates, i.e., $L_j(R_{\ninput, s_j}) \subseteq F_j(R_{\ninput, s_j})$. 
\end{definition}

We now use the notion of explicitly-lying strategy to prove that $k$-center and $k$-median are not vulnerable*. For this we need one more technical notion:

\begin{definition}
\label{def:forceable_winners}
A set-choice algorithm has \emph{forceable winners} if for any set $S$ and a point $x\in S$, there is a set $\bar{S}$ with $x\not \in \bar{S}$ so that $x \in \rho(S \cup \bar{S})$.

In words, if the point $x$ is part of the algorithm input, it is always possible to send an update to force the point $x$ to be an output of the algorithm. It is interesting to compare this requirement with axioms of multi-winner social choice functions, as detailed e.g. in \cite{elkind2017properties}.

\end{definition}

\begin{theorem}
A set-choice algorithm with forceable winners is not $\ell$-NCC-vulnerable* for any $\ell$.
\end{theorem}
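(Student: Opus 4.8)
The plan is to show directly that no agent $j$ and strategy $s_j$ can satisfy both condition $(i^{*})$ and condition $(ii)$, which gives non-vulnerability* for every $\ell$. Throughout I use that $\rho$ is a set-choice algorithm: at any moment the output depends only on the \emph{union} $S$ of all ledger updates so far, and $\rho(S)\subseteq S$. I would split on whether $s_j$ ever performs a \emph{fresh injection}, i.e.\ whether some ledger update of $j$ adds to the current union a point $x\notin F_j$ that was not already present. \textbf{Case 1 (no fresh injection; in particular every omission strategy, since then $L_j\subseteq F_j$).} Take a nature-input $\ninput$ that delivers every factual update to agents other than $j$, so $F_j(R_{\ninput,s_j})=\emptyset$. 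As $j$ has no factual point to legitimately contribute and performs no fresh injection, every ledger update of $j$ leaves the union unchanged; hence the union, and therefore the last algorithm output, coincides with the one under $truth_j$ (where $j$ is silent). This violates $(i^{*})$, which requires the last output to differ in \emph{every} run, so no such $s_j$ is vulnerable*.

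By the contrapositive of Case~1, any $s_j$ that can satisfy $(i^{*})$ must make a fresh injection, so fix a run $\ninput_0$, a time $\tau$, and a point $x\notin F_j$ whose insertion by $j$ at $\tau$ adds $x$ to a union $W$ with $x\notin W$; let $O^{*}$ be $j$'s observed history up to $\tau$, and let $B_0$ be the factual union of that prefix (so $x\notin B_0$). I would then construct two nature-inputs $\ninput_1,\ninput_2$ (assuming $n\ge 2$, else $(i)$ is vacuous). Both replay the prefix of $\ninput_0$ up to $\tau$, so $j$ observes $O^{*}$ and injects $x$ in both. After $\tau$, some other agents truthfully report a \emph{forcing set} $\bar S$ with $x\notin\bar S$, chosen via Definition~\ref{def:forceable_winners} so that $x\in\rho(B_0\cup\{x\}\cup\bar S)$. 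Finally a fixed agent $i\neq j$ makes one report after $\tau$: a set $T\subseteq B_0\cup\bar S$ in $\ninput_1$, and $T\cup\{x\}$ in $\ninput_2$. Writing $B:=B_0\cup\bar S$, the total factual unions are $B$ in $\ninput_1$ and $B\cup\{x\}$ in $\ninput_2$.

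I would then verify the two inputs collide under $s_j$ but separate under $truth_j$, violating $(ii)$. Under $s_j$, once $j$ has injected $x$ at $\tau$ the point $x$ already lies in the ledger union, so agent $i$'s later report of $T$ versus $T\cup\{x\}$ is absorbed (the union is a set): the broadcast it triggers carries the same value in both runs, all subsequent messages agree, and hence $O_j(R_{\ninput_1,s_j})=O_j(R_{\ninput_2,s_j})$. Under $truth_j$, by contrast, $j$ does not report $x$ and, by freshness at $\tau$, no prefix agent reports $x$; thus the final unions are $B$ and $B\cup\{x\}$. Since $\rho(B)\subseteq B\not\ni x$ while forceable winners give $x\in\rho(B\cup\{x\})$, the last outputs differ, so $O_j(R_{\ninput_1,truth_j})\neq O_j(R_{\ninput_2,truth_j})$. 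This is exactly a failure of condition $(ii)$, so no fresh-injecting $s_j$ is vulnerable* either, completing both cases.

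The main obstacle is the second case. First I must guarantee a \emph{fresh} injection witness rather than a redundant one, which is handled by pushing all non-fresh behaviour into Case~1 through the no-factual-data input. Second, I must reproduce $j$'s lie in both crafted runs while holding $j$'s observed history identical; this is delicate in the continuous protocol because every ledger update triggers a broadcast, so I route the phantom point through a separate agent $i$ and lean on the set-union structure to make that report value-preserving once $x$ has already been injected. The crucial leverage is supplied by the forceable-winners hypothesis of Definition~\ref{def:forceable_winners}: it is precisely what makes the injected point output-relevant under truth (so the two runs separate) while it is output-neutral after it has already entered the union (so the two runs collide), which is what simultaneously breaks $(ii)$ for the attacker.
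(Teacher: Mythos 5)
Your proof is correct and follows essentially the same route as the paper's: your Case~1 is the paper's first claim (a nature-input giving $j$ no factual updates forces any $(i^*)$-satisfying strategy to inject a non-factual point), and your Case~2 is the paper's second claim (append a forcing set and then two factual updates of another agent differing exactly by $\{x\}$, which collide under $s_j$ because $x$ already sits in the ledger union, but separate under $truth_j$ by set-choice plus forceable winners, violating condition $(ii)$). Your only deviation is insisting on a \emph{fresh} injection witness ($x$ absent from the current ledger union, not merely from $F_j$) instead of the paper's ``explicitly-lying'' witness; this is a mild strengthening that conveniently guarantees no other agent ever factually reported $x$, a point the paper's argument leaves implicit.
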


We prove the theorem using the two following claims.

\begin{claim} A strategy $s_j$ that satisfies condition $(i*)$ for a set-choice algorithm is explicitly-lying. 
\end{claim}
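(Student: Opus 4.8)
The plan is to argue by contraposition: I would show that a strategy $s_j$ which is \emph{not} explicitly-lying cannot satisfy condition $(i*)$ for a set-choice algorithm. Since failing to be explicitly-lying is exactly the statement that for \emph{every} nature-input $\ninput$ every point appearing in a ledger update of $j$ already appears in some factual update of $j$, I would record first the key consequence $L_j(R_{\ninput, s_j}) \subseteq F_j(R_{\ninput, s_j})$ for all $\ninput$ (this is precisely the negation of the existential condition defining an explicitly-lying strategy).

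Next I would construct a single witness nature-input that defeats $(i*)$. Take any $\ninput_0$ in which agent $j$ receives no factual update at all (for instance, all elements of $\ninput_0$ are addressed to agents $i \neq j$). Then $F_j(R_{\ninput_0, s_j}) = \emptyset$, and the subset property forces $L_j(R_{\ninput_0, s_j}) = \emptyset$: on this input $j$ contributes no point to the ledger, whatever internal bookkeeping $s_j$ performs. In $R_{\ninput_0, truth_j}$ agent $j$ also contributes nothing, since $truth_j$ only echoes factual updates of $j$ and there are none.

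The central step is then to observe that for a set algorithm the last algorithm output is a function solely of the final union $\uu$ of all points ever reported in ledger updates. All agents $i \neq j$ play $truth_i$ in both runs, and since truthful agents respond only to their own nature-given factual updates and never to $j$'s ledger updates, their ledger contributions are identical in $R_{\ninput_0, s_j}$ and $R_{\ninput_0, truth_j}$. Combined with $j$ contributing nothing in either run, the final union is the same in both runs, hence so is its image under $\rho$, the last algorithm output. This directly contradicts $(i*)$, which requires the last outputs to differ for \emph{every} nature-input, in particular for $\ninput_0$. Therefore any $s_j$ satisfying $(i*)$ must be explicitly-lying.

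I expect the only delicate point to be justifying that the other agents' behavior, and hence the final union, is genuinely unchanged when $j$ switches from $truth_j$ to $s_j$ on $\ninput_0$. This needs a brief appeal to the structure of the continuous protocol: truthful agents' ledger updates are determined entirely by the factual messages nature delivers to them, which are fixed by $\ninput_0$ and independent of $j$'s messages, while any empty ledger updates $j$ might emit leave the union untouched. One should also confirm that a full run on $\ninput_0$ exists, which is immediate since nature drives the input; the empty-input edge case is avoided by taking $\ninput_0$ to contain at least one factual update to some agent $i \neq j$, so that a well-defined last output is present.
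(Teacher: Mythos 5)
Your proof is correct and is essentially the same argument as the paper's: the paper also takes a nature-input in which agent $j$ receives no factual updates, notes that under $truth_j$ agent $j$ contributes nothing, and concludes that to make the last output differ (as condition $(i*)$ demands) $s_j$ must send a point outside its empty factual set, i.e., an explicit lie. Your contrapositive phrasing and the extra care about other agents' behavior and empty ledger updates are just a more detailed rendering of that same idea.
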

\begin{proof}
Consider a nature-input where agent $j$ receives no factual updates. To satisfy condition $(i*)$, it must send some ledger update for the algorithm output under $s_j$ to differ from that under $truth_j$. Since the union of all its factual updates is an empty set, it must hold that it sends a data point that does not exist there.
\end{proof}

\begin{claim}
\label{clm:condition2_forceable_domain}
An explicitly-lying strategy $s_j$ for a set-choice algorithm with forceable winners violates condition $(ii)$. 
\end{claim}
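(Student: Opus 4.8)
The plan is to show that condition $(ii)$ of Definition~\ref{def:l_ncc} fails by exhibiting two nature-inputs that agent $j$ cannot tell apart while running $s_j$, yet which induce different final outputs under $truth_j$. Since condition $(ii)$ is precisely the requirement that distinct truthful observed histories map to distinct observed histories under $s_j$, to violate it I must produce $\ninput, \ninput'$ with $O_j(R_{\ninput, truth_j}) \neq O_j(R_{\ninput', truth_j})$ but $O_j(R_{\ninput, s_j}) = O_j(R_{\ninput', s_j})$.

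The engine of the construction is the interaction of two facts. First, since $\rho$ is a set algorithm (Definition~\ref{def:set_algs}), every algorithm output — and hence everything $j$ ever observes — depends only on the union $S = \uu$ of all ledger updates; consequently, once a point $x$ already belongs to $S$, an additional ledger copy of $x$ is completely invisible to all future outputs. Second, explicit lying hands me a witness: a nature-input $\ninput_0$ and a point $x$ with $x \in L_j(R_{\ninput_0, s_j})$ but $x \notin F_j(R_{\ninput_0, s_j})$, so along the $s_j$-run agent $j$ itself injects $x$ into the ledger although it never observed $x$ as a fact. I will use this injected copy of $x$ to \emph{mask}, from $j$'s point of view under $s_j$, whether or not an honest agent also holds $x$.

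Concretely, I would build the confusable pair by keeping the entire witness prefix up to and including $j$'s injection of $x$ (so that $s_j$ injects $x$ in both inputs, the observed histories being identical there), and then letting the two inputs differ only in the \emph{content} of a single honest report $<i, U>$ issued \emph{after} the injection: in $\ninput'$ that report contains $x$, whereas in $\ninput$ it contains instead a point already present in $S$. Because $x$ is already in the union by the time of this report, in both $s_j$-runs the report is a set no-op, so every subsequent algorithm output, and therefore all of $O_j$ under $s_j$, coincides. To make the two inputs genuinely distinguishable under $truth_j$, I invoke forceable winners (Definition~\ref{def:forceable_winners}) on the honest union: I pad the honest reports with a forcing set $\bar S$ issued identically in both inputs, so that in $\ninput'$, where $x$ is genuinely present, $x \in \rho$ of the true union, while in $\ninput$, where $x$ is absent from the honest union, the property $\rho(S)\subseteq S$ forbids $x$ from being output. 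The truthful outputs then differ, completing the violation of $(ii)$.

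The main obstacle is the exact message-level synchronization of the two runs under $s_j$. It forces the single distinguishing difference to be a change of content in an already-scheduled honest message rather than an extra message (an extra ledger update triggers an extra broadcast, which would lengthen and thus desynchronize $O_j$), and it requires that the injected point $x$ not already appear in the honest data of the shared witness prefix — otherwise $x$ would be present under $truth_j$ in both inputs and forceable winners could not separate them. I would discharge the latter point by arguing that an effective-lie witness exists: a lie that were always a set no-op would leave every output equal to the truthful one and hence could never mislead, so in the theorem's setting, where $s_j$ additionally satisfies the misleading condition, some witness must inject a point absent from the honest union at injection time. Placing both $\bar S$ and the one differing report strictly after the injection guarantees neither perturbs the prefix that triggers $s_j$ to lie, and checking that this scheduling respects Protocol~\ref{proto:continuous} (in particular the $\ell$-subsequent-update constraint, which is harmless since each honest agent can send $\bar S$ as a single set update) is the remaining bookkeeping.
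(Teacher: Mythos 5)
Your construction is, in its main line, exactly the paper's: take a witness run on which $s_j$ injects an explicit lie $x$, append a single honest report that either contains $x$ or omits it, pad both with a forcing set $\bar{S}$, and conclude that under $s_j$ the two runs are indistinguishable (the extra copy of $x$ is invisible because $x$ is already in the ledger union), while under $truth_j$ forceable winners puts $x$ into the output in one run and the set-choice property $\rho(S) \subseteq S$ excludes it in the other. This matches the paper's proof, which takes $E_1 = F_R \cup L_R \cup \bar{S}$ and $E_2 = E_1 \setminus \{x\}$ as the two appended honest updates.

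The divergence is in the subtlety you flag, namely that the lie $x$ must not already occur in the honest factual data of the witness prefix, and there your discharge has a genuine flaw. You argue that a strategy all of whose lies are set no-ops relative to the honest union ``could never mislead.'' That is false: such a strategy can still mislead by \emph{omission}, withholding some of agent $j$'s own factual points while echoing back points belonging to others (which it can read off the algorithm outputs, since a set-choice output consists of actual input points). Such a strategy is explicitly-lying under the paper's definition and satisfies the misleading condition, yet every one of its lies is invisible in the union, so your argument produces no effective-lie witness for it. The correct discharge, and the reason the theorem as a whole survives, comes from combining with the first claim in the theorem's proof: under condition $(i*)$ one may take a nature-input in which $j$ receives \emph{no} factual updates; then $j$ has nothing to omit, so the only way the final output can differ from the truthful one is for $j$ to inject a point lying outside the honest union, which is exactly the witness you need (one should then apply forceable winners to the full truthful union, honest prefix included). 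It is worth noting that the paper's own write-up is exposed to the same issue you spotted: it computes the truthful outputs as $\rho(F_R \cup E_i)$, silently dropping the other agents' contributions from the prefix, and its separation argument fails if $x$ happens to appear in that honest data. So your proposal is more careful than the paper in identifying the problem, but the repair you offer does not close it; the omission loophole must be handled via the no-factual-update run.
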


\begin{proof}
Consider the shortest nature-input $\ninput$ (in terms of number of elements) where $s_j$ sends a ledger update with an explicit lie $x$, and let $L_R = L_j(R_{\ninput, s_j}), F_R = F_j(R_{\ninput, s_j})$ be the union of all ledger, factual updates respectively by $j$. Let $S = F_R \cup L_R$, and $<i, \bar{S}>$ the nature-input element that generates a factual update of an agent $i \neq j$ that forces $x \in \rho(S \cup \bar{S})$ (such an element exist by the forceable winners condition). Let $E_1 = F_R \cup L_R \cup \bar{S}, E_2 = E_1 \setminus \{x\}$. Notice that $x \not \in \bar{S}$ (as required in Definition~\ref{def:forceable_winners} of forceable winners), but $x\in L_R$, and so $E_1 \neq E_2$. Also note that $x\not \in F_R$ (as it is an explicit lie). 
Let $\ninput_1, \ninput_2$ be $\ninput$ with an additional last element $E_1, E_2$ respectively. 

Now notice that $O_j(R_{\ninput_1,s_j}), O_j(R_{\ninput_2,s_j})$ are composed of the observed history $O_j(R_{\ninput,s_j})$, together with the observations following each of their different last elements. 
 As the last element is a factual update of an agent $i \neq j$, the agent sends a truthful ledger update. We then have $L_R \cup E_2 = L_R \cup ((F_R \cup L_R \cup \bar{S}) \setminus \{x\}) = (L_R \cup \{x\})\cup ((F_R \cup L_R \cup \bar{S}) \setminus \{x\}) = L_R \cup (F_R \cup L_R \cup \bar{S}) = L_R \cup E_1.$
Thus, the immediate algorithm output, and any further algorithm output following some ledger update by agent $j$ is taken over the same set, whether it is under $\ninput_1$ or $\ninput_2$, and so identifies. We conclude that $O_j(R_{\ninput_1,s_j}) =  O_j(R_{\ninput_2,s_j})$. 

On the other hand, the last algorithm output in $O_j(R_{\ninput_1,truth_j})$ is $\rho(F_R \cup E_1) = \rho(F_R \cup (F_R \cup L_R \cup \bar{S})) = \rho(S \cup \bar{S})$, and thus has the element $x$ by Definition~\ref{def:forceable_winners}. On the other hand, the last algorithm output in $O_j(R_{\ninput_2,truth_j})$ is $\rho(F_R \cup E_2) = \rho(F_R \cup ((F_R \cup L_R \cup \bar{S}) \setminus \{x\})) = \rho((F_R \cup L_R \cup \bar{S}) \setminus \{x\})$. Since $\rho$ is a set-choice algorithm, it does not output $x$ since it does not appear in the input set. 
\end{proof}

\begin{figure}
    \includegraphics[scale=0.6]{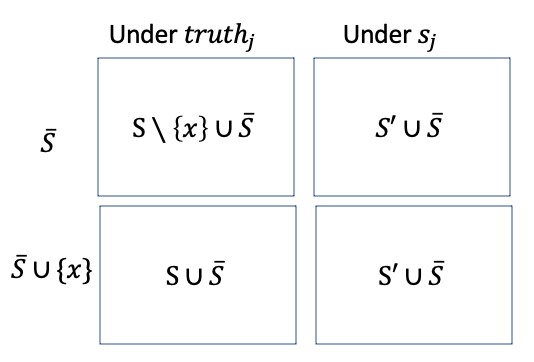}
    \caption{Demonstration of the proof of Claim~\ref{clm:condition2_forceable_domain}. $x$ is an explicit lie by agent $j$. $S$ is the state of the ledger under $truth_j$. $S'$ is the state of the ledger under $s_j$. $\bar{S}$ is a complementary set to $S$ from Definition~\ref{def:forceable_winners} (forceable winners). 
    Given that the next ledger update by a truthful agent is either $\bar{S}$ or $\bar{S} \cup \{x\}$ (which is represented by the rows), then the behavior under the different strategies (represented by the columns) is such that under $s_j$, the two underlying states of the world are the same, but not so under $truth_j$.}
    \label{fig:forceable}
\end{figure}

\begin{corollary}
\label{cor:k_center}
$k$-center is not $\ell$-NCC-vulnerable* for any $\ell$. 
\end{corollary}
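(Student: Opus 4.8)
The plan is to obtain Corollary~\ref{cor:k_center} as a direct instance of the preceding Theorem, which states that a set-choice algorithm with forceable winners is not $\ell$-NCC-vulnerable* for any $\ell$. Thus the only work is to check that $k$-center satisfies the two hypotheses of that theorem: that it is a set-choice algorithm (Definition~\ref{def:set_algs}) and that it has forceable winners (Definition~\ref{def:forceable_winners}). The set-choice property is immediate from the definition of $\rho_{k-center}$ in \eqref{eq:kcenter_cost_function}: each update is a set of points, the algorithm is evaluated on the union $\uu$, and the chosen centers are drawn from the reported points, so $\rho(S)\subseteq S$. I would dispatch this in a single sentence.

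The heart of the argument is forceable winners: given any finite $S\subseteq\mathcal{R}^d$ and any $x\in S$, I must produce a set $\bar S$ with $x\notin\bar S$ and $x\in\rho(S\cup\bar S)$. I would take $\bar S = Y\cup W$, built from two gadgets. Let $R=\max_{s\in S}\|x-s\|_p$. The first gadget $Y=\{y_1,\dots,y_{k-1}\}$ is a set of $k-1$ \emph{sentinel} points placed pairwise, and from all of $S$, at distance exceeding a large constant $M\gg R$; since any sentinel left without its own center pays cost at least $M-R$, the optimum is forced to spend $k-1$ of its centers on the sentinels, leaving exactly one center for the ``near'' cluster $S\cup W$. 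The second gadget $W=\{x\pm R e_i\}_{i=1}^d$ is a \emph{cage} of the $2d$ axis points at distance $R$ from $x$ (so $x\notin W$, and $W$ sits inside the near region whenever $M\gg R$).

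The key computation is that $x$ is the \emph{unique} minimizer over $c\in S\cup W$ of the eccentricity $\max_{q\in S\cup W}\|c-q\|_p$: the eccentricity of $x$ equals $R$, while for any $c\neq x$, choosing the coordinate $i^\ast$ of largest magnitude in $c-x$ and the cage point on the opposite side gives distance at least $|(c-x)_{i^\ast}|+R>R$, so every competitor has eccentricity strictly above $R$. Hence for $M\gg R$ the unique optimal solution is $\{x,y_1,\dots,y_{k-1}\}$, which contains $x$; since the optimum is strict, the smallest-norm tie-breaking is irrelevant, and the same one-coordinate estimate makes the argument valid for every $L_p$ norm.

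The step I expect to be the main obstacle is precisely this cage construction. The sentinels alone only force \emph{some} point of $S$ to be a center, namely its $1$-center, which in general is not $x$; because the fixed points of $S$ cannot be relocated, the whole difficulty is to add points so that $x$ specifically becomes the unique $1$-center of the near cluster. I would also treat the degenerate case $S=\{x\}$ (where $R=0$) separately, in which the sentinels by themselves already force $x$. With both hypotheses established, the Theorem applies verbatim to give the corollary, and an analogous set-choice-plus-gadget argument covers $k$-median as well.
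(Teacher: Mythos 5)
Your proposal is correct and follows essentially the same route as the paper: invoke the general theorem, note that $k$-center is set-choice, and establish forceable winners via a construction combining far-away ``sentinel'' points (one per remaining center) with a ``cage'' around $x$ that makes $x$ the unique best center for the near cluster --- this is exactly the paper's construction $\bar{S} = \{x+\Delta, x-\Delta\} \cup \{x+10\Delta,\ldots,x+10^{k-1}\Delta\}$ with $\Delta = \max\{\max_{s\in S}|x-s|,1\}$. The only difference is that the paper gives the construction in $\mathcal{R}$ and asserts the extension to $\mathcal{R}^d$ and $L_p$ is similar, whereas you carry out that extension explicitly with the axis-point cage $\{x\pm Re_i\}_{i=1}^d$ and the single-coordinate lower bound, which is a sound way to fill in the omitted details.
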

\begin{proof}
$k$-center is a set-choice algorithm. We show that it has forceable winners. We show the construction for $R$, but the general $R^d, L_p$ is similar. Let some $S \subseteq R$ with $x \in S$. Let $\Delta = \max \{\max_{s\in S} |x-s|, 1\}$. Let $\bar{S} = \{ x + \Delta, x-\Delta\} \cup \{x + 10\Delta, \ldots, x + 10^{k-1} \Delta\}$. It must hold that $\rho(S \cup \bar{S}) = \{x, x + 10\Delta, \ldots, x+10^{k-1}\Delta\}$. 
\end{proof}

\begin{restatable}[]{corollary}{kmedian}
\label{cor:k_median}
$k$-median is not $\ell$-NCC-vulnerable* for any $\ell$.
\end{restatable}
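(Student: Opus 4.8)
The plan is to mirror the proof of Corollary~\ref{cor:k_center} and reduce to the theorem that a set-choice algorithm with forceable winners is not $\ell$-NCC-vulnerable*. So I would verify two things about $k$-median: that it is a set-choice algorithm in the sense of Definition~\ref{def:set_algs}, and that it has forceable winners in the sense of Definition~\ref{def:forceable_winners}. The first is immediate, since like $k$-center, $k$-median picks its $k$ centers among the reported data points, so $\rho_{k\text{-median}}(S) \subseteq S$.

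For forceable winners, given $S \subseteq \mathcal{R}$ with $x \in S$, I would build $\bar S$ (with $x \notin \bar S$) from two gadgets. The first is a batch of $k-1$ far-away points at geometrically increasing distances from $x$, e.g. at $x + 10\Delta, \dots, x + 10^{k-1}\Delta$ with $\Delta = \max\{\max_{s\in S}|x-s|,1\}$ suitably scaled up. Their role is to absorb $k-1$ of the centers: since each is isolated and can be placed arbitrarily far from everything else, any optimal $k$-median solution must put a center on each of them, as leaving one uncovered contributes more to the sum of distances than the entire near cluster could ever save (the near-cluster cost is bounded by its diameter times its cardinality, so choosing the separation larger than this bound forces the claim). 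This leaves exactly one center for the near cluster consisting of $S$ together with the second gadget.

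The second gadget must force that remaining center onto $x$, and here lies the key difference from $k$-center. For $k$-center the optimal near-cluster center is the midpoint of the range of points, so the symmetric extremes $\{x-\Delta, x+\Delta\}$ already pin it to $x$; but for $k$-median the relevant minimizer is the count-median of the cluster, so symmetric extremes do not suffice when $S$ is skewed. Instead I would set $a = |\{s\in S: s < x\}|$, $b = |\{s\in S: s>x\}|$, and add $|a-b|$ distinct balancing points close to $x$ on the lighter side (inside $[x-\Delta,x+\Delta]$, so the cluster stays tight). The near cluster then has equally many points strictly below and strictly above $x$, with $x$ the middle element of an odd-sized set. Consequently $c \mapsto \sum_{p}|p-c|$ is strictly minimized at $c=x$ among the data points, making $x$ the \emph{unique} $1$-median and hence the chosen center of the near cluster. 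Together with the $k-1$ far centers this gives $x \in \rho_{k\text{-median}}(S\cup\bar S)$, establishing forceable winners and thus the corollary; the extension to $\mathcal{R}^d$ with an $L_p$ norm is routine by placing all gadget points on one coordinate axis, exactly as in Corollary~\ref{cor:k_center}.

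I expect the main obstacle to be precisely this balancing step: unlike $k$-center, one must neutralize the skew of $S$ to turn $x$ into a genuine median, and one must separately confirm that the far gadget really commits $k-1$ centers and that tie-breaking cannot steal the last center from $x$. Arranging an odd total count so that $x$ is the \emph{unique} minimizer is what lets me sidestep the tie-breaking rule cleanly, and this uniqueness is the crux of the argument.
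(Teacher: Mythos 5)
Your overall route is the same as the paper's: verify that $k$-median is a set-choice algorithm and that it has forceable winners (Definition~\ref{def:forceable_winners}), then invoke the theorem that set-choice algorithms with forceable winners are not $\ell$-NCC-vulnerable*. Your far gadget, with its scaling argument forcing $k-1$ centers onto the far points, matches the paper's construction, and your one-dimensional near gadget is correct: padding the lighter side of $x$ so that $x$ becomes the middle order statistic of an odd-sized cluster does make $x$ the unique $1$-median on the line, which, as you note, cleanly sidesteps the tie-breaking rule.

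The genuine gap is the closing claim that the extension to $\mathcal{R}^d$ is routine ``by placing all gadget points on one coordinate axis.'' Count-balancing is an order-statistic argument and is inherently one-dimensional: it does not control the sum of distances when $S$ itself does not lie on the chosen axis. Concretely, take $d=2$, $x=(0,0)$, $S=\{(0,0),(0,1),(0,2)\}$. No point of $S$ lies strictly above or below $x$ in the first coordinate, so your recipe adds no balancing points at all, and the $1$-median of the near cluster among its data points is $(0,1)$ (cost $2$), not $x$ (cost $3$); the construction therefore fails to force $x$ into the output, so forceable winners is not established. The paper's near gadget avoids this by symmetrizing geometrically rather than by counts: it adds to $\bar{S}$ the reflection $2x-s$ of every $s\in S$, so that for any alternative center $x+\delta$ each reflected pair contributes $\|(s-x)-\delta\| + \|(s-x)+\delta\| \geq 2\|s-x\|$ by the triangle inequality, while the point $x$ itself contributes an extra $\|\delta\| > 0$, making $x$ the strictly cheapest center in any dimension and any $L_p$ norm. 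Your argument is thus a correct proof for $d=1$, but for the statement as given (data points in $\mathcal{R}^d$) the balancing step must be replaced by a dimension-free symmetrization of this kind.
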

The proof is given in Appendix~\ref{app:kcenter}. 

\section{Linear Regression under Continuous Communication}
\label{sec:lr}

In this section, we study the vulnerability(*) of linear regression.

\begin{definition}

Multiple linear regression in $d$ features $d-LR$: Given a set of data points $S$ with $n$ points, where the data points features are a $(d + 1)\times n$ matrix $\mathbf{X}$ with all elements of the first column normalized to 1, the targets are a $1\times n$ vector $\mathbf{y}$, then \[
\begin{split}
& \rho_{d-LR}(\mathbf{U^t}) = \\ & \rho_{d-LR}(\cup_{i=1}^t U_i) = \begin{cases} 
(\mathbf{X}^T\mathbf{X})^{-1} \mathbf{X}^T \mathbf{y} & \mathbf{X}\text{ columns are linearly independent} \\
Null & Otherwise. \\
\end{cases}
\end{split}
\]

We slightly abuse notation by defining $\rho_{d-LR}$ both as a function on a series of updates $\mathbf{U^t}$, as well as on a set of data points. 
The latter satisfies, as long as the columns are linearly independent, $\rho_{d-LR}(S) = \arg \min_{\mathbf{\beta} \in \mathcal{R}^d } \sum_{i\in |S|} (y_i - \sum_{j=1}^d x^j_i \beta_j)^2 $. We subsequently assume for simplicity that the columns are always linearly independent (e.g., by having a first ledger update with $d$ linearly independent features. The property is then automatically maintained with any future updates). 

\end{definition}

It is not difficult to find omission sneak attacks for linear regression, as we demonstrate in Figure~\ref{fig:sneak_lr}. 
\begin{figure}
    \hspace{-1cm}\includegraphics[scale=0.6]{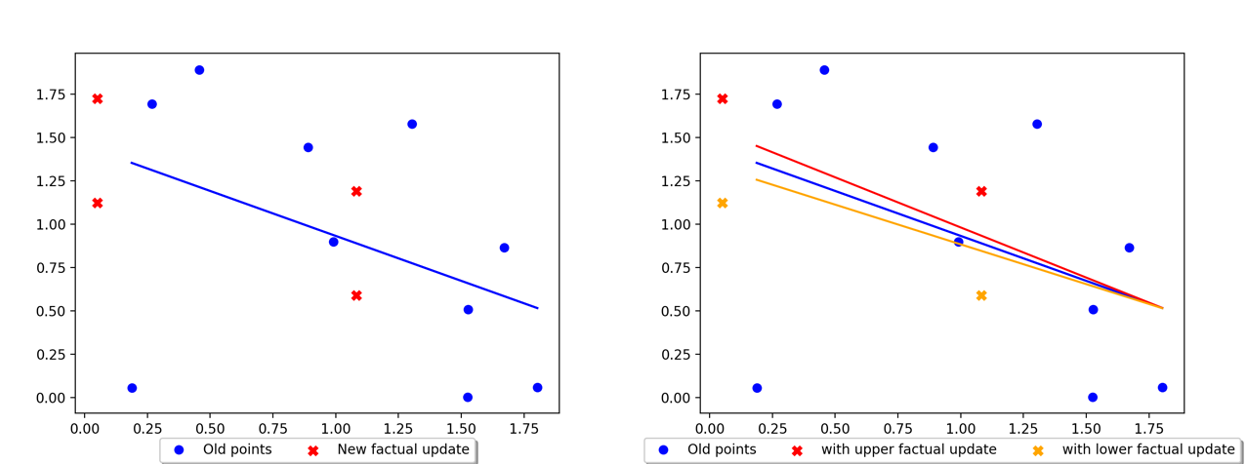}
    \caption{A sneak attack for simple linear regression. Since the points by others and the factual update of the agent yield the same LR estimator $\hat{\rho}$, the result of running the regression on all points is $\hat{\rho}$ regardless of what are the actual points by others.}
    \label{fig:sneak_lr}
\end{figure}
In Example~\ref{ex:LR_sneak_attack} in Appendix~\ref{app:lr}, we show a more complicated \textit{explicitly-lying} sneak attack for $1-LR$ (also called ``simple linear regression''). The attack can be generalized for $d-LR$. This yields

\begin{theorem}
$d$-LR is $1$-NCC-vulnerable. 
\end{theorem}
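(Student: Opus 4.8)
The plan is to exhibit a single explicit instance of the sneak attack of Strategy Template~\ref{alg:sneak_attack_template}, discharge condition $(ii)$ by invoking Lemma~\ref{lem:sneak_attack_formal}, and verify condition $(i)$ by hand. The guiding principle is a separation between what the attacker can \emph{infer} and what it can \emph{distort}: the attacker reads off the true output from the last algorithm output it observes \emph{before} it reports, while its deviating report corrupts only what the other agents see afterwards. The one linear-algebra fact driving everything is that if two full-rank labeled datasets each have least-squares estimator $\hat{\beta}$, then so does their union — summing the two normal-equation systems $\mathbf{X}^T\mathbf{X}\hat{\beta} = \mathbf{X}^T\mathbf{y}$ shows $\hat{\beta}$ solves the combined system.

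First I would fix the attack parameters. Take $U_{cond}$ to be a full-rank set of labeled points whose least-squares fit is a fixed hyperplane $\hat{\rho}$, chosen so that the points are \emph{not} all exactly on $\hat{\rho}$ (the fit has nonzero residuals), and set $\rho_{cond} = \hat{\rho}$. For $1$-LR a convenient choice is the symmetric cloud $\{(1,1),(-1,-1),(1,-1),(-1,1)\}$, whose fit is $y=0$ while the sub-cloud $\{(1,1),(-1,-1)\}$ fits $y=x$. I then set $U_{attack} = \{(1,1),(-1,-1)\}$ and $U_{re-sync} = U_{cond}\setminus U_{attack}$, so that $U_{attack}\subseteq U_{cond}$ is a proper subset (in particular $U_{cond}\neq U_{attack}$, so the strategy is well defined by Lemma~\ref{sneakAttackImp}). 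The $d$-LR version is obtained by embedding this construction into $\mathcal{R}^{d}$ and padding the extra feature coordinates with values that keep every relevant design matrix full rank.

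For condition $(i)$ I would run the nature-input $\ninput = (<i, D_1>, <j, U_{cond}>)$ with $i\neq j$, where $D_1$ is a full-rank dataset whose own fit is $\hat{\rho}$. Under $truth_j$ the ledger ends at $D_1\cup U_{cond}$, and by the union fact the final output is exactly $\hat{\rho}$. Under $s_j$ the attack fires (the last factual update is $U_{cond}$ and the last output is $\rho_{cond}=\hat{\rho}$), no later factual update arrives to trigger the re-sync, and since $\ell=1$ agent $j$ cannot report again, so the run ends at $D_1\cup U_{attack}$. Because $U_{attack}$ has a different fit from $U_{cond}$, a normal-equations computation gives $\rho_{d-LR}(D_1\cup U_{attack})\neq\hat{\rho}$ (with $D_1$ keeping the design full rank, so the output is not $Null$); I would pick $D_1$ concretely to make the inequality transparent. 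For condition $(ii)$ I would check the hypotheses of Lemma~\ref{lem:sneak_attack_formal}: by construction $U_{attack}\subseteq U_{cond}$ and $U_{re-sync}=U_{cond}\setminus U_{attack}$, so it only remains to argue inferability of the last truthful output right after sending $U_{attack}$. This is where the separation principle pays off — at the moment the attack fires the attacker has just observed the output $\rho_{cond}=\hat{\rho}$, which (since $s_j$ coincides with $truth_j$ up to that point) certifies that the pre-report ledger fits $\hat{\rho}$; as the attacker also knows its own $U_{cond}$ fits $\hat{\rho}$, the union fact forces the truthful output to be $\hat{\rho}$, independently of the unknown data of others. Lemma~\ref{lem:sneak_attack_formal} then delivers condition $(ii)$, establishing $1$-NCC-vulnerability of $d$-LR.

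The step I expect to be the main obstacle is the simultaneous reconciliation of conditions $(i)$ and $(ii)$, which pull in opposite directions: the report must be informative enough to move the estimator away from $\hat{\rho}$ for the other agents, yet the attacker must provably still pin down the truth. The technical core is the union-of-equal-estimators fact together with the full-rank bookkeeping it requires — making sure that every design matrix arising along the run (for $D_1$, for $D_1\cup U_{attack}$, and after re-sync) is full rank so that $\rho_{d-LR}$ never returns $Null$, and that the chosen subset genuinely shifts the fit in $d$ dimensions and not merely in one. A slightly heavier alternative is the \emph{explicitly-lying} sneak attack, in which $U_{attack}$ introduces a fabricated point rather than omitting real ones; it runs through the same template and the same inference argument, and is the variant I would fall back on if arranging an omission that provably shifts the estimator in general position turns out to be fiddly.
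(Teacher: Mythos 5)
Your proposal is correct in substance, but it takes a genuinely different route from the paper's formal proof. The paper proves the theorem via Example~\ref{ex:LR_sneak_attack}: an \emph{explicitly-lying} sneak attack in which all points of $U_{cond}$ lie \emph{exactly} on the line $\rho_{cond}$ (zero residuals), $U_{attack}$ is a fabricated point off that line, and $U_{re-sync}$ consists of further fabricated points chosen so that $U_{attack}$ and $U_{re-sync}$ together reproduce the sufficient statistics $X^TX$ and $X^Ty$ of $U_{cond}$. Because $U_{attack}\not\subseteq U_{cond}$ there, the paper cannot invoke Lemma~\ref{lem:sneak_attack_formal} verbatim and must re-argue its steps, with inference resting on the fact that zero-residual points can never move the optimum, and re-synchronization holding at the level of sufficient statistics rather than literal point identity. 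Your proof instead formalizes the \emph{omission} attack that the paper only sketches in Figure~\ref{fig:sneak_lr}: $U_{cond}$ has fit $\hat\rho$ but nonzero residuals, $U_{attack}\subseteq U_{cond}$ is a full-rank subset with a different fit, and inference rests on the union-of-equal-estimators fact. Your version is cleaner in that it plugs directly into Lemma~\ref{lem:sneak_attack_formal}, and it produces an omission attack, which the paper argues is the more realistic threat model.

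One caveat deserves attention. Your union fact (summing the two normal-equation systems) is valid when updates aggregate as \emph{multi-sets}, i.e., with multiplicities. The paper's definition of $\rho_{d-LR}$ literally aggregates by set union $\cup_{i=1}^t U_i$, and under that reading your inference step can fail, because nature may hand the other agents data that already contains some points of $U_{attack}$. Concretely, with your $1$-LR parameters, compare the nature-inputs in which the other agent holds $S_1 = \{(1,1),(-1,-1),(2,-\tfrac{1}{2}),(-2,\tfrac{1}{2})\}$ versus $S_2 = U_{cond}$: both fit $y=0$, and both absorb $U_{attack}$ without change (so the observed histories under $s_j$ coincide), yet $\rho(S_1\cup U_{cond})$ has slope $-\tfrac{1}{6}$ while $\rho(S_2\cup U_{cond}) = \hat\rho$, violating condition $(ii)$. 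The paper's zero-residual choice is immune to this overlap problem (points on $\rho_{cond}$ keep it optimal under any union), but within the sneak template an omission attack cannot use zero residuals, since then no subset could move the estimator and condition $(i)$ would fail; this is exactly the tension your construction sits on. Since the paper itself treats $d$-LR as a multi-set minimization problem in Theorem~\ref{thm:lr_periodic}, and its own $U_{cond}$ repeats the point $(0,1)$, the multi-set reading is evidently the intended one and your proof stands; still, you should state explicitly that aggregation counts multiplicities, or otherwise handle the overlap case.
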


\subsection{Triangulation Attacks and Vulnerability*}

To study vulnerability*, we now define a stronger type of attacks and show they exist for $d-LR$, as long as $\ell \geq d + 2$. We name this type of attacks \emph{triangulation attacks}, and present a template parameterized by functions $f_1,...,f_{\ell-1}, h$ in Strategy Template~\ref{alg:triangulation_attack_template}. 

\begin{figure}
    \includegraphics[scale=0.4]{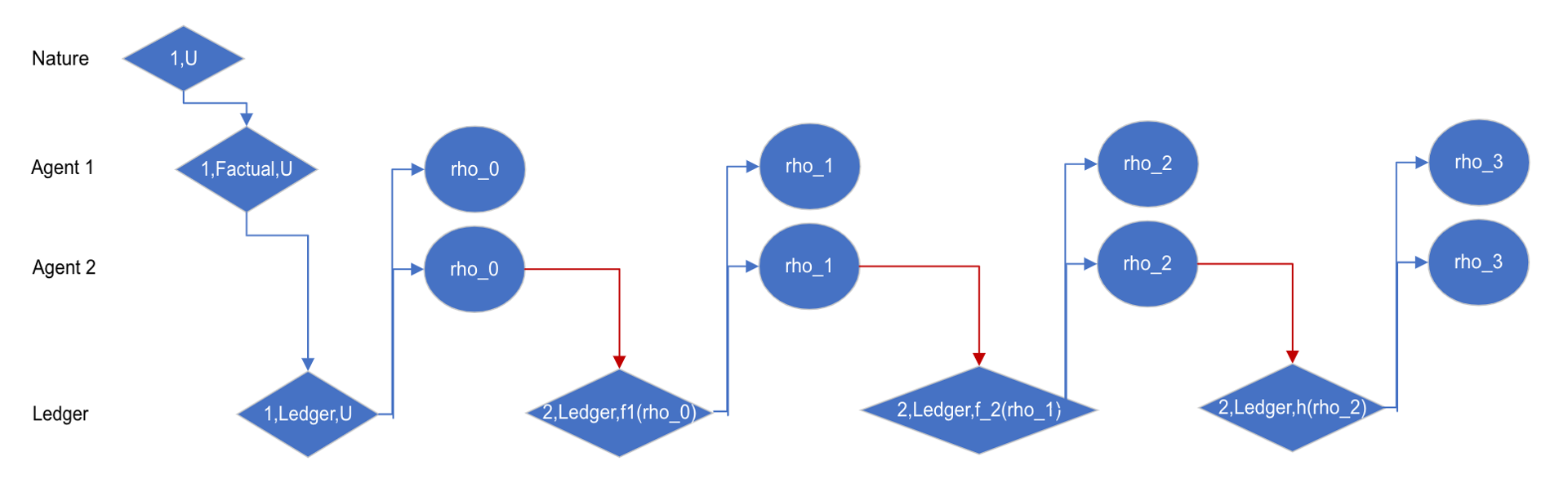}
    \caption{A general template for the triangulation attack, with $k=3$. Until the special conditions are met, and after the re-sync is done, the strategy behaves as $truth_j$.}
    \label{fig:triangulation_attack_template}
\end{figure}

\begin{strategytemplate}
\SetAlgoLined
\SetNoFillComment
\DontPrintSemicolon
\KwIn{Observed history $O_j$. Functions $f_1,\ldots, f_{\ell-1}, h$}
\KwOut{A ledger update $\hat{U}$}

Let $i=1$ if there is a factual update after the last ledger update by $j$. 

Otherwise, if a triangulation attack is ongoing, let $2 \leq i \leq \ell$ be its current step or else exit.

Let $\rho_{i-1}$ be the last algorithm output in $O_j$. 

\If{$1\leq i \leq \ell - 1$} {
    Return $<j, Ledger, f_i(\rho_{i-1})>$
} 

\ElseIf{$i = \ell$} {
    Return $<j, Ledger, h(\rho_{\ell-1})>$
}
 
 \caption{A template for a \emph{triangulation attack}}
 \label{alg:triangulation_attack_template}
\end{strategytemplate}

The idea of triangulation attacks is that for any state of the ledger, the attacker can find $\ell$ subsequent updates so that it can both infer the algorithm output if it applied strategy $truth_j$ instead of $s_j$ (using $f_1,...,f_{\ell-1}$ the ``triangulations''), and mislead others by the final update $h$. Informally, this attack has the desirable property that regardless of the state of the ledger (and how corrupted it may be by previous updates of the attacker), the attacker can infer the true state. 

As in the case of the sneak attack, we should show the strategy template can be implemented using only the information in $O_j$.

\begin{restatable}[]{lemma}{triangulationAttackImp}
The triangulation attack is well defined, i.e., the conditions in lines $1$ and $2$ can be implemented using only information available in $O_j$. The 
assignment in line $3$ is valid, that is, given that line $3$ is executed there exists an algorithm output in $O_j$. 
\end{restatable}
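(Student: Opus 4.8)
The plan is to show that every piece of ``control state'' the template refers to --- whether an incoming factual update is fresh, whether an attack is in progress, and the index $i$ of the current step --- is a deterministic function of the ordered observed history $O_j$, and that the output $\rho_{i-1}$ read in line~3 is present in $O_j$ exactly when line~3 runs. The one structural fact I rely on throughout is that $O_j$ records, in their true temporal order, precisely the events visible to $j$: the factual updates delivered to $j$, the ledger updates issued by $j$, and every algorithm output broadcast by the ledger. Because the template emits a ledger update only as a step of an attack triggered by a factual update (it has no truthful branch), every ledger update by $j$ recorded after a factual update is an attack step; this removes the ambiguity that forced the hypothesis $U_{cond}\neq U_{attack}$ in Lemma~\ref{sneakAttackImp}.

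Given this, line~1 is implemented by locating the last ledger update by $j$ in $O_j$ (treating the start of the history as its position if $j$ has issued none) and testing whether a factual update appears after it; if so I set $i=1$. When that test fails, line~2 is implemented by letting $k$ be the number of ledger updates by $j$ occurring in $O_j$ after the last recorded factual update, declaring the current step to be $i=k+1$, continuing the attack when $k+1\le \ell$ and exiting when $k=\ell$ (and also exiting in the degenerate case that $j$ has received no factual update, so that no attack was ever triggered). Both $k$ and the position test are plainly computable from $O_j$ alone.

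The heart of the matter is showing this reconstruction is faithful, i.e.\ that $k$ truly counts the completed steps of the \emph{current} attack, and this is the step I expect to require the most care. Here I would appeal to the control flow of Protocol~\ref{proto:continuous}: nature injects factual updates only at the top of each outer iteration, one agent at a time, and the inner \texttt{while} loop only lets agents react to the evolving ledger. Hence, while the single factual message that triggers $j$'s attack is being processed, no further factual update is delivered, and every truthful agent $i\neq j$ stays silent because it updates only in response to its own unmatched factual update. Consequently the only events between $j$'s triggering factual update and the end of that inner loop are $j$'s own attack ledger updates together with their broadcast outputs; the $\ell$-subsequent-update constraint then halts $j$ after exactly $\ell$ of them, in agreement with the template's exit at $k=\ell$, and the attack terminates before the next factual message arrives. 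This yields the three facts that make the count correct: each factual update to $j$ starts exactly one attack, an attack runs uninterrupted, and it consumes exactly $\ell$ ledger slots, so counting ledger updates by $j$ since the last recorded factual update isolates precisely the current attack's completed steps and $i=k+1$ is the right index.

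It remains to verify the line~3 validity, which I would split on $i$. If $2\le i\le \ell$ the attack is ongoing, so $j$ has already issued step~$1$, and that ledger update triggered a broadcast output now present in $O_j$, making ``the last algorithm output in $O_j$'' well defined. If $i=1$, a fresh factual update to $j$ has been delivered, and under the standing assumption in the definition of $d$-LR that the regression is initialized by a prior ledger update (so that $\mathbf{X}$ always has independent columns), at least one ledger update, and hence one algorithm output, precedes $j$'s factual update in the run; thus $\rho_{0}$ is again present in $O_j$. In both cases line~3 reads a genuine entry of $O_j$, completing the argument.
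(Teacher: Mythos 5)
There is a genuine gap, and it sits exactly at the non-trivial point of the lemma. In line~1 of Strategy Template~\ref{alg:triangulation_attack_template}, ``a factual update after the last ledger update by $j$'' means a factual update received by \emph{any} agent, not just by $j$: the paper's proof spells this out (``\ldots iff it is either a factual update of $j$ or of another agent $a \neq j$''), and the theorem the template serves demands it, since condition $(i*)$ of Theorem~\ref{thm:d_lr_upper_bound} must hold for nature-inputs in which $j$ never receives a factual update at all, so the attacker must launch a triangulation whenever \emph{some other} agent's data arrives. Factual updates of agents $a\neq j$ never appear in $O_j$; the whole difficulty is to detect them indirectly, which the paper does via the signature of an ``orphan'' algorithm output --- the last two elements of $O_j$ are both algorithm outputs (or $O_j$ consists of a single element which is an algorithm output) --- exploiting the fact that under the template every ledger update of $j$ is immediately followed by its broadcast output, whereas a truthful agent's response to its own fresh factual update produces an output with no ledger update of $j$ before it. Your implementation of line~1 (``test whether a factual update appears in $O_j$ after $j$'s last ledger update'') can only ever fire on $j$'s \emph{own} factual updates, so the strategy you implement stays silent on every nature-input that never touches $j$; on such inputs its run coincides with that of an agent who never updates, the last algorithm outputs under $s_j$ and $truth_j$ agree, and condition $(i*)$ fails. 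What you have implemented is therefore a different, weaker strategy than the template's.

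The same misreading propagates into your line~2: you count ledger updates by $j$ ``after the last recorded factual update,'' but an attack triggered by another agent's factual update leaves no factual record in $O_j$, so consecutive attacks (say, one triggered by $j$'s own data and a later one triggered by someone else's) would be merged under your count and the step index $i$ miscomputed; the paper instead counts the pairs of the form $<j,Ledger,U>,\rho_{v1}$ that follow the trigger signature. Two smaller remarks: your observation that the absence of a truthful branch removes the need for a hypothesis like $U_{cond}\neq U_{attack}$ of Lemma~\ref{sneakAttackImp} is correct and consistent with the paper; and your line-3 argument for the case $i=1$ (invoking the standing $d$-LR assumption of an initializing ledger update to guarantee a prior algorithm output exists) is a reasonable patch for an edge case the paper's own sketch glosses over --- but neither point repairs the line-1/line-2 defect.
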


We defer the proof details to Appendix~\ref{app:st_lemmas}. 

We now prove there is a triangulation attack for $d-LR$ with $\ell \geq d+2$. 

\begin{theorem}
\label{thm:d_lr_upper_bound}
$d-LR$ is $(d+2)$-NCC-vulnerable* using a triangulation attack $f_1,...,f_{d+1},h$. 
\end{theorem}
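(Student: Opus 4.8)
The plan is to work entirely with the sufficient statistics of the regression. Writing $A$ and $b$ for the feature Gram matrix and the feature--target cross product of the current ledger contents (the sufficient statistics of the fit), the algorithm output is $\rho(A,b)=A^{-1}b\in\mathcal{R}^{d+1}$. Let $(A_{0},b_{0})$ be the unknown ledger state just before agent $j$ is triggered, so the pre-attack output $\rho_{0}=A_{0}^{-1}b_{0}$ is observed, and let $(\Delta A,\Delta b)$ be the statistics of $j$'s own factual update, which $j$ knows. The truthful last output is $\beta_{\mathrm{true}}=(A_{0}+\Delta A)^{-1}(b_{0}+\Delta b)$. The key observation is that condition $(ii)$ follows once $j$ can reconstruct $(A_{0},b_{0})$ from $O_{j}$: it can then compute $\beta_{\mathrm{true}}$ and, together with the directly observed portions, recover the whole truthful observed history, so that the attacked history determines the truthful one. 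This is also where the count $\ell=d+2$ comes from: reconstructing the operator $A_{0}$ on $\mathcal{R}^{d+1}$ needs its action on $d+1$ independent directions, which the $d+1$ probes $f_{1},\dots,f_{d+1}$ supply, leaving the final update $h$ to mislead.

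Second, I would establish the measurement identity that makes the probes informative. Suppose after probes $1,\dots,i$ the attacker has added points with cumulative statistics $(\tilde P_{i},\tilde q_{i})$ (which it chooses, hence knows) and observes $\rho_{i}=(A_{0}+\tilde P_{i})^{-1}(b_{0}+\tilde q_{i})$. Clearing the inverse and using $b_{0}=A_{0}\rho_{0}$ gives the exact relation \[ A_{0}\,(\rho_{i}-\rho_{0})=\tilde q_{i}-\tilde P_{i}\rho_{i}. \] Setting $v_{i}:=\rho_{i}-\rho_{0}$ and $w_{i}:=\tilde q_{i}-\tilde P_{i}\rho_{i}$, both computable from $O_{j}$, each probe yields a linear measurement $A_{0}v_{i}=w_{i}$ of the unknown matrix. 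If the displacements $v_{1},\dots,v_{d+1}$ are linearly independent, then with $V=[v_{1},\dots,v_{d+1}]$ and $W=[w_{1},\dots,w_{d+1}]$ we recover $A_{0}=WV^{-1}$ \emph{exactly} (symmetry is automatic), hence $b_{0}=A_{0}\rho_{0}$ and finally $\beta_{\mathrm{true}}$. Crucially, this identity holds no matter how the probes are chosen; the probes only have to make $V$ invertible.

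Third, and this is the heart of the argument, I would design the $f_{i}$ so that $v_{1},\dots,v_{d+1}$ are independent for \emph{every} pre-attack state. The natural choice is to have probe $i$ add a large batch of points fit exactly by the hyperplane with coefficients $\rho_{i-1}+R\,e_{i}$ for a large constant $R$, intending to move the output increment $\rho_{i}-\rho_{i-1}$ along the fresh coordinate direction $e_{i}$ so that the partial sums $v_{i}$ span $\mathcal{R}^{d+1}$. \textbf{The main obstacle is exactly here}: since $(A_{0},b_{0})$ and $(sA_{0},sb_{0})$ yield identical outputs, a probe seeing only $\rho_{i-1}$ cannot detect the total mass of the ledger and so cannot force the output to a prescribed target. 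One must therefore argue directly, from the closed form of $\rho_{i}-\rho_{i-1}$ as a function of the accumulated Gram matrix, that the induced directions stay linearly independent for all $A_{0}\succ 0$ --- verifying the large-mass and small-mass regimes and ruling out any intermediate degeneracy. It is this \emph{uniform} (not merely generic) independence, forced by condition $(ii)$, that the proof must pin down.

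Finally, condition $(i^{*})$ is the easy direction. Having reconstructed $(A_{0},b_{0})$ and knowing all its own added statistics, the attacker knows the exact ledger state after the $d+1$ probes and knows $\beta_{\mathrm{true}}$; it then takes $h(\rho_{d+1})$ to be any batch of points whose addition produces a final output different from $\beta_{\mathrm{true}}$, which it can guarantee in every run because it predicts the effect of its own update exactly. Well-definedness of the step counter and of reading $\rho_{i-1}$ from $O_{j}$ is handled by the implementation lemma stated above for the triangulation template.
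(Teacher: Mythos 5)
Your skeleton---sufficient statistics $(A_0,b_0)$, the measurement identity $A_0(\rho_i-\rho_0)=\tilde q_i-\tilde P_i\rho_i$, recovery of $A_0$ from $d+1$ independent linear measurements, and a final update $h$ exploiting exact knowledge of the post-probe ledger state---is exactly the structure of the paper's proof, and the count $\ell=d+2$ is justified the same way. But there is a genuine gap, and it sits precisely where you flag ``the main obstacle'': you never exhibit probes $f_i$ for which the independence holds \emph{uniformly} over every hidden state $A_0\succ 0$, and the design you sketch (batches fit exactly by the hyperplane with coefficients $\rho_{i-1}+Re_i$, hoping to steer the output increment along $e_i$) is, as you yourself observe, defeated by the scale ambiguity $(A_0,b_0)\mapsto(sA_0,sb_0)$. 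Saying one must ``verify the large-mass and small-mass regimes and rule out any intermediate degeneracy'' is a description of the missing argument, not the argument. Since this uniform invertibility is the entire content of condition $(ii)$---everything else in your write-up and in the paper is routine algebra---the proposal is incomplete at its heart.

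The paper closes this gap with a different probe design and, importantly, by changing \emph{which} matrix is proved invertible. Probe $i$ is a \emph{single} point $X_i$ with $X_i^1=X_i^i=1$ (zeros elsewhere) and target $y_i$ equal to the current prediction at $X_i$ plus $1$, i.e., a point deliberately off the current fit. This choice exploits only a scale-invariant fact (Claim~\ref{clm:d_lr_point_line}: adding an off-fit point always moves the estimator and its prediction at that point), so the $(sA_0,sb_0)$ ambiguity you worried about is harmless. Structurally, the resulting measurement vector $w^i$ (the right-hand side of the normal-equation difference, after subtracting the corrections from earlier probes) is supported on coordinates $\{1,\dots,i\}$ with nonzero $i$-th entry, so the matrix $W=[w^1,\dots,w^{d+1}]$ is triangular-like and invertible for \emph{every} pre-attack state---no case analysis at all. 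The paper then never needs to prove directly that your displacement matrix $V$ (its $M_\rho$) is invertible: from $A_0M_\rho=W$ with $A_0$ and $W$ invertible, invertibility of $M_\rho$ is automatic and $A_0=WM_\rho^{-1}$. That inversion of logic---make the side you control ($W$) independent by construction, and let the side you do not control ($V$, $M_\rho$) inherit invertibility from $A_0$---is the single idea your proposal is missing, and it is what turns your otherwise correct outline into a proof.
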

\begin{proof}

We shortly outline the overall flow of the proof. First, we give explicit construction of the $\{f_i\}_{1\leq i \leq d+1}$ functions. This suffices to show that condition $(ii)$ is satisfied, which means there is an inference function $i(O_j)$ that maps observed histories under $s_j$ to the last algorithm output under $truth_j$. Given that inference function, we construct $h$ and show that with it condition $(i*)$ is satisfied. We give a formal treatment of inference function in Definition~\ref{def:inference_function} and Lemma~\ref{lem:inference_function} of Appendix~\ref{app:max_illustration}, but for our purpose in this proof it suffices that it is a map as specified.  

\textbf{Construction of $\{f_i\}_{1\leq i \leq d+1}$ and condition $(ii)$:}

Let $$\rho_{i-1} =  \begin{bmatrix} \rho_{i-1}^1 \\ \ldots \\ \rho_{i-1}^{d+1} \end{bmatrix}$$
be the last algorithm output before the application of $f_i$. Define

$f_i(\rho_{i-1}) = (X_i, y_i)$,

where $X_i$ is the $(d+1) \times 1$ vector with $X_i^1 = X_i^i = 1$, and $$y_i = \begin{cases}\rho_{i-1}^1 + 1 & i = 1 \\ \rho_{i-1}^1 + \rho_{i-1}^i + 1 & 2 \leq i \leq d+1\end{cases}.$$ 

Let $R_{\ninput, s_j}$ be a run with some nature-input $\ninput$ and $s_j$ the triangulation attack with the specified $f_1,\ldots, f_{d+1}$ (and any function $h$).
Consider all the factual updates by agents $\neq j$ induced by $\ninput$. They are each of the form of $(X',y')$, where $X'$ is of size $n \times (d+1)$ and $y'$ is $n \times 1$, and where $n$ is the number of data points in the update. To consider all factual updates of the agents $\neq j$, we can vertically concatenate these matrices. Let this aggregate be denoted $X_{F,-j}, y_{F,-j}$. 
Similarly, let $X_{F,j}, y_{F,j}$ be the concatenation of all factual updates by $j$ . Let the concatenation of all ledger updates by $j$ before submission of any of the $f_i$ updates be $X_{L,j}, y_{L,j}$. Recall that we denote by $\rho_0, \ldots, \rho_{d+1}$ the algorithm outputs (right before, and after each $f_i$, e.g. $f_1$ is applied after $\rho_0$ and generates $\rho_1$).
Let $X'_i, y'_i$ be the (concatenated) inputs to the $d-LR$ algorithm that generate $\rho_i$. In terms of the defined variables above, we can write:

\begin{equation}
\label{eq:triangulation_linear}
\begin{split}
& (X'_i)^T X'_i =  (X_{F,-j})^T X_{F,-j} + (X_{L,j})^T X_{L,j} + \sum_{t=1}^i (X_i)^T X_i, \\
& (X'_i)^T y'_i =  (X_{F,-j})^T y_{F,-j} + (X_{L,j})^T y_{L,j} + \sum_{t=1}^i (X_i)^T y_i,
\end{split}
\end{equation}

To show that condition $(ii)$ holds, it suffices to show that we can infer the last algorithm output $\rho_{truth}$ of the run $R_{\ninput, truth_j}$. Let $(X_F, y_F)$ be the concatenation of all factual updates of all agents, then it is the input that generates $\rho_{truth}$, and it holds that:

\begin{equation}
\label{eq:factual_linear}
\begin{split} & (X_F)^T X_F =  (X_{F,-j})^T X_{F,-j} + (X_{F,j})^T X_{F,j} \\
& (X_F)^T y_F =  (X_{F,-j})^T y_{F,-j} + (X_{F,j})^T y_{F,j}
\end{split}
\end{equation}
 
Since in Equation~\ref{eq:factual_linear}, besides $X_{F,-j}, y_{F,-j}$, all RHS variables are observed history under $s_j$, we conclude that it is enough to deduce $X_{F,-j}^TX_{F,-j}, X_{F,-j}^Ty_{F,-j}$ in order to infer $(X_F)^T X_F, (X_F)^T y_F$, and thus also the last algorithm output under $truth_j$ which is 
$((X_F)^T X_F)^{-1} (X_F)^T y_F$.

Let $(X_{F,-j})^T X_{F,-j} \stackrel{def}{=} \begin{bmatrix} \Sigma_{1,1} & \ldots & \Sigma_{1,d+1} \\ \ldots \\ \Sigma_{d+1,1} & \ldots & \Sigma_{d+1,d+1} \end{bmatrix}, (X_{F,-j})^T y_{F,-j} = \begin{bmatrix} \sigma_1 \\ \ldots \\ \sigma_{d+1} \end{bmatrix}$.

For every $0\leq i\leq d+1$, we have \begin{equation}
\label{eq:output_linear}
    (X'_i)^T X'_i \rho_i = (X'_i)^T y'_i.
\end{equation} 

By the construction of $f_i$, we can rewrite these equations in the following way. Let $D^i$ be the $(d+1)\times(d+1)$ matrix with $D^i_{1,1} = D^i_{i,1} = D^i_{1,i} = D^i_{i,i} = 1$, and all other elements zero. Let $v^i$ be the $1\times(d+1)$ vector with 
$$v^i_1 = v^i_i = \begin{cases}
\rho_0^1 + 1 & i = 1 \\
\rho_{i-1}^1 + \rho_{i-1}^{i} + 1 & i > 1
\end{cases},$$
and all other elements zero. 

We have for $0 \leq i \leq d+1$:

\begin{equation}
\label{eq:output_linear_detailed}
    (\begin{bmatrix} \Sigma_{1,1} & \ldots & \Sigma_{1,d+1} \\
    \ldots \\
    \Sigma_{d+1,1} & \ldots & \Sigma_{d+1,d+1} 
    \end{bmatrix} +  \sum_{t=1}^i D^i)\rho_i = \begin{bmatrix} \sigma_1 \\ \ldots \\ \sigma_{d+1}\end{bmatrix} + \sum_{t=1}^i v^i.
\end{equation} 

If we examine the differences between the $i$ equation and the $i-1$ equation, we get for $1 \leq i \leq d+1$, 

\begin{equation}
\label{eq:output_linear_diff}
    (\begin{bmatrix} \Sigma_{1,1} & \ldots & \Sigma_{1,d+1} \\
    \ldots \\
    \Sigma_{d+1,1} & \ldots & \Sigma_{d+1,d+1} 
    \end{bmatrix} + \sum_{t=1}^{i-1}D^t )(\rho_i - \rho_{i-1}) = v^i - D^i \rho_i.
\end{equation} 

Notice that for any $1\leq i \leq d+1$, $v^i - D^i\rho_i $ is not the zero vector. If it was, since $(X_{F,-j})^TX_{F,-j} + \sum_{t=1}^{i-1}D^t$ is invertible, we will have that $\rho_i = \rho_{i-1}$, which would contradict the following claim:

\begin{restatable}[]{claim2}{dlrPointLine}
\label{clm:d_lr_point_line}

For every algorithm output $\rho = \begin{bmatrix} \alpha_1 \\ \ldots \\ \alpha_{d+1}\end{bmatrix}$, and a single point update $U* = (X^* = \begin{bmatrix} 1 & x_1 & \ldots & x_d \end{bmatrix}, y^*)$ so that $X^* \cdot \rho \neq y^*$, the new algorithm output $\rho'$ for the data with $U*$ satisfies $\rho' \neq \rho$, and has a different value at $X^*$ than $X^* \cdot \rho$. 
\end{restatable}

The proof of the claim is given in Appendix~\ref{app:lr}. 

Moreover, $v^i - D^i \rho_i$ by definition is a vector that has all elements $0$ besides element $1$ and $i$ that are $\rho_{i-1}^1 + \rho_{i-1}^i + 1 - \rho_i^1 - \rho_i^i \neq 0$ (since it is not a zero vector), and so the $i$-th element of the vector is non-zero. Therefore, for the vector $w^i \stackrel{def}{=} v^i - D^i\rho_i - \sum_{t=1}^{i-1} D^t (\rho_t - \rho_{t-1})$, the $i$-th element is non-zero as well (Since $\sum_{t=1}^{i-1} D^t \rho_i$ has all elements with index higher than $i-1$ as zero). For any $w^t$ with $t\leq i-1$, all elements with index higher than $i-1$ are zero. Therefore, the set $\{w^i\}_{1\leq i \leq d+1}$ is linearly independent, and the matrix $W$ where each column $i$ is $w^i$ is invertible. If we let $M_{\rho}$ be the matrix where each column $i$ is $\rho_i - \rho_{i-1}$, we can rewrite Eq~\ref{eq:output_linear_diff} as $(X_{F,-j})^TX_{F,-j} M_{\rho} W^{-1} = I$,
where $I$ is the $(d+1)\times (d+1)$ identity matrix. We conclude that $M_{\rho}$ is invertible and $(X_{F,-j})^TX_{F,-j} = W M_{\rho}^{-1}$. We can directly calculate the RHS of this expression from the observed history under $s_j$, and by the first equation of Eq~\ref{eq:output_linear_detailed} we can infer $(X_{F,-j})^T y_{F,-j} = (X_{F,-j})^TX_{F,-j} \rho_0$, overall concluding the proof for condition $(ii)$. 

\textbf{Construction of $h$ and condition (i*)}. Let $i$ be the inference function (which existence is guaranteed by the previous discussion) that matches observed histories running $s_j$ with the true algorithm outputs under $truth_j$. I.e., we has  $i(O_j) = \rho_{truth}$. Let the last algorithm output in $O_j$ be $\rho_{last} = \begin{bmatrix} \rho_{last}^1 \\ \ldots \\ \rho_{last}^{d+1} \end{bmatrix}$. Let $h(O_j) = \begin{cases}
(\begin{bmatrix} 1 & 0 & \ldots & 0 \end{bmatrix}, \begin{bmatrix} \rho_{last}^1 + 1 \end{bmatrix}) & i(O_j) = \rho_{last}\\
\text{No update} & Otherwise.
\end{cases}$. 

If $\rho_{last} \neq \rho_{truth}$, $h$ does not send an update, and so for the nature-input that has observed history $O_j$ the last algorithm output under $s_j$ is different than that under $truth_j$, as required by condition $(i*)$. 

 If $\rho_{last} = \rho_{truth}$, $h$ sends an update with a point $(X,y)$ that satisfies $X \cdot \rho_{last} = \rho_{last}^1 \neq \rho_{last}^1 + 1 = y$. By Claim~\ref{clm:d_lr_point_line}, the resulting algorithm output is different from $i(O_j)$. 

\end{proof}

We demonstrate the construction and inference of the triangulation attack in an open-source implementation \href{https://github.com/yotam-gafni/triangulation\_attack}{https://github.com/yotam-gafni/triangulation\_attack}. Figure~\ref{fig:tri_script} shows a run of the attack for a random example for $2$-LR.  

\begin{figure}
    \includegraphics[scale=0.6]{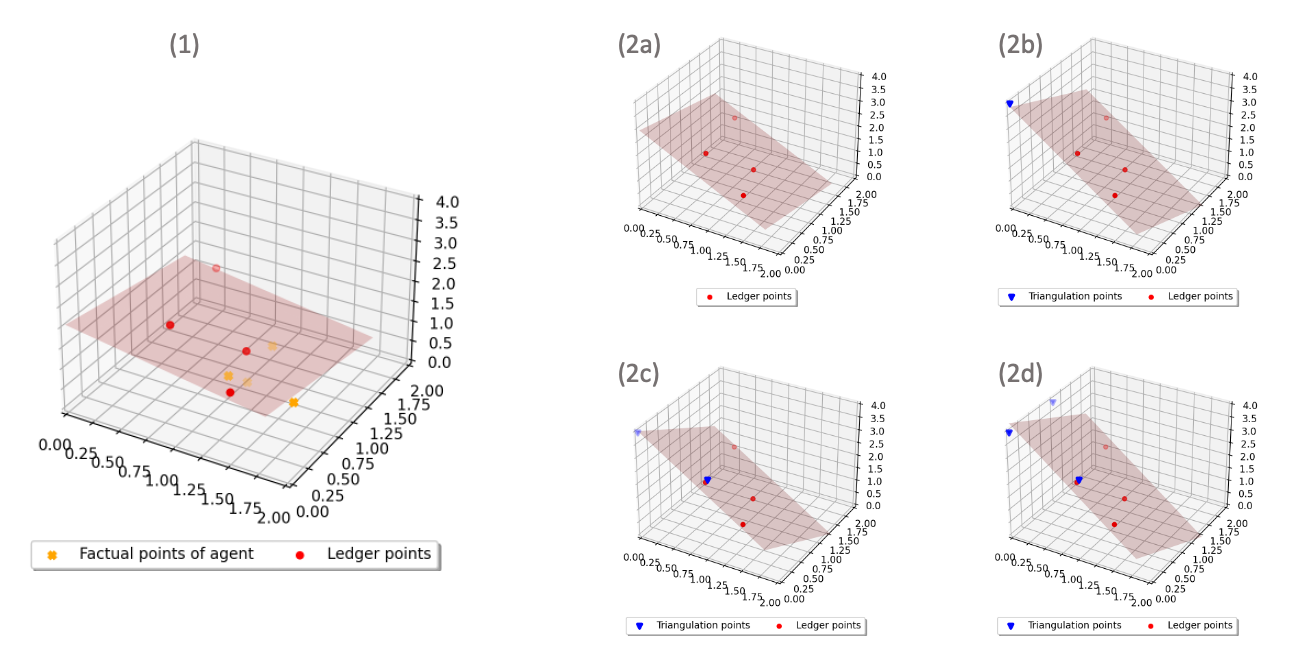}
    \caption{A script-run triangulation attack for $2$-LR. The round red points represent an existing state of the ledger. The yellow x points (in (1)) represent a new factual update for the strategic agent. The red line in (1) represents the resulting linear regression estimator, if the agent reports truthfully. The four figures (2a)-(2d) show the flow of our triangulation attack construction. In (2a) is the last state of the ledger before the triangulation, with no triangulation point sent by the strategic agent. The rest of (2b)-(2d) consecutively add triangulation points (blue triangles). At the end of the triangulation attack (after (2d)), the linear regression estimator is different than in (1). It is possible to infer the estimator in (1) using knowledge of the triangulation points and estimators of (2a)-(2d) (without knowledge of the red points).}
    \label{fig:tri_script}
\end{figure}

We show an asymptotically matching lower bound for triangulation attacks. 

\begin{theorem}
\label{thm:d_lr_lower_bound}
There is no triangulation attack for $d-LR$ with $d - 2$ or less functions (i.e., $\ell \leq d-2$).
\end{theorem}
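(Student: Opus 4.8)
The plan is to prove the contrapositive of usefulness: for \emph{any} choice of $f_1,\dots,f_{\ell-1},h$ with $\ell\le d-2$, the induced strategy $s_j$ fails condition $(ii)$ of Definition~\ref{def:l_ncc}, so it cannot be a successful (vulnerable*) attack. I would exhibit, for every such attack, two nature-inputs that are \emph{indistinguishable} to $j$ under $s_j$ but have different true outputs under $truth_j$. Concretely, I work with a minimal family of nature-inputs: a single non-$j$ agent reports all of ``the others''' data in one factual update, with Gram matrix $\Sigma:=(X_{F,-j})^TX_{F,-j}$ (taken positive definite) and moment vector $\sigma:=(X_{F,-j})^Ty_{F,-j}$; then $j$ receives a single factual point $p=(X_p,y_p)$ that triggers the attack. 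Since $j$ has no prior ledger updates, the pre-attack output is $\rho_0=\Sigma^{-1}\sigma$, and under $truth_j$ the final output is $\rho_{truth}=(\Sigma+P)^{-1}(\sigma+c_P)$ with $P=X_pX_p^T$ and $c_P=X_py_p$.

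The core is an \emph{information-bottleneck} step. Because $s_j$ hides $p$ and its updates depend only on previously observed outputs, the entire observed trajectory $\rho_0,\rho_1,\dots,\rho_\ell$ (the outputs before and after each of $j$'s at most $\ell$ ledger updates) is a function of $(\Sigma,\sigma)$ alone, independent of $p$. Writing the accumulated matrices as $M_i=\Sigma+C_i$ and $b_i=\sigma+c_i$, where $C_i,c_i$ are determined by $j$'s own (observed) updates, the defining equations $M_i\rho_i=b_i$ give, upon differencing consecutive steps, that $\Sigma(\rho_i-\rho_{i-1})$ equals a vector $j$ can compute from $O_j$. Hence all that $j$ can extract about the hidden data is the action of $\Sigma$ on $V:=\operatorname{span}\{\rho_i-\rho_{i-1}\}_{i=1}^{\ell}$, together with the relation $\sigma=\Sigma\rho_0$. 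The crucial counting is $\dim V\le \ell\le d-2$, so $\dim V^\perp\ge 3$.

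Next I would build the indistinguishable perturbation. For any symmetric $E$ with $EV=0$ (equivalently, $E$ supported on $V^\perp\times V^\perp$), set $\sigma':=\sigma+E\rho_0$ and $\Sigma':=\Sigma+E$. The identity $(\Sigma'+C_i)\rho_i=(\Sigma+C_i)\rho_i+E\rho_i=\sigma+c_i+E\rho_0=\sigma'+c_i$ — using $E\rho_i=E\rho_0$ because $\rho_i-\rho_0\in V$ — shows $\rho_i$ also solves the normal equations for $(\Sigma',\sigma')$, so every $\rho_i$, hence every update of $s_j$ and thus all of $O_j(R_{\ninput,s_j})$, is unchanged. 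By the same identity at the $truth_j$ step, $\rho_{truth}$ is unchanged iff $E(\rho_{truth}-\rho_0)=0$; so it suffices to choose $p$ with $\rho_{truth}-\rho_0\notin V$ and then take $E=ww^T$, where $w$ is the nonzero $V^\perp$-component of $\rho_{truth}-\rho_0$. Such a $p$ exists because, by the Sherman--Morrison computation underlying Claim~\ref{clm:d_lr_point_line}, $\rho_{truth}-\rho_0$ is a nonzero multiple of $\Sigma^{-1}X_p$, and as $X_p$ ranges over first-coordinate-$1$ vectors these directions sweep out a set not contained in the $(\le d-2)$-dimensional subspace $V$. The two resulting nature-inputs $\ninput,\ninput'$ then satisfy $O_j(R_{\ninput,s_j})=O_j(R_{\ninput',s_j})$ while $\rho_{truth}\ne\rho_{truth}'$ forces $O_j(R_{\ninput,truth_j})\ne O_j(R_{\ninput',truth_j})$, contradicting condition $(ii)$.

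I expect two steps to carry the technical weight. The first is handling \emph{adaptivity}: the exploration directions $V$ themselves depend on $(\Sigma,\sigma)$, so I cannot merely perturb the linear system the attacker solves — I must preserve the \emph{whole} realized trajectory; the clean implication $EV=0\Rightarrow E(\rho_i-\rho_0)=0$ is exactly what makes the perturbation invisible at every step, including the final $h$-step. The second is \emph{realizability}: both $(\Sigma,\sigma)$ and $(\Sigma',\sigma')$ must arise from genuine datasets whose points have first coordinate $1$. Here I use that a symmetric matrix is such a Gram matrix exactly when it is positive semidefinite with positive $(1,1)$-entry (Schur complement), that any moment vector is then attainable, and that integer multiplicities are recovered by scaling (which leaves $\rho_0,\rho_{truth}$ invariant); choosing $E$ to be a small positive-semidefinite rank-one matrix keeps $\Sigma'\succ0$. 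The codimension-$\ge 3$ slack guaranteed by $\ell\le d-2$ is what gives room to respect the first-coordinate constraint and still move $\rho_{truth}$ off $\rho_0+V$, which is the reason the argument is stated at this threshold rather than pushed into the regime left open between $d-2$ and the $d+2$ upper bound of Theorem~\ref{thm:d_lr_upper_bound}.
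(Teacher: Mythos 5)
Your proposal follows the same underlying strategy as the paper's proof: restrict to nature-inputs in which the unknown data belongs to agents other than $j$, note that a triangulation attack's trajectory yields only $\ell+1\le d-1$ vector equations about the hidden Gram matrix, exhibit two distinct instances consistent with the same trajectory, and then choose $j$'s own factual input so that the two true outputs differ. Where the paper forces $\rho_0=\mathbf{0}$ (by taking the others' targets to be zero) and argues underdetermination by equation counting, you keep a general $(\Sigma,\sigma)$, compensate with $\sigma'=\sigma+E\rho_0$, and make the underdetermination explicit via a rank-one perturbation supported on $V^\perp$; your Sherman--Morrison choice of $p$ plays the role of the paper's choice $\bar{y}_j=(\bar{X}_j)^{-1}v$. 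Your version is in fact more constructive than the paper's: the invariance computation $(\Sigma'+C_i)\rho_i=\sigma'+c_i$, proved by induction along the realized trajectory, is exactly the justification that the paper's step ``there are some $X_1\neq X_2$ with the same observed history'' leaves implicit, and it handles adaptivity correctly.

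The one step that does not hold as written is the realizability patch: the claim that ``integer multiplicities are recovered by scaling (which leaves $\rho_0,\rho_{truth}$ invariant)'' is false. Under $(\Sigma,\sigma)\mapsto(N\Sigma,N\sigma)$ neither the attack trajectory (the attacker's accumulated $C_i,c_i$ do not scale) nor $\rho_{truth}$ (the single point $p$ does not scale) is invariant, so you cannot first argue with real-weighted data and clear denominators afterwards. Fortunately the repair is routine and cleaner than scaling: fix from the outset an exactly realizable dataset for $(\Sigma,\sigma)$ (integer $(1,1)$-entry, e.g.\ a simplex-type construction), run your argument, and let $w$ be the $V^\perp$-component of $\rho_{truth}-\rho_0$. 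If $w_1\neq 0$, take $E=(w/w_1)(w/w_1)^T$; then $\Sigma'=\Sigma+E$ and $\sigma'=\sigma+E\rho_0$ amount to handing the other agent one additional data point $w/w_1$ (first coordinate $1$) with target $(w/w_1)^T\rho_0$, i.e.\ a point lying on the current regression hyperplane, so $(\Sigma',\sigma')$ is trivially a genuine dataset. If $w_1=0$, then $E=ww^T$ has $E_{1,1}=0$, the point count is unchanged, and the mean/second-moment realization you cite (the Schur-complement condition) produces a dataset with the same number of points. With that repair your argument is complete and correct, and matches the paper's bound for the same reason: the $\ell\le d-2$ slack is what guarantees a nonzero admissible perturbation direction orthogonal to everything the attacker observed.
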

\begin{proof}
Consider all nature-input elements that are of the form $<i,(X,y)>, <j,(\bar{X}_j,\bar{y}_j)>$, where $X$ is a $(d+1)\times(d+1)$ matrix, and $y$ is the $(d+1) \times 1$ zero vector. $(\bar{X}_j,\bar{y}_j)$ of the same sizes but without any restriction over $\bar{y}_j$. We show that for any triangulation attack $s_j$, we can find two nature-inputs among this family with different observed history under $truth_j$, but the same observed history under $s_j$. 

By the choice of $y$, the first algorithm output satisfies $\rho_0 = (X^TX)^{-1}X^T y = \textbf{0}$. As we know from the proof of Theorem~\ref{thm:d_lr_upper_bound}, in particular Equation~\ref{eq:output_linear_detailed} (where it was done for a specific given triangulation attack), that the attack generates $d-1$ vector equations for $X^T X$ (including the one over $\rho_0$). We also know that the first row of $X^T$ is all $1$ elements. We can make it a stricter constraint by demanding that the first row of $X^TX$ is of the form $\begin{bmatrix} d+1 & 0 & \ldots & 0\end{bmatrix}$. Then, the principal sub-matrix of $X^TX$ (removing the first row and column) is a general PSD matrix (as a principal submatrix of the $X^TX$ PSD matrix). To uniquely determine such a matrix of size $d\times d$, we need $d$ vector equations, but the triangulation equations only yield $d-1$ such equations. So there are some $X_1 \neq X_2$ that are in the family of nature-inputs and have the same observed history under $s_j$. Fix some invertible $\bar{X}_j$. Since $(X_2^TX_2 + \bar{X}_j^T \bar{X}_j) \neq (X_1^TX_1 + \bar{X}_j^T \bar{X}_j)$, there must be some $v$ so that

$$(X_2^TX_2 + \bar{X}_j^T \bar{X}_j)^{-1} v \neq (X_1^TX_1 + \bar{X}_j^T \bar{X}_j)^{-1} v.$$

If $\bar{X}_j \bar{y}_j = v$, then the last algorithm outputs under $truth_j$ are different for $X_1, X_2$, which holds choosing $\bar{y}_j = (\bar{X}_j)^{-1} v$.
\end{proof}

\section{The Periodic Communication Protocol}
\label{sec:alternate_models}
The periodic communication protocol simulates a system where update rounds are initiated by the system manager (or ledger), and not by the agents themselves. After each round, the ledger shares the algorithm output with all agents. 
The definitions of section 2 remain consistent with this periodic setting, with the following minor changes:
\begin{itemize}

\item Since all updates by different agents in a certain round are aggregated together, the distinction of $\ell$ subsequent updates becomes irrelevant and we omit it. 

\item An identifier of the round number $r$ is added to each nature-input element. That is, each element is $<j,U,r>$, with an agent $j\in \agents$, an update $U$, and a round number $r$.\footnote{Round numbers are assumed to have natural properties: They are monotonically increasing
with later elements of the nature-input series, each agent has at most one nature-input element assigned to it per round. The first round is $r=1$.}

\end{itemize}

\begin{figure}
    \includegraphics[scale=0.8]{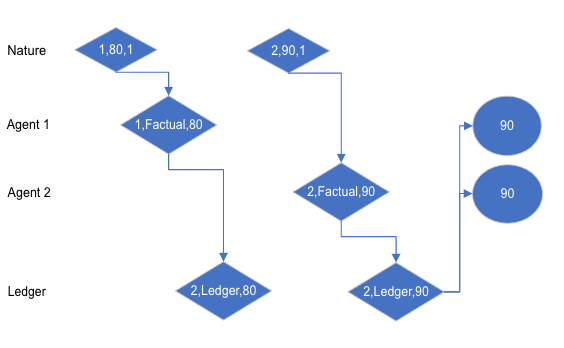}
    \caption{A periodic protocol run for $\ninput = (<1,90,1>, <2,90,1>)$. Both agents are truthful.}
    \label{fig:example_periodic}
\end{figure}

\begin{protocol}
\SetNoFillComment
\SetAlgoLined
\DontPrintSemicolon
\KwIn{Nature-input $\ninput$}
\KwOut{Full Messaging History}

Let $r_{max} = \max_{<j,U,r> \in \ninput} r$.

\tcc{For each round of updates}
\For {$\bar{r}:=1$ to $r_{max}$} {

    \tcc{For each update in round $\bar{r}$}
    \For{ Element $<j,U_{fact},r>$ with $r = \bar{r}$}
    {
        Nature sends a message to $j$ with $<j,Factual, U_{fact}>$;
    
    }
    \For {agent $i := 1$ to $n$} {
        \If{agent $i$ wishes to send a ledger update $U_{ledg}$} {
                $i$ sends a message to Ledger with $<i, Ledger, U_{ledg}>$;
        
        }
    }
    Ledger sends a message to all with $\rho$'s algorithm output over all the past ledger updates;  
}
 \caption{The periodic communication protocol}
 \label{proto:periodic}
\end{protocol}

We now show that indeed periodic communication is strictly less vulnerable to attacks, both for $k$-center and $d-LR$. 

\begin{theorem}
\label{thm:lr_periodic}
$d-LR$ is not NCC-vulnerable in the periodic communication protocol.
\end{theorem}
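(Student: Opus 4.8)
The plan is to prove the contrapositive of vulnerability: I will show that \emph{every} strategy $s_j$ satisfying condition $(i)$ of Definition~\ref{def:l_ncc} must violate condition $(ii)$, so no single strategy can satisfy both. Two structural features of the periodic setting drive the argument. First, a truthful agent in the periodic protocol submits exactly its factual update each round and never reacts to the broadcast outputs; hence, for a fixed nature-input, the ledger contributions of all agents $i \neq j$ are identical in the run under $s_j$ and the run under $truth_j$, so the only discrepancy between the two runs comes from $j$'s own ledger updates. Second, the $d$-LR output depends on its input only through the additive sufficient statistics $(\mathbf{X}^T\mathbf{X}, \mathbf{X}^T\mathbf{y})$, which I abbreviate $(A,b)$ with output $A^{-1}b$; contributions from different rounds and different agents simply add.

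Fix $s_j$ satisfying $(i)$ and let $\ninput^\ast$ be a witness, where the final output differs under $s_j$ and $truth_j$. By the first observation the accumulated ledger statistics can differ only through $j$'s contributions, so there is a \emph{first} round $r^\ast$ at which $j$'s ledger increment $(A_j^L,b_j^L)$ differs from its factual increment $(A_j^F,b_j^F)$; set $(M,m)=(A_j^L-A_j^F,\,b_j^L-b_j^F)\neq(0,0)$. Because $r^\ast$ is the first deviating round, the cumulative ledger state entering it coincides under $s_j$ and $truth_j$; call $(P_A,P_b)$ the cumulative $s_j$-statistics comprising everything through round $r^\ast-1$ together with $j$'s round-$r^\ast$ ledger update, so the matching $truth_j$-statistics are $(P_A-M,\,P_b-m)$. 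I then build two nature-inputs $\ninput,\ninput'$ that share the prefix of $\ninput^\ast$ as seen by $j$ through round $r^\ast$ (same factual updates to $j$, hence the same deviating round-$r^\ast$ ledger update by $j$), take $r^\ast$ to be the last round, and differ only in the other agents' round-$r^\ast$ data. Forking solely on the other agents' round-$r^\ast$ data cannot change $j$'s round-$r^\ast$ ledger update, since that update is committed before the round's output is observed.

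Let the other agents' round-$r^\ast$ contributions be $(\Delta_A,\delta_b)$ in $\ninput$ and $(\Delta'_A,\delta'_b)$ in $\ninput'$. I pick a target coefficient $\beta$ and set $\delta_b=(P_A+\Delta_A)\beta-P_b$ and $\delta'_b=(P_A+\Delta'_A)\beta-P_b$, which forces the final $s_j$-output to equal $\beta$ in both inputs; since all of $j$'s observations then coincide, $O_j(R_{\ninput,s_j})=O_j(R_{\ninput',s_j})$. Writing $T=P_A+\Delta_A$ and $T'=P_A+\Delta'_A$, a direct computation gives the $truth_j$-outputs $\gamma=\beta+(T-M)^{-1}(M\beta-m)$ and $\gamma'=\beta+(T'-M)^{-1}(M\beta-m)$, so $\gamma-\gamma'=\big[(T-M)^{-1}-(T'-M)^{-1}\big](M\beta-m)$. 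Choosing $\beta$ with $u:=M\beta-m\neq 0$ (possible because $(M,m)\neq(0,0)$) and then choosing $T\neq T'$ so that $(T-M)^{-1}u\neq(T'-M)^{-1}u$ makes $\gamma\neq\gamma'$, hence $O_j(R_{\ninput,truth_j})\neq O_j(R_{\ninput',truth_j})$. This is exactly a violation of condition $(ii)$, completing the contrapositive and showing $d$-LR is not NCC-vulnerable under periodic communication.

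The main technical obstacle is \emph{realizability}: the increments $(\Delta_A,\delta_b)$ must arise from admissible data points, whose feature vectors have first coordinate normalized to $1$, and all relevant design matrices must remain full rank so every output is defined. I would resolve this by noting that $S\mapsto S^{-1}u$ is non-constant along any ray $S=S_0+cI$ (otherwise $u=0$), and that increments built from many points pushed far out in a chosen direction generate a full-dimensional cone of admissible positive-definite increments; this gives enough freedom to realize invertible totals $T\neq T'$ with $T-M,T'-M$ invertible that separate $u$. A secondary point to handle carefully is that $r^\ast$ must be the \emph{first} deviating round, which is precisely what makes the pre-round states under $s_j$ and $truth_j$ coincide and thereby collapses all earlier accumulated deviations out of the final algebra; and the fork must sit in the last round so that the divergence in the broadcast output cannot feed back into $j$'s later behavior.
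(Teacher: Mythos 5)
Your proof is correct, but it takes a genuinely different route from the paper's. The paper obtains Theorem~\ref{thm:lr_periodic} as a corollary of a general lemma (Lemma~\ref{lem:periodic_minproblem}): any multi-set algorithm given by a separable, non-negative minimization problem with unique solutions is not NCC-vulnerable under periodic communication. Its confounding pair consists of the witness input $\ninput$ from condition $(i)$ and the same input with $\lambda$ multiset copies of $S'$ (the attacker's own corrupted ledger data) appended to the last round: homogeneity of the separable cost pins the $s_j$-output at $\rho'$, while the strict cost gaps, with $\lambda > \Delta/\delta$, drag the truthful optimum away from $\rho$; no data-realizability question arises there, because the flooding data is a verbatim replica of data already on the ledger. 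You instead exploit the closed form of $d$-LR: you isolate the first round at which the deviation in sufficient statistics $(M,m)\neq(0,0)$ occurs, fork two inputs on the other agents' last-round data, solve for response vectors that force both $s_j$-outputs to a common $\beta$, and separate the truthful outputs via $\gamma-\gamma' = \bigl[(T-M)^{-1}-(T'-M)^{-1}\bigr](M\beta-m)$. Both arguments hinge on the same structural fact that you correctly flag: in the periodic protocol the deviator commits its round-$r^\ast$ update before that round's output is broadcast, so a fork confined to the last round cannot alter its behavior. What the paper's route buys is generality (one lemma covering a whole class of estimation tasks) and zero realizability overhead; what yours buys is an explicit, LR-specific construction identifying exactly which quantity ($M\beta-m$, the deviation in sufficient statistics) the confounding pair amplifies, without invoking the abstract cost-function framework. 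The one step you leave as a sketch---realizing $(\Delta_A,\delta_b)$ and $(\Delta_A+cI,\delta'_b)$ as Gram statistics of actual data points with leading coordinate $1$ (so the $(1,1)$ entry is an integer point count), with $T$, $T'$, $T-M$, $T'-M$ all invertible---is the price of that specificity; your ray argument (non-constancy of $S\mapsto S^{-1}u$ along $S_0+cI$ unless $u=0$) and the positive-definiteness observations do close it, but it should be written out in full, since it has no counterpart in the paper's replication trick.
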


We prove this theorem using a more general lemma. We first define three useful properties of a minimization task:
\begin{definition}

A multi-set \emph{minimization problem} $C$ is of the form $\rho(S) = \arg \min_{\rho'} C(S,\rho')$, where $S$ is the algorithm input, $C$ is a cost function and $\rho'$ is some possible algorithm output.  

A minimization problem $C$ is \emph{separable} if $C(S_1 \uplus S_2, \rho) =  C(S_1, \rho) + C(S_2, \rho)$. Separable minimization problems are also homogeneous in the sense that: $C(S \times \lambda, \rho) = \lambda C(S, \rho)$. 

A minimization problem has \emph{a unique solution} if for every input $S$ it has a single algorithm output $\rho'$ that attains the optimal goal.

A minimization problem is \emph{non-negative} if for every input $S$ and possible algorithm output $\rho'$, $C(S, \rho') \geq 0$. 
\end{definition}

We know that $d-LR$ under the restriction mentioned (independent columns) has a unique solution. It is also immediate from its definition as an optimization problem that it satisfies separability. 
Theorem~\ref{thm:lr_periodic} now follows on the following general lemma:

\begin{figure}
    \includegraphics[scale=0.6]{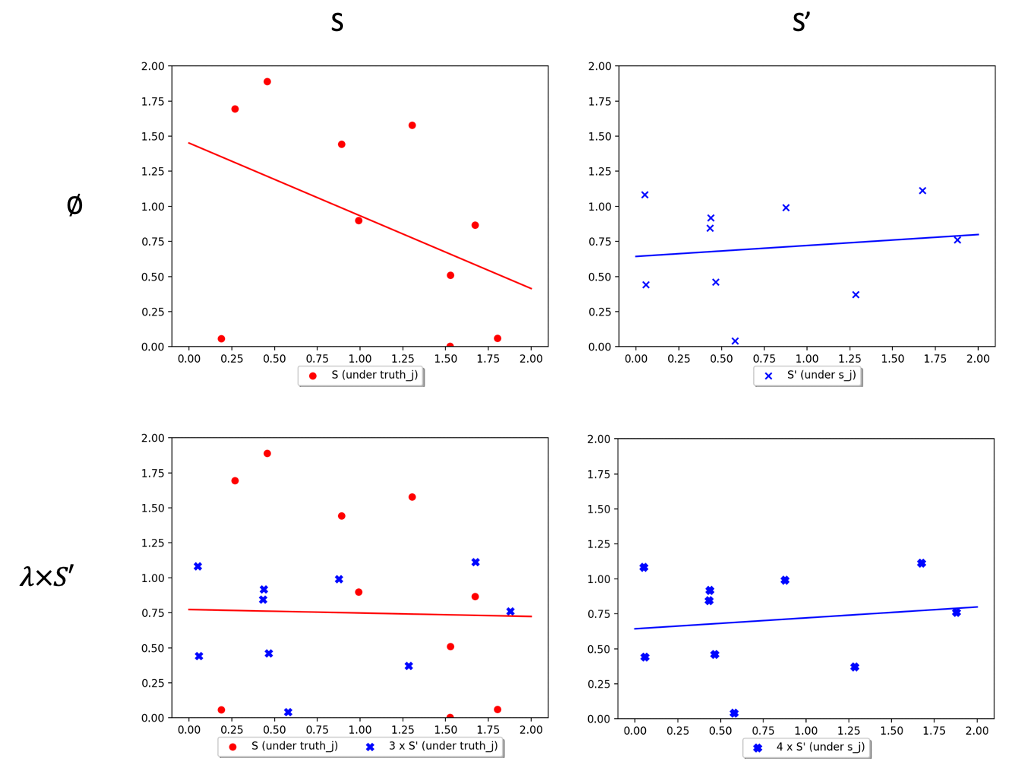}
    \caption{
    Demonstration of the proof of Lemma~\ref{lem:periodic_minproblem}. Under $s_j$, the estimator for $1$-LR is the same whether the other agent \textit{additionally} submits $\emptyset$ or $3 \times S'$, but not so under $truth_j$.}
    \label{fig:periodic_min}
\end{figure}

\begin{lemma}
\label{lem:periodic_minproblem}
Any multi-set algorithm $\rho$ that can be formalized as a minimization problem with separable, non-negative minimization goal $C$ with a unique solution is not NCC-vulnerable in the periodic communication protocol.
\end{lemma}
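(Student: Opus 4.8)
The plan is to establish non-vulnerability one strategy at a time: I would show that any deviation $s_j$ that satisfies condition $(i)$ must violate condition $(ii)$, so that no $s_j$ can satisfy both and the algorithm is not vulnerable. So suppose $s_j$ satisfies $(i)$, witnessed by some nature-input $\ninput_0$, and let $T_s$ and $T_t$ be the final ledger multi-sets (the multi-set sums $\uplus$ of all ledger updates) produced under $s_j$ and under $truth_j$ respectively. Write $\hat\rho_s = \rho(T_s)$ and $\hat\rho_t = \rho(T_t)$; condition $(i)$ says exactly that $\hat\rho_s \neq \hat\rho_t$. The goal is to manufacture two nature-inputs that agent $j$ cannot distinguish under $s_j$ but whose truthful outputs differ, contradicting $(ii)$.

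The construction exploits a structural feature of the periodic protocol: an agent's ledger update in a round is a function only of past broadcast outputs and of that agent's own factual updates, so it cannot depend on the same-round factual data of other agents. I would set $\ninput' = \ninput_0$, and let $\ninput$ be $\ninput_0$ with one extra factual update — $\lambda$ copies of $T_s$ (that is, $T_s \times \lambda$ in the notation of homogeneity) — handed to some other agent $i \neq j$ in the final round, for an integer $\lambda$ fixed later. Because $j$'s round submissions see neither $i$'s current-round data nor any later broadcast, $j$ plays identically in both runs under $s_j$, so the only observation that could differ is the final broadcast. Under $s_j$ the final ledger in $\ninput$ becomes $T_s \uplus (T_s \times \lambda) = T_s \times (1+\lambda)$, and by homogeneity $C(T_s \times (1+\lambda),\cdot) = (1+\lambda)\,C(T_s,\cdot)$ has the same unique minimizer as $C(T_s,\cdot)$, namely $\hat\rho_s$. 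Hence the final output equals $\hat\rho_s$ in both runs, and therefore $O_j(R_{\ninput,s_j}) = O_j(R_{\ninput',s_j})$.

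It remains to show the two runs are distinguishable under $truth_j$, i.e. $\rho(T_t \uplus (T_s \times \lambda)) \neq \hat\rho_t$ for a suitable $\lambda$; this is where separability, non-negativity and uniqueness enter. By separability the truthful final output of $\ninput$ is $\arg\min_{\rho'}\bigl(C(T_t,\rho') + \lambda\,C(T_s,\rho')\bigr)$, while that of $\ninput'$ is $\hat\rho_t$. Comparing the candidates $\hat\rho_t$ and $\hat\rho_s$, the difference of objective values is $\bigl(C(T_t,\hat\rho_t)-C(T_t,\hat\rho_s)\bigr) + \lambda\bigl(C(T_s,\hat\rho_t)-C(T_s,\hat\rho_s)\bigr)$, and since $\hat\rho_s$ is the unique minimizer of $C(T_s,\cdot)$ while $\hat\rho_t \neq \hat\rho_s$, the coefficient of $\lambda$ is strictly positive (non-negativity guarantees the scaled problem stays well posed). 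Thus for all sufficiently large $\lambda$ the objective at $\hat\rho_t$ strictly exceeds that at $\hat\rho_s$, so $\hat\rho_t$ is no longer the minimizer and $\rho(T_t \uplus (T_s \times \lambda)) \neq \hat\rho_t$. Fixing such a $\lambda$ yields $O_j(R_{\ninput,truth_j}) \neq O_j(R_{\ninput',truth_j})$ together with $O_j(R_{\ninput,s_j}) = O_j(R_{\ninput',s_j})$, which directly contradicts condition $(ii)$; hence no $s_j$ meets both conditions.

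I expect the main obstacle to be the careful bookkeeping of observed histories in the periodic protocol: one must argue rigorously that inserting $i$'s extra factual update alters neither $j$'s own factual and ledger updates nor any broadcast other than the final one, and that $j$'s final-round ledger update is genuinely independent of $i$'s same-round data (so that matching the final broadcast suffices to match the entire $O_j$). Once that is pinned down, the homogeneity and uniqueness computations are routine; the only subtlety to flag is that distinguishability under $truth_j$ holds only for $\lambda$ large enough, so the argument produces the existence of a single separating pair of nature-inputs, which is precisely what is required to falsify the universally-quantified condition $(ii)$.
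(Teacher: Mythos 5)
Your proof is correct and is essentially the paper's own argument: both constructions inject $\lambda$ copies of the attacker's corrupted ledger multi-set (your $T_s$, the paper's $S'$) through another agent in the final round, use homogeneity to conclude the output under $s_j$ is unchanged, and use separability together with uniqueness of the minimizer to show that the truthful output must move once $\lambda$ is large, contradicting condition $(ii)$. The only cosmetic difference is that the paper fixes an explicit $\lambda = \lceil \Delta/\delta \rceil + 1$ where you argue with ``$\lambda$ sufficiently large.''
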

\begin{proof}
Assume the algorithm is NCC-vulnerable in periodic communication with some strategy $s_j$. By condition $(i)$, there is nature input $\ninput$ so that the last algorithm output under $truth_j$ is $\rho$ and under $s_j$ is $\rho'$. Let $S, S'$ be some underlying input to generate $\rho, \rho'$ respectively. Since $\rho, \rho'$ are unique solutions, it must hold that $0 \leq C(S', \rho') < C(S', \rho), 0 \leq C(S, \rho) < C(S, \rho')$. Let $\Delta = C(S, \rho') - C(S, \rho), \delta = C(S',\rho) - C(S', \rho'), \lambda = \lceil \frac{\Delta}{\delta} \rceil + 1$. Now assume that some agent $\neq j$ sends $S' \times \lambda$ in the last round of $\ninput$ (call this extension $\ninput'$. If all agents already send an update in this round, add $S' \times \lambda$ to one of these agents' update). Under $s_j$, we have that $C(S' + (S' \times \lambda), \hat{\rho}) = (\lambda + 1)C(S', \hat{\rho})$ and so $\rho$ remains the unique solution (The argmin does not change under multiplication of the cost function). Under $truth_j$, we have 
\[
\begin{split}
& C(S + (S' \times \lambda), \rho') = C(S, \rho') + \lambda C(S', \rho') = C(S, \rho) + \Delta + \lambda(C(S',\rho) - \delta) = \\
& C(S \cup (S' \times \lambda), \rho) + \Delta - \lambda \delta < C(S \cup (S' \times \lambda), \rho),
\end{split}
\]
and so $\rho$ is not the optimal algorithm output. 

We thus have a violation of condition $(ii)$: There are two nature inputs ($\ninput, \ninput'$) with the same observed history under $s_j$ but different under $truth_j$. 
\end{proof}

In the appendix, we prove a similar result for $k$-center. The result also holds for $k$-median and is done by extending the construction of Corollary~\ref{cor:k_median}. 

\begin{restatable}[]{theorem}{kcenterPeriodic}
\label{thm:kcenter_periodic}
$k$-center is not vulnerable under the periodic communication protocol. 
\end{restatable}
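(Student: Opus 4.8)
The plan is to argue by contradiction: suppose $k$-center were NCC-vulnerable in the periodic protocol via some agent $j$ and strategy $s_j$ satisfying conditions $(i)$ and $(ii)$, and then exhibit two nature-inputs $\ninput_1,\ninput_2$ that share the same observed history under $s_j$ but yield different last algorithm outputs under $truth_j$, contradicting $(ii)$. By condition $(i)$ I would fix a witnessing nature-input $\ninput$ and write $L$ for the union of all ledger updates and $F$ for the union of all factual updates along $R_{\ninput,s_j}$, so that $\rho_{k\text{-center}}(L)=\rho'\neq\rho=\rho_{k\text{-center}}(F)$. The whole construction rests on two features already available: $k$-center is a set-choice algorithm with forceable winners (Corollary~\ref{cor:k_center}), and the attacker $j$ only ever observes algorithm outputs, never the underlying ledger set itself.

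I would then split according to the partition of non-truthful strategies into explicitly-lying and omission strategies. If $s_j$ is explicitly-lying, there is a point $x\in L\setminus F$, and this is exactly the situation handled in Claim~\ref{clm:condition2_forceable_domain}: a truthful agent $i\neq j$ can, in the final round, be given two factual updates that differ only in whether they contain the forceable-winner witness for $x$. Because $x$ is already present in the ledger (it was lied in), toggling $x$ in $i$'s honest report leaves every ledger union, and hence every algorithm output, unchanged, so the two runs share the same $O_j$ under $s_j$; but under $truth_j$ the point $x$ is absent from the factual data, so forceable winners places $x$ in one last output while set-choice excludes it from the other. The only thing to check is that the set identity $L\cup E_2=L\cup E_1$ from the proof of Claim~\ref{clm:condition2_forceable_domain} survives the round-based aggregation of the periodic protocol, which it does since ledger unions are still taken over the same points.

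The hard part is the omission case, where $L\subseteq F$ and no fabricated point is ever present. Here the omitted points $Q=F_j\setminus L_j$ are known to the attacker and satisfy $F=L\cup Q$, so if $j$ could reconstruct the ledger set $L$ it could recover $\rho(F)=\rho(L\cup Q)$; the attack can only succeed if the observed sequence of $k$-center outputs already pins down $\rho(L\cup Q)$. I would defeat this by exploiting that a $k$-center output (a $k$-subset together with its covering radius) badly under-determines its input. Concretely, I would construct two alternative data sets $A,B$ for the agents $\neq j$, substituted for the honest reports of those agents while keeping $j$'s omission pattern $L_j$ and factual set $F_j$ fixed, such that $\rho(A\cup L_j)=\rho(B\cup L_j)$ yet $\rho(A\cup F_j)=\rho(A\cup L_j\cup Q)\neq\rho(B\cup L_j\cup Q)=\rho(B\cup F_j)$. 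Taking $\ninput_1,\ninput_2$ to realize $A,B$ in a single update round (so that the only output the attacker sees is the final one, and so that $j$'s ledger update $L_j$, being a function of $F_j$ alone before any output, is forced to agree in both) then gives identical observed histories under $s_j$ but distinct last outputs under $truth_j$, the desired contradiction.

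I expect the main obstacle to be exactly this last construction: arranging two point configurations with identical $k$-center output whose outputs diverge once the same omitted points are restored. I would build it using the geometric freedom that underlies forceable winners, placing clusters at geometrically separated scales as in Example~\ref{ex:kcenter_omission_attack} and Corollary~\ref{cor:k_center}, and I would lean on the smallest-norm tie-breaking rule to make two distinct configurations return the identical optimal center set before the omitted points tip the tie apart. Care is needed to ensure the intermediate outputs coincide as well, which is why I would keep the witnessing scenario to a single round, and that the perturbed inputs still respect that only $j$ deviates from truthfulness, so that the agents $\neq j$ faithfully report $A$ and $B$.
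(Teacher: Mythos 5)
Your skeleton matches the paper's proof: split non-truthful strategies into explicitly-lying and omission, port Claim~\ref{clm:condition2_forceable_domain} to the periodic protocol for the explicitly-lying case (that half of your argument is essentially the paper's), and for the omission case build two alternative honest inputs that yield the same output when combined with $j$'s ledger contribution but different outputs once the omitted data is restored. The gap is in how you embed that last construction into the protocol, namely the single-round compression you use to ``force $L_j$ to agree in both'' runs. The strategy $s_j$ is a fixed, history-dependent map; its ledger update in your single-round scenario is $s_j(\langle j, Factual, F_j\rangle)$, and nothing guarantees this omits anything. An omission strategy can satisfy condition $(i)$ only on multi-round inputs --- for instance a sneak-attack-style strategy that omits points only after observing a particular prior algorithm output --- and be exactly truthful on every history that contains no prior output. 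For such a strategy your construction collapses: in the single-round runs $L_j = F_j$, so the two properties you require, $\rho(A\cup L_j)=\rho(B\cup L_j)$ and $\rho(A\cup F_j)\neq\rho(B\cup F_j)$, contradict each other. Replacing the other agents' reports inside the original multi-round witnessing run does not work either, since that changes the algorithm outputs $j$ observes and hence possibly its ledger behavior --- which is presumably why you compressed to one round in the first place.

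The paper closes exactly this hole by appending rather than replacing: it keeps the entire witnessing nature-input $\ninput$ as a common prefix and inserts the two distinguishing factual updates $E_1, E_2$ (given to an agent $i \neq j$) into the \emph{last round} of $\ninput$. Since in the periodic protocol agent $j$ never observes other agents' factual or ledger updates before the round's output, its full sequence of ledger updates --- including the omission of the point $x$ --- is reproduced verbatim in both extended runs, so the observed histories under $s_j$ agree by construction. The geometric content is isolated in Claim~\ref{clm:k_center}: for any $S \ni x$ there exist $\bar S$ (forcing $x$ to be a center, as in Corollary~\ref{cor:k_center}) and $\bar S'$ (making $x$ irrelevant) with $\rho\left((S\cup\bar S)\setminus\{x\}\right) = \rho\left((S\cup\bar S')\setminus\{x\}\right) = \rho(S\cup\bar S')$ while $x \in \rho(S\cup\bar S)$; this pair plays the role of your $(A,B)$, and it is achieved through genuinely different costs rather than reliance on the tie-breaking rule. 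With the ``append into the last round'' device in place of your single-round compression, your plan becomes the paper's proof.
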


\section{Discussion}
\label{sec:discussion}

In this work, we lay the groundwork for the study of exclusivity attacks in long-term data sharing. We present two protocols for long-term communication and show that the choice of protocol, as well as the number of Sybil identities an attacker may control, matters for the safety of the system. We do so by analyzing two representative and popular algorithms of supervised and unsupervised learning, namely linear regression and k-center. We show that the distinction between omission and explicitly-lying attacks has theoretical significance, and present two general attack templates that are useful to consider against any possible algorithm. However, we believe that these are the first steps and that there is much more to study regarding systems’ safety from exclusivity attacks. We now expand on a few possible future directions. 

\subsection{Further Model Extensions}

\subsubsection{Varying Temporal Resilience}

In our model, condition $(ii)$ requires one pair of confounding nature-inputs, i.e., one state of the world where the agent can not infer the true best model fit. However, when dealing with collaborative computing, some organizations may have different ``temporal resilience''. While some depend daily on the learned parameters, others operate in longer time scales such as issuing weekly or monthly reports. In such cases, an attacker $j$ may be willing to incur being confounded, as long as the confusion is bounded within a small number of algorithm outputs, after which it can again infer the true parameters. Adjusting the model to accommodate such heterogeneous preferences and how they affect the results can be interesting. 

\subsubsection{Horizontal vs. Vertical Data Split}

In multi-agent collaborative learning tasks, a common distinction is between ``Horizontal'' and ``Vertical'' data split \cite{yang2019federated}. A horizontal split is when the set of features is shared among agents, but the data points may differ. Vertical split is when the data points are related to the same users, but the feature space is different among agents. While our model is general and can accommodate both cases, our results largely deal with the horizontal case, and it would be interesting to look into the vertical case as well. 

\subsubsection{Application to Silo-ed Federated Learning}

A leading motivation for developing the theory in this work is to apply it to federated learning, in particular in the context where the contributors are a few large firms (referred to as Silo-ed federated learning in \cite{kairouz2019advances}). As we know from the case of the Average algorithm \cite{ncc4ml}, changing the amount of information shared with the agents can determine the safety of the collaboration (In the Average case, whether the denominator of the number of samples is shared alongside the average itself). 
Applied in the context of federated learning, design choices such as split learning \cite{gupta2018split}, keeping hyper-parameters at the aggregator level and not the client level (Notice that this is in contrast with the design of the popular FederatedAveraging algorithm \cite{federatedAveraging}!), or varying the accuracy of the model supplied to agents \cite{modelDegradationFederating}, can be promising ideas to deter NCC attacks. Another issue that needs to be addressed is that of learning being resistant to permutations over the order of samples \cite{ravanbakhsh2016deepPermutation}. In the set and multi-set algorithms we treat in this work, the order of the updates does not matter for the algorithm output, and so it is possible to strategically control how and when to share factual data, for example in sneak attacks. However, in training neural nets, the order of feeding samples can change the final model (See the discussion in 1.4.2 in \cite{montavon2012neural}). 

\subsubsection{Relaxing the NCC Requirements and Approximate Mechanisms}
The requirement from exclusivity attacks to be able to infer the \emph{exact} true algorithm output seems harsh. This is especially true when dealing with statistical estimators, that by their nature are prone to noise. So, it is interesting to see how do the positive results of our work (in the sense of no-vulnerability of an algorithm under some settings) hold when attackers are willing to suffer some $\epsilon$ degradation of the algorithm output in comparison with the true result (under some appropriate metric). Such a discussion also opens the gate to a mechanism design problem. Once agents are willing to suffer some degradation of the model, it is possible to consider approximate algorithms that have better incentive-compatible properties than the standard algorithm. However, simply adding noise to an algorithm does not guarantee that it is safer. For example, consider that we take the one-shot sum algorithm and add some $0$-mean noise with expected variance $\epsilon$. Under $truth_j$, agent $j$ will have a difference of $\epsilon$ from the true sum in expectation. If agent $j$ attacks by adding $\delta$ to its true number in the ledger update, and then reduces $\delta$ from the algorithm output, its expected deviation from the true sum remains $\epsilon$, but it is able (by choosing $\delta$ right) to mislead others on average by more than $\epsilon$. Therefore, we remark that a good approximate algorithm to deter attacks should somehow guarantee that the attack process amplifies the error to hurt the attacker. 

Another interesting option that is possible once dealing with a relaxation of NCC is to have different algorithm outputs sent to different agents, i.e., the protocol does not share a global algorithm output each time with all agents, but gives a different response to each, hopefully in a way that helps enforce incentive compatibility. 

\section*{Acknowledgements}

Yotam Gafni and Moshe Tennenholtz were supported by the European Research Council (ERC) under the European Union’s Horizon 2020 research and innovation programme (Grant No. 740435). 

\bibliographystyle{ACM-Reference-Format}
\bibliography{refer}

\appendix

\section{Related Work}
\label{app:related_work}

\subsection{Collaborative Machine Learning}

Data sharing between companies and institutions is an emerging phenomenon in the data economy \cite{data_sharing_europe}, still under-performing its full potential. Many companies, cloud services, and government initiatives \cite{gaia-x} offer frameworks and APIs to facilitate such exchange, as well as some decentralized blockchain services \cite{ocean_protocol}. However, the current focus of these services is in organizing the nuts and bolts of such procedures (e.g. in terms of software, scale, and cyber-security), and not in ensuring incentive-compatibility, in particular dealing with exclusivity attacks. 

Mechanisms based on VCG and the Shapley-value were suggested as a method to construct general incentive-compatible mechanisms for data collaboration in  \cite{IC_distributed_classification}.  We highlight three main aspects of that work that differ from our approach: They assume the existence of a test set for each agent to compare other firms' inputs (separate from the data set it communicates with others); They consider a one-shot process rather than a continuous one, and they use monetary transfers while we consider data sharing a barter between firms without exchanging money. The assumptions we share with this work are that the true output of the machine learning algorithm is the best parameter possible to learn and that the agents' utilities are the NCC framework utilities. 

In \cite{collab_ai_blockchain} the authors consider continuous data sharing implemented by a blockchain, with various incentive mechanisms depending on the assumptions for agents' incentives. An essential difference with our work is that the data is assumed to be posted publicly and is thus known to all agents. This is an issue both by itself in terms of privacy, but also when designing incentives. As we will see, the uncertainty regarding other agents' data is essential for the safety of certain mechanisms under the NCC assumptions.

Federated learning is a popular framework for decentralized machine learning with private information \cite{federatedAveraging}. The general scheme has each agent perform stochastic gradient descent (SGD) by itself and share the gradients with an aggregator, in order to train a global model. The global model is public, and this is inherent to the operation of the mechanism since the agents are expected to calculate the gradients.
There is a natural free-rider attack (mentioned in \cite{kairouz2019advances}) in such cases where the agent shares no data (or, possibly, a small amount of the data it has) and later completes the training locally based on the global model and its remaining private data. This attack form fits within our framework of exclusivity attacks, and we discuss in Section~\ref{sec:discussion} how our insights may apply to it.  

There is a line of work that is orthogonal to ours \cite{workersIncent,crowdBC}, which focuses on assigning model training tasks to workers, in order to offload computation from being done by the central authority, or on-chain in the case of a decentralized blockchain. We note that mechanisms built for this task are different in nature and purpose from data sharing mechanisms. 

\subsection{Linear Regression and \texorpdfstring{$k$}---Center in Adversarial Settings}

In this work, we use linear regression and the $k$-Center and $k$-Median problems to examine our NCC utilities framework. 

Linear regression \cite{lr_book} is a well-known regression mechanism. We study Multiple Linear Regression with $d$ features.

In \cite{LRprivacy} the authors study linear regression with users that have privacy concerns.  In\cite{cai2015optimum} the authors suggest a mechanism using optimal monetary transfers to induce statistical estimation using reports by workers that exert effort to attain more precise estimations. The mechanism is shown to generalize to more general classes of regression than linear regression.  Following the framework of ``Dueling algorithms'' in \cite{immorlica2011dueling}, in \cite{ben2019regression} the authors consider firms optimizing their regression models to better satisfy a subset of the users relative to the opponent. In \cite{lr_strategic_sources} the authors consider firms that control the level of noise they add to the dependent variable, and aim to balance between privacy (more noise) and model accuracy (less). In  \cite{dekel2010incentive} the authors consider a general regression learning model where experts have strong opinions and wish to influence the resulting model in their favor. As one can see, there are many strategic reasons to manipulate regression tasks, but the NCC setting is a significant and understudied one. 

The $k$-center and $k$-median problems \cite{hakimi1964optimum,hochbaum1985best} are associated with clustering or facility location algorithms. Facility location problems were studied extensively in strategic settings \cite{facility_location_survey} \cite{moulin1980strategy}. The main focus is usually on strategic users, that may manipulate reporting of their location to influence the facility locations' outcome \cite{procaccia2013approximate}. For this purpose, strategy-proof mechanisms are developed, with the goal of a small approximation ratio relative to the optimal (without strategic consideration) algorithm. Our setting is different as we consider firms that acquired knowledge of users' preferences (or locations), and their goal of manipulation is not to benefit the users they have information about but to know the resulting aggregate outcome better than the other firms.

\section{Illustration of Preliminaries Using the Max Algorithm}
\label{app:max_illustration}

We define the max algorithm:

\begin{restatable}[]{definition}{algMax}

Each update is a real number. $\rho_{max}(\mathbf{U^t}) = \max_{1\leq i \leq t} U_i$. 

\end{restatable}

\begin{proposition}
\label{prop:max_vulnerable*}
max is not $\ell$-NCC-vulnerable* for any $\ell$.
\end{proposition}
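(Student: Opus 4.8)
The plan is to show that the max algorithm fails condition $(i*)$: for \emph{every} nature-input, there is some run under the attacker's strategy $s_j$ whose last algorithm output coincides with the last output under $truth_j$. The key structural observation is that $\rho_{max}$ is a set-choice algorithm in the sense of Definition~\ref{def:set_algs} (with updates being real numbers, or singletons), since the maximum is always one of the reported values. This suggests two natural routes, and I would check which is cleaner.

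The first route is to invoke the earlier Theorem that a set-choice algorithm with forceable winners is not $\ell$-NCC-vulnerable*. So I would first verify that max has forceable winners in the sense of Definition~\ref{def:forceable_winners}: given any set $S$ and a point $x\in S$, I need a set $\bar S$ with $x\notin\bar S$ so that $x\in\rho_{max}(S\cup\bar S)$. This is immediate: take $\bar S=\{x-1\}$ (or any single value strictly smaller than $x$ and distinct from $x$), so that the overall maximum is unaffected and $x$ remains the output whenever $x$ was already the maximum of $S$. Actually I must be careful — forceable winners requires that $x$ \emph{becomes} an output, not merely that it stays one. If $x$ is not the max of $S$, no set $\bar S$ disjoint from $x$ can promote $x$ to be the maximum. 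So max does \emph{not} have forceable winners in general, and I cannot directly apply that theorem.

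Therefore I would take the direct route, mirroring the first Claim in the proof of that Theorem. The heart of the matter is condition $(i*)$: it demands that under $s_j$ the last output differ from the truthful output in \emph{every} run. I would argue this is impossible for max. Consider any candidate attack strategy $s_j$ and pick a nature-input $\ninput$ in which agent $j$ receives a factual update with a value strictly larger than every other value that appears in $\ninput$, and which is the last element of $\ninput$. Under $truth_j$, agent $j$ reports this value, and since it dominates everything, the final output equals it. Under $s_j$, if the attacker is to change the last output it must report something other than this value; but then the figure shows (as in Figure~\ref{fig:example_continuous}) that the protocol lets the attacker respond to its own ledger updates, and I would use this to show the attacker is forced into a dilemma. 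More directly, I would construct a specific nature-input witnessing a \emph{coincidence}: choose $\ninput$ so that no factual update goes to $j$ at all (the empty-factual construction used in the first Claim). Then under $s_j$ agent $j$ can only inject values via ledger updates, but whatever maximal value the ledger attains, I can select the other agents' factual values so that the truthful run already produces that same maximum, making the two last outputs identical. The main obstacle will be handling the subtlety that $s_j$ is a function of the observed history and may behave differently across runs, so I must exhibit a single $\ninput$ (depending on $s_j$) on which the outputs coincide; the cleanest realization is to let all of $j$'s factual values be tiny and some other agent's factual value equal the largest value $s_j$ would ever report, forcing agreement regardless of $s_j$'s machinations.
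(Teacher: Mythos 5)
There is a genuine gap here: your entire plan rests on the claim that condition $(i*)$ by itself is unachievable for max, and that claim is false. Consider the \emph{overbidding} strategy $s_1$: whenever the last observed algorithm output $\rho_v$ is at least as large as every ledger update agent $1$ has sent so far, report $<1,Ledger,\rho_v+1>$ (and upon receiving a factual update $U$ before any algorithm output exists, report $U+1$). Under this strategy the last algorithm output of every run strictly exceeds the maximum of all factual values, hence always differs from the last output under $truth_1$, so $(i*)$ \emph{holds}. This same strategy defeats your fixed-point construction: you propose to plant, as another agent's factual value, ``the largest value $s_j$ would ever report,'' but against an overbidding strategy that quantity is not well defined --- whatever value $v$ you plant, the attacker observes the resulting output $v$ and responds with $v+1$, so the two final outputs never coincide. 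The set you are maximizing over depends on the nature-input you are trying to construct, and for this strategy that dependence has no fixed point. (Your first route was correctly abandoned: max indeed lacks forceable winners. But note that even that theorem never refutes $(i*)$; its first claim only shows that a strategy satisfying $(i*)$ must explicitly lie, and the contradiction is then derived against $(ii)$.)

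Any correct proof must therefore exploit condition $(ii)$, which is what the paper does: the obstruction is not that the attacker cannot always change the output, but that changing it destroys her ability to infer the truth. Concretely, take $\ninput=(<2,90>)$; condition $(i*)$ forces $s_1$ to push the ledger maximum to some $x>90$. Now extend $\ninput$ by a second factual value for agent $2$, either $\frac{1}{3}x+\frac{2}{3}\cdot 90$ or $\frac{2}{3}x+\frac{1}{3}\cdot 90$. Both values lie strictly between $90$ and $x$, so under $s_1$ the two extended runs produce identical observed histories (the output remains masked at $x$), while under $truth_1$ they end with different outputs. This violates $(ii)$, and the proposition follows. Your proposal, by trying to refute $(i*)$ alone, cannot be repaired into a proof of the statement.
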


\begin{proof}
Consider w.l.o.g. agent $1$ has a strategy $s_1$ that satisfies conditions $(i*)$ and $(ii)$. For the nature-input $\mathcal{I} = (<2,90>)$, under $truth_1$ agent $2$ receives a factual update $<2,Factual,90>$ and then updates with $<2,Ledger,90>$, resulting in algorithm output $90$. By condition $(i*)$, under $s_1$ the algorithm output after the full run must differ from $90$. Agent $1$ must thus update with at least one update of the form $<1,Ledger, x>$ with $x$ larger than $90$. Now consider the two nature-inputs $\mathcal{I}' = (<2,90>, <2,\frac{1}{3} \cdot x + \frac{2}{3} \cdot 90>), \mathcal{I}'' = (<2,90>, <2,\frac{2}{3} \cdot x + \frac{1}{3} \cdot 90>)$. Since $x \neq 90$, the observed histories under $truth_1$ for $\ninput', \ninput''$ are not the same, as the last algorithm output are  $\frac{1}{3} \cdot x + \frac{2}{3} \cdot 90 \neq \frac{2}{3} \cdot x + \frac{1}{3} \cdot 90$ respectively. I.e., 
\begin{equation}
\label{eq:observed_truth_neq}
O_1(R_{\ninput', truth_1}) \neq O_1(R_{\ninput'', truth_1}).
\end{equation}
Since the prefix of $\ninput', \ninput''$ is $\ninput$, we know that under $s_1$ by the end of the first round the algorithm output is $x$. After agent $2$ receives the second factual update and updates truthfully, the observed history (in both cases) for agent $1$ is $(90, <1,Ledger, x>, x, x)$ (where all but the second element are algorithm outputs). 
Any strategy $s_1$ response to this observed history will be the same for both nature-inputs, and thus the observed histories of the full run satisfy 
\begin{equation}
\label{eq:observed_s1_eq}
O_1(R_{\ninput', s_1}) = O_1(R_{\ninput'', s_1}).
\end{equation}
Equations~\ref{eq:observed_truth_neq},\ref{eq:observed_s1_eq} together contradict condition $(ii)$. 
\end{proof}

\begin{example}
\emph{max is $1$-NCC-vulnerable}

Consider agent $1$ with a strategy $s_1$ that upon a factual update for agent $1$, and given that there is a previous algorithm output and the last algorithm output is $\rho_v$,
updates with $<1,Ledger, \rho_v>$, i.e., the attacker repeats the last algorithm output as her own ledger update. Condition $(i)$ is satisfied: For the nature-input $\ninput = (<2, 100>, <1, 110>)$, the last algorithm output for the run with $s_1$ is $100$, while for the run with $truth_1$ it is $110$. Condition $(ii)$ is also satisfied: 
Consider two nature-inputs $\ninput, \ninput'$ that have the same observed run under $s_1$. Notice that the last algorithm output is the maximum over the other agents' truthful ledger updates. The last algorithm output under $truth_1$ is the maximum between other agents' ledger updates and agent $1$ maximum factual update, which is also observed under $s_1$. Therefore, the last algorithm output under $truth_1$ is determined by the observed history under $s_1$, and the natural way for the attacker to infer it is by taking the max over observed algorithm outputs and its own factual updates. 
\end{example}

We call such methods to construct the algorithm outputs under $truth_j$ out of the observed history $O_j$ an inference function.

\begin{definition}
\label{def:inference_function}
An \emph{inference function} is a function from observed histories $O_j$ to algorithm $\rho$ outputs. 
\end{definition}

\begin{lemma}
\label{lem:inference_function}
If there is an inference function $i_j$ so that for every run $R$ of nature-input $\ninput$ with $s_j$, $i_j(O_j(R)) = \rho$, where $\rho$ is the last algorithm output of the run of $\ninput$ with $truth_j$, then condition $(ii)$ holds for $s_j$. 
\end{lemma}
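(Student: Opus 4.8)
The plan is to establish condition $(ii)$ for $s_j$ by contraposition, using the inference function $i_j$ as the engine. The contrapositive of condition $(ii)$ reads: whenever two nature-inputs $\ninput, \ninput'$ satisfy $O_j(R_{\ninput,s_j}) = O_j(R_{\ninput',s_j})$, their truthful observed histories $O_j(R_{\ninput,truth_j})$ and $O_j(R_{\ninput',truth_j})$ cannot be separated in the way the definition forbids. So I would fix an arbitrary pair $\ninput, \ninput'$ whose observed histories under $s_j$ coincide and reason forward from that common history.

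The key step is a single application of $i_j$ to this common history. By the defining property of the inference function applied to the run $R_{\ninput,s_j}$, the value $i_j(O_j(R_{\ninput,s_j}))$ equals the last algorithm output of $R_{\ninput,truth_j}$; applying it instead to $R_{\ninput',s_j}$, the same value $i_j(O_j(R_{\ninput',s_j}))$ equals the last algorithm output of $R_{\ninput',truth_j}$. Since the two observed histories under $s_j$ are equal by assumption, their images under $i_j$ coincide, and hence the last algorithm outputs of the two truthful runs agree. This is the entire computational content of the argument, and it reuses nothing beyond the hypothesis of the lemma.

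To finish I would connect this agreement of final outputs to the statement of condition $(ii)$. Note that the factual updates of $j$ are determined by the nature-input and appear in $O_j$ no matter which strategy $j$ plays; thus $O_j(R_{\ninput,s_j}) = O_j(R_{\ninput',s_j})$ already forces $j$'s factual data (and therefore, under $truth_j$, $j$'s ledger updates) to match across the two runs. The only remaining source of separation between the truthful observed histories is the stream of shared algorithm outputs, and recovering the last of these is exactly what $i_j$ delivers, in line with the operational reading of condition $(ii)$ stated after Definition~\ref{def:l_ncc} (that the attacker can always infer the last true algorithm output).

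I expect the main obstacle to be precisely this last bridge: $O_j(R_{\cdot,truth_j})$ records more than its final output — intermediate outputs, and $j$'s own data — whereas $i_j$ pins down only the final output. The delicate point is therefore to argue that, under the common-$s_j$-observation hypothesis, matching final outputs is what condition $(ii)$ demands; I would discharge the extra components by invoking the strategy-independence of $j$'s factual updates noted above and treating the final algorithm output as the operative quantity, consistent with the NCC interpretation of the definition.
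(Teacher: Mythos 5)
Your reduction to the contrapositive and your use of the strategy-independence of $j$'s factual updates (hence, under $truth_j$, of $j$'s ledger updates) are both correct and match the paper. But there is a genuine gap at exactly the point you flag as ``the main obstacle,'' and the way you propose to discharge it does not work. Condition $(ii)$ compares the \emph{entire} observed histories $O_j(R_{\ninput,truth_j})$ and $O_j(R_{\ninput',truth_j})$, which contain every intermediate algorithm output, not just the final one. These intermediate outputs depend on the other agents' updates, which $j$ never observes directly, so they are not determined by $j$'s own factual data; and a single application of $i_j$ to the common history under $s_j$ pins down only the \emph{last} truthful output. Declaring the final output ``the operative quantity, consistent with the NCC interpretation'' amounts to weakening the statement of condition $(ii)$ rather than proving it: two nature-inputs can share $j$'s factual updates and the same final truthful output yet have different intermediate truthful outputs, in which case $O_j(R_{\ninput,truth_j}) \neq O_j(R_{\ninput',truth_j})$ even though your argument would certify equality.

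The missing idea, which is how the paper closes this gap, is to apply the inference function not once but to every prefix of the nature-input. First one shows $\ninput$ and $\ninput'$ must have the same length: otherwise the number of observable factual updates of $j$, or of algorithm outputs not triggered by $j$'s own ledger updates, would already differ under $s_j$. Then, for each $\ell$, the truncated nature-inputs $\ninput_{\ell}, \ninput'_{\ell}$ are themselves valid nature-inputs whose runs under $s_j$ are prefixes of the full runs, hence have equal observed histories; applying the hypothesis of the lemma to \emph{these} runs, $i_j$ recovers the truthful algorithm output after round $\ell$, which therefore agrees across the two nature-inputs for every $\ell$. This recovers the whole stream of truthful algorithm outputs, and together with your (correct) observation about $j$'s factual and ledger updates it yields equality of the full truthful observed histories, which is what condition $(ii)$ actually requires.
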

\begin{proof}
Assume by contradiction there are two nature-inputs $\ninput, \ninput'$ with the same $O_j$ when running with $s_j$. The nature-inputs must be of the same length $r$, otherwise, there would be a different amount of total factual updates, and thus either a different amount of algorithm updates not initiated by $j$ ledger updates, or a different amount of $j$ factual updates, both of which are observable. Let $\ninput_{\ell}, \ninput'_{\ell}$ be the nature-inputs of length $\ell$ that start the same as $\ninput, \ninput'$ but end after $\ell$ rounds. Since they have the same observed runs $O^{\ell}_j$ (parameterized by $\ell$), running with $true_j$ they must have $i_j(O^{\ell}_j)$ as the algorithm output after the round $\ell$. We conclude that all algorithm outputs identify for the two nature-inputs running with $truth_j$. The factual updates for $j$ also identify for both nature-inputs since $M^O_j$ identify, and since running with $truth_j$ the ledger updates by $j$ are a copy of the factual updates of $j$, they also identify for both nature-inputs. We conclude that the observable runs for both nature-inputs running with $truth_j$ identify, in compliance with condition $(ii)$. 
\end{proof}

\section{Technical Lemmas for the Strategy Templates}
\label{app:st_lemmas}

\sneakAttackImp*

\begin{proof}
We show that the condition to start attack (line $1$) was previously invoked by $s_j$ during the run of the continuous protocol iff $O_j$ contains three subsequent elements, $<j,Factual,U_{cond}>,<j,Ledger, U_{attack}>, \rho_{v1}$ for some $\rho_{v1}$: If the condition was invoked, then at that point the last element in $O_j$ was $<j,Factual, U_{cond}>$, and the agent updates with $<j,Ledger,U_{attack}>$, and finally the ledger updates all with some algorithm output $\rho_{v1}$. If it was not invoked before, then the condition to end attack (in line $3$) was not as well (as it depends on the condition to start attack being previously invoked). Therefore all ledger updates by $j$ are of the form of an algorithm output following some $<j,Ledger,U>$ after $<j,Factual,U>$. Since $U_{cond} \neq U_{attack}$, the pattern  $<j,Factual,U_{cond}>,<j,Ledger, U_{attack}>, \rho_{v1}$ can not appear.

We can thus use the above signature (together with the additional conditions given in line $1$) to decide whether to invoke the condition to start the attack. 

The condition to end attack is invoked iff $O_j$ last four elements are either of the form $$<j,Factual,U_{cond}>, <j,Ledger, U_{attack}>, \rho_{v1}, \rho_{v2},$$ for some algorithm outputs $\rho_{v1}, \rho_{v2}$, or of the form 
$$<j,Factual,U_{cond}>, <j,Ledger, U_{attack}>, \rho_{v1}, <j,Factual,U>,$$
for some algorithm output $\rho_{v1}$ and factual update $U$. We verify this signature matches the verbal description. If this signature appears, by our conclusion, the subsequent elements $<j,Factual,U_{cond}>, <j,Ledger, U_{attack}>, \rho_{v1}$ show that the condition to start attack was invoked. If we see a factual update or an algorithm output after that, it can only result in the continuous protocol from some agent receiving a factual update. Since these are the last elements in $O_j$, and there is no additional ledger update, the condition to end attack could not have previously been invoked since it only happens after the condition to start attack was invoked and sends an additional Ledger update. 

\end{proof}

\sneakAttackFormal*

\begin{proof}

For condition $(ii)$, consider two nature-inputs $\ninput, \ninput'$ with the same observed run 
\begin{equation}
\label{eq:ninput_cond}
    O_j(R_{\ninput,s_j}) = O_j(R_{\ninput', s_j}).
\end{equation}

If $O_j(R_{\ninput, s_j}) = O_j(R_{\ninput, truth_j})$, it means that the condition to start attack (line $1$) of $s_j$ was not invoked during the run. Thus, there is no update $<j, Factual, U_{cond}>$ in $O_j(R_{\ninput,s_j})$, and by Eq.~\ref{eq:ninput_cond} also not in $O_j(R_{\ninput',s_j})$. We therefore conclude that the condition to start attack is not invoked in the run of $\ninput'$ with $s_j$. Since the condition to end attack (line $3$) is only invoked if at a previous stage the condition to start attack was invoked, and so we conclude that all updates by $s_j$ for $\ninput'$ are truthful, and therefore $O_j(R_{\ninput',s_j}) = O_j(R_{\ninput', truth_j})$. All in all, the equations establish that $O_j(R_{\ninput, truth_j}) = O_j(R_{\ninput', truth_j})$. 

If $O_j(R_{\ninput, s_j}) \neq O_j(R_{\ninput, truth_j})$, then the condition to start attack must have been invoked during the run and there is a first update $U = <j, Factual, U_{cond}>$ in $O_j(R_{\ninput, s_j})$ such that the preceding algorithm output is $\rho_{cond}$. Let the index of this element be $i$. Before this update $U$, $s_j$ only responds truthfully, and so the observed runs satisfy $O_j(R_{\ninput, truth_j})_{1:i-1} = O_j(R_{\ninput, truth_j})_{1:i-1}$. Factual updates are preserved across observed runs with different strategies ($s_j$ vs $truth_j$), and so $O_j(R_{\ninput, truth_j})_i = O_j(R_{\ninput, s_j})_i = O_j(R_{\ninput',s_j})_i = O_j(R_{\ninput',truth_j})_i$. Since $truth_j$ follows each factual update of $j$ with a ledger update with the same $U$, we have $O_j(R_{\ninput, truth_j})_{i+1} = <j, Ledger, U_{cond}> = O_j(R_{\ninput', truth_j})_{i+1}$. Since we require that agent $j$ can infer the algorithm output $\rho_{infer}$ under $truth_j$ immediately after the start of the attack, and the observed histories up until this algorithm output identify for $R_{\ninput, s_j}, R_{\ninput',s_j}$, it must hold that $O_j(R_{\ninput, truth_j})_{i+2} = \rho_{infer} = O_j(R_{\ninput', truth_j})_{i+2}$.

We assume for simplicity that the first factual update after the factual update in index $i$ is an update of $j$. The argument can be extended to the case where it is not with more details. 

As noted before factual updates are preserved in the observed histories of different strategies, and so if there are at most $i+2$ elements in $O_j(R_{\ninput, truth_j})$, then the factual update that is element $i$ corresponds to the last element in $\ninput$. Therefore it is also the last factual update in $O_j(R_{\ninput, s_j})$, and by Eq~\ref{eq:ninput_cond} also in $O_j(R_{\ninput', s_j})$, and by the same argument in $O_j(R_{\ninput', truth_j})$. Since in runs with $truth_j$ each factual update of $j$ is followed exactly by a ledger update of $j$ and an algorithm output, we conclude that there are no more elements after $i+2$ for $O_j(R_{\ninput', s_j})$, and so it identifies with $O_j(R_{\ninput, s_j})$ (as we've shown all elements are the same). 

If there is a factual update at index $i+3$ in $O_j(R_{\ninput, truth_j})$, since we assume it is for $j$, it identifies for the two nature-inputs' observed histories with $s_j$, and as factual updates do not depend on strategy, we also have $O_j(R_{\ninput, truth_j})_{i+3} = O_j(R_{\ninput', truth_j})_{i+3}$. The subsequent ledger update and algorithm output thus identify as well. Note that by the condition to end attack, for both $\ninput, \ninput'$, the ledger update at step $i+4$ by $s_j$ is $U_{re-sync}$.  

At any step after $i+4$, the union of points sent throughout the ledger history identifies with the union of points in the factual history, by our requirement that $U_{re-sync} = U_{cond} \setminus U_{attack}$. All in all this shows that observed histories under $s_j$ identify $\implies$ observed histories under $truth_j$ identify, which is logically equivalent to condition $(ii)$.

\end{proof}

\triangulationAttackImp*

\begin{proof}
We specify the way to implement the required predicates, without giving the full proof, which goes by an argument similar to the proof structure of Lemma~\ref{sneakAttackImp}. 

For line $1$, there is a factual update after the last ledger update by agent $j$, iff it is either a factual update of $j$ or of another agent $a \neq j$. 

There is such factual update of an agent $a \neq j$ iff the last two elements in $O_j$ are both algorithm outputs, or there is only one element in $O_j$ and it is an algorithm output (this is the case where there are no ledger updates by agent $j$). The case where it is a factual update of agent $j$ is immediately visible in $O_j$ (agent $j$ can see its own factual updates). 

For line $2$, a triangulation attack is ongoing iff the pattern above is matched, followed by a series of pairs of the form $<j,Ledger, U>, \rho_{v1}$ for some ledger update $U$ and algorithm output $\rho_{v1}$. If the number of pairs is $i$, then the triangulation attack previously executed $i$ steps and we are at step $i+1$ of the attack. 

For line $3$, we reach it only given that $i$ is defined, and by the two possible signatures that make it happen (either starting a new triangulation attack or continuing an ongoing triangulation attack) assume that there are at least two algorithm outputs in $O_j$, hence it is valid to define $\rho_{i-1}$ as the last algorithm output in $O_j$. 

\end{proof}

\section{Missing Proofs for \texorpdfstring{$k$}---Center and \texorpdfstring{$k$}---Median}
\label{app:kcenter}

\kmedian*

\begin{proof}
$k$-median fits the definition of a set-choice algorithm. We show that it has forceable winners. We show it for $3$-median with the domain $R$, but the proof for general $k,R^d$ is similar. Let $x\in R$ and a set $S$ with $x\in S$. Let the symmetric completion of $S$ around $x$ be $O = \cup_{s\in S} \{s,2x-s\}$. I.e., every point $s = x + \epsilon$ is added the matching $s' = x - \epsilon$. Let $C = \max\{ \sum_{o\in O} |o - x|, 1\}$. Let $\bar{S} = O \cup \{x + 10C, x+100C\}$. For this construction, the following claim holds and completes the proof: 

\begin{claim}
$\rho(S \cup \bar{S}) = \{x, x+ 10C, x+100C\}$.
\end{claim}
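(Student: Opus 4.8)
The plan is to establish the Claim by showing that, among all choices of three centers from the input, the triple $\{x, x+10C, x+100C\}$ is the unique minimizer of the $k$-median (sum-of-distances) cost. First I would simplify the input: since $S \subseteq O$ (each $s\in S$ already appears in $O$), we have $S \cup \bar{S} = O \cup \{x+10C, x+100C\}$. Two geometric facts drive everything. Each $o\in O$ satisfies $|o-x| \leq \sum_{o'\in O}|o'-x| \leq C$, so $O$ lies inside the interval $[x-C,\, x+C]$; and the two outliers sit at distances $10C$ and $100C$ from $x$, hence at distance $\geq 9C$ (resp.\ $\geq 99C$) from any point of $O$, and $90C$ from each other. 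I would then record that the candidate triple $\{x, x+10C, x+100C\}$ has cost exactly $\sum_{o\in O}|o-x| \leq C$, since each outlier is its own center (cost $0$) and every $o\in O$ is strictly closer to $x$ than to either outlier (as $|o-x|\leq C < 9C$, using $C\geq 1$).

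Next I would argue that every optimal triple must contain both outliers. If $x+100C$ were not a center, it would be assigned to its nearest available center at distance at least $90C$, forcing cost $\geq 90C > C$, contradicting optimality; hence $x+100C$ is a center. Given that, if $x+10C$ were not a center, its nearest center would be either $x+100C$ (at $90C$) or a point of $O$ (at $\geq 9C$), so it would contribute at least $9C > C$, again a contradiction; hence $x+10C$ is also a center. The remaining third center is then an input point distinct from the two outliers, i.e.\ some $c\in O$, and since $|o-c|\leq 2C < 9C$ for all $o\in O$, the entire cluster $O$ is served by $c$ while the outliers contribute $0$. The total cost collapses to the one-dimensional $1$-median objective $g(c)=\sum_{o\in O}|o-c|$ minimized over $c\in O$.

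The final and most delicate step is to show this objective is \emph{uniquely} minimized at $c=x$. Here I would use that $O$ is symmetric about $x$ and contains $x$ (arising from $s=x$ in the completion): the involution $o\mapsto 2x-o$ fixes only $x$ and pairs up the remaining points, so $|O|$ is odd, say $2m+1$, with exactly $m$ points strictly on each side of $x$. The function $g$ is convex and piecewise linear, with right-derivative at $x$ equal to $(m+1)-m = 1 > 0$ and left-derivative $m-(m+1) = -1 < 0$; hence $x$ is the strict, unique minimizer of $g$ over the real line, and in particular over $O$. Combining the three steps yields $\rho(S\cup\bar{S}) = \{x, x+10C, x+100C\}$, with no appeal to the smallest-norm tie-break required. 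I expect the main obstacle to be precisely this uniqueness argument: one must exploit that $x$ itself lies in $O$ (making the cardinality odd) to rule out a flat region of equally optimal centers around $x$, which would otherwise leave the identity of the third center ambiguous and force reliance on tie-breaking.
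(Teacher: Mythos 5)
Your proof is correct and takes essentially the same route as the paper's: both arguments first rule out any solution missing an outlier (such a solution costs at least $9C$, versus at most $C$ for the candidate triple), and then use the symmetry of $O$ about $x$ to show that $x$ is the unique optimal third center. The only difference is cosmetic, in how that uniqueness step is executed: you argue via convexity and one-sided derivatives of $g(c)=\sum_{o\in O}|o-c|$, exploiting that $|O|$ is odd because $x$ is the unique fixed point of the reflection, while the paper writes the third center as $\zeta = x+\delta$ and applies the pairing/triangle inequality $|\epsilon-\delta|+|\epsilon+\delta|\ge 2\epsilon$ together with the extra $|\delta|$ term contributed by the point $x$ itself---the same two ingredients (symmetry of $O$ and $x\in O$) in a different guise.
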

\begin{proof}
If $\{x+ 10C, x+100C\} \subseteq \rho(S \cup \bar{S})$, then the remaining center $\zeta$ will have all remaining points closest to it. If we write $\zeta = x + \delta$, then the total cost attributed to this center must satisfy $|\delta| + \sum_{s\in S | \epsilon = s - x > 0} |(x + \epsilon) - (x + \delta)| + |(x - \epsilon) - (x + \delta)| = |\delta| + \sum_{s\in S | \epsilon = s - x > 0}(|-\epsilon - \delta| + |\epsilon - \delta|) \geq |\delta| + \sum_{s\in S | \epsilon = s - x > 0} 2\epsilon$. But the cost if $x$ is the remaining center is exactly $\sum_{s\in S | \epsilon = s - x > 0} 2\epsilon$, and so it is strictly better than the cost with $|\delta| > 0$. We conclude that in this case $x$ is the remaining center.

If either of $\{x+ 10C, x+100C\}$ is not in $\rho(S \cup \bar{S})$, then the cost of the solution is at least $9C$. But the cost for $\{x,x+ 10C, x+100C\}$ is exactly $C$, and $C > 0$.  
\end{proof}

\end{proof}

\kcenterPeriodic* 

The proof of the theorem is two-fold using the distinction of explicitly lying and omission strategies. As for explicitly-lying strategies, Claim~\ref{clm:condition2_forceable_domain} holds for periodic communication as well, with minor adjustments. We are thus left to show:

\begin{lemma}
An omission strategy $s_j$ for $k$-center violates condition $(ii)$ with the periodic protocol. 
\end{lemma}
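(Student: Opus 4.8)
The plan is to prove the contrapositive form of condition $(ii)$: I will exhibit two nature-inputs $\ninput_1,\ninput_2$ that induce the \emph{same} observed history for $j$ under $s_j$ but \emph{different} observed histories under $truth_j$. Since $s_j$ is an omission strategy it satisfies condition $(i)$, so fix a witnessing nature-input $\ninput$ together with its two runs. Because every other agent is truthful and, being an omission strategy, $L_j \subseteq F_j$, the global ledger reached under $s_j$ is some set $G_s$ while the one reached under $truth_j$ is $G_t = G_s \cup A$, where $A := F_j(R_{\ninput,s_j}) \setminus L_j(R_{\ninput,s_j})$ is the (nonempty) set of omitted points; condition $(i)$ guarantees $\rho(G_s) \neq \rho(G_t)$, i.e. the omitted set $A$ is \emph{pivotal} for the $k$-center output $\rho$.

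I would then build $\ninput_1,\ninput_2$ by appending to $\ninput$ a single extra round $r_{max}+1$ in which one agent $i \neq j$ submits data $D^1$, respectively $D^2$, while $j$ receives no factual update in that round, so its behaviour is identical in both runs under either strategy. The whole problem then reduces to finding a pair $D^1,D^2$ with $\rho(G_s \cup D^1) = \rho(G_s \cup D^2)$ yet $\rho(G_t \cup D^1) \neq \rho(G_t \cup D^2)$. The first equality, together with the shared prefix $\ninput$ and the fact that the periodic protocol emits exactly one output per round, forces the entire $O_j$ under $s_j$ to coincide for $\ninput_1$ and $\ninput_2$; the inequality makes the final outputs, and hence $O_j$ under $truth_j$, differ. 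Intuitively I am looking for a ``toggle'' between $D^1$ and $D^2$ that is \emph{inert} for the $s_j$-ledger (which is missing $A$) but \emph{active} for the $truth_j$-ledger (which contains $A$).

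To construct such a toggle I would reduce to a one-center tie-break. Placing in the common part of $D^1,D^2$ a cluster of $k-1$ mutually far-separated singletons (at $-N,-10N,\dots,-10^{k-2}N$ with $N$ dwarfing all coordinates of $G_t$) forces $k-1$ centers onto these anchors, leaving a single center to serve the ``near'' group. On the line that remaining center is the input point closest to the near group's midpoint, so its identity is governed by the group's extreme coordinates. Using knowledge of $A$'s (known) positions I would pick a pivotal omitted point $a \in A$ and place candidate centers $b_1 \in D^1,\ b_2 \in D^2$ straddling the midpoint that arises once $a$ is present, while a shared point $z$ sits at the midpoint that arises when $a$ is absent. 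Then without $A$ both $D^1,D^2$ select $z$ (identical $s_j$-output), whereas with $A$ present the shifted midpoint selects $b_1$ in one run and $b_2$ in the other (distinct $truth_j$-outputs), with the far-anchor and tie-breaking arithmetic of Corollary~\ref{cor:k_center} pinning everything down. The construction runs in $\mathcal{R}$ and lifts to $\mathcal{R}^d,L_p$ by zero-padding, as elsewhere in the section.

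The main obstacle is making this toggle work for an \emph{arbitrary} pivotal $A$ rather than a conveniently placed one. When $A$ extends the range of the near group (contributes a new overall maximum or minimum) the midpoint-shift argument goes through directly. But $A$ may be pivotal while lying entirely in the interior of $[\min G_s,\max G_s]$, so that its effect is to reshape the internal clustering rather than the range; handling this case cleanly---locating from pivotality alone a cluster boundary at which the selected center is sensitive to $a$, and engineering $b_1,b_2$ there---is the delicate step, and I expect it to require a short case analysis together with the forceable-winners property of Definition~\ref{def:forceable_winners} to realize the desired cluster geometry regardless of how $G_s$ sits. This parallels, in spirit, the confounding-by-amplification idea of Lemma~\ref{lem:periodic_minproblem}, but replaces the separable-cost amplification (unavailable for the set-based $k$-center) with a geometric sensitivity construction.
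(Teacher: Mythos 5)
Your high-level reduction is the same as the paper's: append a confounding pair of updates from a truthful agent $i \neq j$, use $k-1$ far-separated anchors to absorb all but one center, and exploit the presence/absence of an omitted point so that the pair is inert on the $s_j$-ledger but active on the $truth_j$-ledger. However, there are two genuine gaps. First, appending a \emph{new} round $r_{max}+1$ does not freeze $j$'s behaviour. In the periodic protocol an agent may send a ledger update in any round, whether or not it received a factual update in that round (this is exactly how re-sync-style strategies behave), and $s_j$ is free to do so based on its observed history through round $r_{max}$. That extra update $U$ is indeed identical across your two appended runs, but the final $s_j$-ledger is then $G_s \cup U \cup D^m$ rather than $G_s \cup D^m$; if $U$ re-inserts the omitted point $a$ (an omission strategy may legitimately delay and then resend its data), your toggle becomes active under $s_j$ as well, the two $s_j$-outputs differ, and the pair no longer witnesses a violation of condition $(ii)$. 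The paper's proof avoids this by inserting $E_1, E_2$ into the \emph{same} final round $r$ of $\ninput$, where $j$'s updates are already determined.

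Second, the toggle construction itself fails in precisely the interior case you flag, and not by a removable technicality. On the line, the selected near-group center is the input point closest to the midpoint of the group's range; an interior $a$ does not move that midpoint, so its presence or absence cannot change which of $z, b_1, b_2$ is closest to it — the only way an interior $a$ can be pivotal is by itself becoming the center, in which case the two $truth_j$-outputs for $D^1, D^2$ coincide rather than differ. So midpoint-straddling candidate centers cannot work there; the design must be replaced, not patched by a case analysis. The paper's Claim~\ref{clm:k_center} does this by letting the appended sets contain a copy of all relevant existing points together with \emph{satellites}: $E_1$ carries satellites $x \pm 2\Delta$ (forcing the near-group midpoint to be $x$, so the center is $x$ when present and its nearest neighbour $y$ when absent), while $E_2$ carries satellites $y \pm \Delta$ (forcing the center to be $y$ regardless of $x$). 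The toggle is between satellite configurations, not between candidate centers, and works wherever $x$ sits. Incidentally, with such a toggle you do not need pivotality of the omitted set $A$ at all: any point omitted at the end of some run suffices, and condition $(i)$ is needed only to guarantee that one exists.
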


\begin{proof}
In the proof of Corollary~\ref{cor:k_center} we show that $k$-center has forceable winners by a certain construction. We now use similar ideas to get a construction with more detailed properties, as formalized in the following claim:

\begin{claim}
\label{clm:k_center}
For $k$-center, for every set $S$ with $|S| \geq 2$ and a point $x \in S$, there is such $y \in S$, and $\bar{S}, \bar{S}'$ so that

$$\rho((S \cup \bar{S}') \setminus \{x\}) = \rho(S \cup \bar{S}') = \rho((S \cup \bar{S}) \setminus \{x\}) = \{y, \eta_1, \ldots, \eta_{k-1}\}$$

for some $\eta_1, \ldots, \eta_{k-1} \not \in S$ (in particular, not $x$). 

In addition, $\bar{S}$ satisfies the conditions of Definition~\ref{def:forceable_winners}.

\end{claim}
\begin{proof}
We show an explicit construction. 
Let $y = \arg\min_{x'\in S, x'\neq x} |x'-x|$. Let $\Delta = \max_{s\in S} |x-s|$. Let $\bar{S} = \{x + 2\Delta, x - 2\Delta, x+10\Delta, ..., x + 10^{k-1}\Delta, \bar{S}' = \{y + \Delta, y - \Delta, x+10\Delta, \ldots, x+10^{k-1}\Delta\}$. We have $\rho(S \cup \bar{S}) = \{x, x+10\Delta, \ldots, x+10^{k-1}\Delta\}, \rho((S \cup \bar{S}') \setminus \{x\}) = \rho(S \cup \bar{S}') = \rho((S \cup \bar{S}) \setminus \{x\}) = \{y, x + 10\Delta, \ldots, x+10^{k-1}\Delta\}$. 
\end{proof}

We now prove the lemma statement. Consider some nature-input $\ninput$ be the shortest (in terms of number of elements) where $s_j$ sends a ledger update with an explicit lie $x$, and let $L_R = L_j(R_{\ninput, s_j}), F_R = F_j(R_{\ninput, s_j})$ be the union of all ledger, factual updates respectively by $j$. Let $S = F_R \cup L_R \cup \{x + 1\}$, where we add the point $x+1$ to make sure there is some additional point besides $x$ in $S$ and have $|S| \geq 2$. 
 
Let $E_1 = (S \cup \bar{S}) \setminus \{x\}, E_2 = (S \cup \bar{S}_{\epsilon}) \setminus \{x\}$.
As $\rho(E_1 \cup \{x\}) \neq \rho(E_2 \cup \{x\})$, it must hold that $E_1 \neq E_2$. 
Let $r$ be the last round of $\ninput$. 
 Let $\ninput_1, \ninput_2$ be $\ninput$ with an additional last element $<i,E_1,r>, <i,E_2,r>$ respectively, for some agent $i\neq j$ (or, if all agents already have an element in this round, add $E_1, E_2$ respectively to the element of one of them). We have that $O_j(R_{\ninput_1,s_j}) = O_j(R_{\ninput_2,s_j})$, since the nature-inputs identify up until the last round, and the last round induces only an algorithm output observation by $j$, which satisfies $\rho(L_R \cup E_2) = \rho((S \cup \bar{S}_{\epsilon}) \setminus \{x\})) \stackrel{Claim~\ref{clm:k_center}}{=}  \rho((S \cup \bar{S}) \setminus \{x\})) = \rho(L_R \cup E_1).$

However, the last algorithm output in $O_j(R_{\ninput_1,truth_j})$ is $\rho(F_R \cup E_1) = \rho(F_R \cup (F_R \cup L_R \cup \bar{S})) = \rho(S \cup \bar{S})$, and thus has the element $x$ (By Claim~\ref{clm:k_center} guarantee that $\bar{S}$ satisfies the conditions of Definition~\ref{def:forceable_winners}). On the other hand, the last algorithm output in $O_j(R_{\ninput_2,truth_j})$ is $\rho(F_R \cup E_2) = \rho(S \cup \bar{S}_{\epsilon})$, and does not contain $x$ as an element by Claim~\ref{clm:k_center}.
\end{proof}

\section{Missing Proofs for \texorpdfstring{$d$}---Linear Regression}
\label{app:lr}

\begin{example}
\label{ex:LR_sneak_attack}

\emph{$1-LR$ is $1$-NCC-vulnerable using an explicitly-lying sneak attack}: Use Algorithm~\ref{alg:sneak_attack_template} with \[
\begin{split}& U_{cond} = (\mathbf{X^{cond}}, \mathbf{y^{cond}}) = (\begin{bmatrix} 1 & 3 \\
1 & 0 \\
1 & 0 \end{bmatrix},\begin{bmatrix} 1 \\
1 \\
1 \end{bmatrix}) , U_{attack} = (\mathbf{X^{attack}}, \mathbf{y^{attack}}) = (\begin{bmatrix} 1 & 2 \end{bmatrix},\begin{bmatrix} 2 \end{bmatrix}), \\
& U_{re-sync} = (\mathbf{X^{re-sync}}, \mathbf{y^{re-sync}}) = (\begin{bmatrix} 1 & 2 \\
1 & -1 \end{bmatrix},\begin{bmatrix} 0 \\
1\end{bmatrix}), \rho_{cond} = \beta = \begin{bmatrix} 1 \\
0 \end{bmatrix}.\end{split}\] 

Condition $(i)$ is satisfied since for nature-input $\ninput = (<1, (\begin{bmatrix} 1 & 1 \\
1 & 0 \end{bmatrix},\begin{bmatrix} 1 \\
1\end{bmatrix})>, <2, U_{cond}>)$, the run with $truth_2$ yields algorithm outputs $\begin{bmatrix} 1 \\
0 \end{bmatrix}, \begin{bmatrix} 1 \\
0 \end{bmatrix}$ but the run with $s_2$ yields $\begin{bmatrix} 1 \\
0 \end{bmatrix}, \begin{bmatrix} \frac{5}{6} \\
\frac{1}{2} \end{bmatrix}$. 

For condition $(ii)$, the argument generally follows the proof of Lemma~\ref{lem:sneak_attack_formal}. We note two important distinctions:

\begin{itemize}
    \item Given $\rho_{cond}, U_{cond}$ we can infer the algorithm output under $truth_j$ is $\rho_{cond}$. That is since $\rho_{cond}$ has the minimal cost function given all updates previous to $U_{cond}$ (for both $\ninput, \ninput'$): We know that since it is the algorithm output before the factual update $U_{cond}$. Moreover, it has cost $0$ with regards to $U_{cond}$.
    
    \item At any step after $i+4$ (the completion of the sneak attack), the union of points sent throughout the ledger history does not identify anymore with the union of points in the factual history, since  the ledger history contains explicit lies (namely, any of the points in our choice of $U_{attack}, U_{re-sync}$). However, for the calculation of the algorithm output in $1-LR$, we have $\rho = (X^T X)^{-1} X^T y$. Since updates aggregation is additive, as long as two different updates have the same $X^T X, X^T y$, any sequence of updates containing them would have the same algorithm outputs. In our case, we have \[
    \begin{split}
        & (X^{cond})^T X^{cond} = \begin{bmatrix} 3 & 3 \\ 3 & 9 \end{bmatrix} = (X^{attack})^T X^{attack} + (X^{re-sync})^T X^{re-sync} , \\
        & (X^{cond})^T y = \begin{bmatrix} 3 \\ 3\end{bmatrix} = (X^{attack})^T y^{attack} + (X^{re-sync})^T y^{re-sync}.
        \end{split}
        \] By the construction of $s_j$, any sequence of updates after the attack is ``re-synced'' behaves just as if the agent has acted truthfully, and so the observed truthful histories identify subsequently. All in all this shows that observed histories under $s_j$ identify $\implies$ observed histories under $truth_j$ identify, which is logically equivalent to condition $(ii)$. 
    
\end{itemize}

\end{example}

\dlrPointLine*
\begin{proof}
Let $\rho = \begin{bmatrix} \alpha_1 \\ \ldots \\ \alpha_{d+1} \end{bmatrix}$, and let the algorithm output after adding the point $X^*, y^*$ be  $\rho' = \begin{bmatrix} \alpha_1' \\ \ldots \\ \alpha_{d+1}' \end{bmatrix}$. First, we show that it can not be that $\rho = \rho'$. Assume otherwise, then by the extremal condition over the optimization function, 
 $$\alpha_1 = \frac{\sum_{i=1}^n (y_i - \sum_{j=1}^d \alpha_{j+1} x^j_i)}{n},$$ and similarly for $\rho' = \rho$, 
\[
\begin{split}
& \alpha_1 = \frac{(\sum_{i=1}^n y_i - \sum_{j=1}^d \alpha_{j+1} x^j_i) + (y^* - \sum_{j=1}^d \alpha_{j+1} x^*_j)}{n+1} = \\
& \frac{n\alpha_1 + \alpha_1}{n+1} + \frac{y^* - \sum_{j=1}^d \alpha_{j+1} x^*_j - \alpha_1}{n+1} = \\
& \alpha_1 + \frac{y^* - \sum_{j=1}^d \alpha_{j+1} x^*_j - \alpha_1}{n+1} \neq \alpha_1,
\end{split}
\]
where the last inequality is since the point $(X^*,y^*)$ is outside the line and thus $y^* - \sum_{j=1}^d \alpha_{j+1} x^*_j - \alpha_1 \neq 0$. We arrived at a contradiction and so we may subsequently assume $\rho \neq \rho'$. Now, again assume by contradiction that the lines intersect at the point $(X^*, y^*)$. Then, with respect to $(X^*, y^*)$ the two lines would have the same cost function value 0, but overall with respect to all $y_i$ with $1 \leq i \leq n$, $\rho$ is the unique optimal cost minimizer, so we conclude that $\rho$ has lower cost overall than $\rho'$ with respect to all the given points, in contradiction to $\rho'$ being optimal. 
\end{proof}

\end{document}